\documentclass[a4paper,UKenglish,cleveref, autoref, thm-restate]{lipics-v2021}

\hideLIPIcs  

\title{Completing the Complexity Classification of 2-Solo Chess: Knights and Kings are Hard} 

\titlerunning{2-Solo Chess: Knights and Kings Are Hard} 

\author{Kolja Kühn}{Institute of Theoretical Informatics, Karlsruhe Institute of Technology, Germany}{kolja.kuehn@kit.edu}{https://orcid.org/0009-0006-0746-6876}{}

\author{Wendy Yi}{Institute of Theoretical Informatics, Karlsruhe Institute of Technology, Germany}{wendy.yi@kit.edu}{https://orcid.org/0009-0006-3025-4505}{}

\authorrunning{K. Kühn and W. Yi} 

\Copyright{Kolja Kühn and Wendy Yi} 

\ccsdesc[100]{Theory of computation~Problems, reductions and completeness} 

\keywords{Solo chess, puzzle games, board games, \NP-completeness} 

\category{} 

\relatedversiondetails{Conference Version}{https://doi.org/10.4230/LIPIcs.FUN.2026.27} 

\nolinenumbers 

\EventEditors{John Iacono}
\EventNoEds{1}
\EventLongTitle{13th International Conference on Fun with Algorithms (FUN 2026)}
\EventShortTitle{FUN 2026}
\EventAcronym{FUN}
\EventYear{2026}
\EventDate{May 18--22, 2026}
\EventLocation{Porquerolles, France}
\EventLogo{}
\SeriesVolume{366}
\ArticleNo{6}

\usepackage{tikz} 
\usetikzlibrary{shapes.geometric}
\usepackage{xspace}
\usepackage{todonotes}
\usepackage{mathtools}
\usepackage{booktabs}
\usepackage{complexity}

\begin{document}

\newcommand{\twoSquareMove}[2]{(#1 \rightarrow #2)}
\newcommand{\threeSquareMove}[3]{(#1 \rightarrow #2 \rightarrow #3)}

\newcommand{\solo}{\textsc{Solo Chess}\xspace}
\newcommand{\sat}{\textsc{3,3-SAT}\xspace}
\newcommand{\checking}{1-TEST\xspace}

\newcommand{\tileSize}{100\xspace}
\newcommand{\red}[1]{\textcolor{red}{#1}}

\DeclarePairedDelimiterX\set[1]{\lbrace}{\rbrace}{\def\given{\;\delimsize\vert\;}#1}

\newcounter{numalphcounter}
\newcommand{\numalph}[1]{\setcounter{numalphcounter}{#1}\Alph{numalphcounter}}
\usetikzlibrary{matrix, positioning}

\tikzset{move/.style={
line width=0.35mm, 
-latex,
shorten <= 0.3cm,
 shorten >= 0.3cm,
 rounded corners=0.5cm,
 black,
 xshift=0.5cm,
 yshift=-0.5cm,
}}

\tikzset{edge/.style={
line width=0.35mm,
shorten <= 0.3cm,
 shorten >= 0.3cm,
 rounded corners=0.5cm,
 black,
 xshift=0.5cm,
 yshift=-0.5cm,
}}

\tikzset{halfEdge/.style={
line width=0.35mm,
dashed,
shorten <= 0.3cm,
 shorten >= 1.1cm,
 rounded corners=0.5cm,
 black,
 xshift=0.5cm,
 yshift=-0.5cm,
}}

\definecolor{dark}{HTML}{b58863}
\definecolor{light}{HTML}{f0d9b5}

\newcommand{\board}[2]{
    \foreach \y in {1,...,#1}{
        \foreach \x in {1,...,#2}{
            \ifodd\numexpr\x+\y\relax
                \fill[light] (\x,\y) rectangle ($(\x+1,\y+1)$);
            \else
                \fill[dark] (\x,\y) rectangle ($(\x+1,\y+1)$);
            \fi
        }
    }
}

\newcommand{\squareColor}[3]{
    \fill[#3] (#1,#2) rectangle ($(#1+1,#2+1)$);
}

\newcommand{\splitSquareColor}[4]{
    \fill[#3] (#1,#2) rectangle ($(#1+0.5,#2+1)$);
    \fill[#4] ($(#1+0.5,#2)$) rectangle ($(#1+1,#2+1)$);
}

\newcommand{\knight}[4][]{
    \node (tmp) at ($(#2+0.5,#3+0.5)$) {\includegraphics[width=0.85cm]{images/knight/knight_new_#4.pdf}};
    \ifthenelse{\equal{#1}{}}{}{
        \node[inner sep=0cm, outer sep=0cm, fill=white, circle, minimum width=0.3cm, fill opacity=0.7, text opacity=1] at (tmp.center) {\numalph{#1}};
    }
}

\newcommand{\knightOrange}[4][]{
    \squareColor{#2}{#3}{kit-yellow}
    \knight[#1]{#2}{#3}{#4}
}

\newcommand{\knightBlue}[4][]{
    \squareColor{#2}{#3}{kit-blue}
    \knight[#1]{#2}{#3}{#4}
}

\newcommand{\knightSplit}[4][]{
    \splitSquareColor{#2}{#3}{kit-yellow}{kit-blue}
    \knight[#1]{#2}{#3}{#4}
}

\newcommand{\knightSignals}{
    \begin{tabular}{@{}c@{}}
        \begin{tikzpicture}[x=1cm,y=-1cm]
            \board{4}{7}
            \path[use as bounding box] (1,1) rectangle (8,5);
            \foreach \x/\y/\c [count=\i] in {
                1/1/2, 1/3/0, 3/2/2, 3/4/2, 5/1/2, 5/3/2, 7/2/2, 7/4/2%
            }{
                \knight[\i]{\x}{\y}{\c}
            }
    
            \draw[line width=0.35mm] (1.83,1.7) -- (3.17,2.3);
            \draw[line width=0.35mm] (1.83,3.3) -- (3.17,2.7);
            \draw[line width=0.35mm] (1.83,3.7) -- (3.17,4.3);
            \draw[line width=0.35mm] (3.83,2.3) -- (5.17,1.7);
            \draw[line width=0.35mm] (3.83,2.7) -- (5.17,3.3);
            \draw[line width=0.35mm] (3.83,4.3) -- (5.17,3.7);
            
            \draw[line width=0.35mm] (5.83,1.7) -- (7.17,2.3);
            \draw[line width=0.35mm] (5.83,3.3) -- (7.17,2.7);
            \draw[line width=0.35mm] (5.83,3.7) -- (7.17,4.3);
            
            \draw[line width=0.35mm, dashed, shorten >= 0.7cm] (7.83,2.3) -- (9.17,1.7);
            \draw[line width=0.35mm, dashed, shorten >= 0.7cm] (7.83,2.7) -- (9.17,3.3);
            \draw[line width=0.35mm, dashed, shorten >= 0.7cm] (7.83,4.3) -- (9.17,3.7);
            
            \draw [rounded corners = 0.5cm, line width=0.5mm, -latex, blue] (3.25,4.2) -- (1.65,3.5) -- (3.25,2.8);
            \draw [rounded corners = 0.5cm, line width=0.5mm, -latex, blue] (1.65,1.88) -- (5.11,3.55);
        \end{tikzpicture}
        
        \\
        
        \begin{tikzpicture}[x=1cm,y=-1cm]
            \board{4}{7}
            \path[use as bounding box] (1,1) rectangle (8,5);
            \foreach \x/\y/\c [count=\i from 2] in {
                1/3/1, 3/2/2, 3/4/2, 5/1/2, 5/3/2, 7/2/2, 7/4/2%
            }{
                \knight[\i]{\x}{\y}{\c}
            }

            \draw[line width=0.35mm] (1.83,3.3) -- (3.17,2.7);
            \draw[line width=0.35mm] (1.83,3.7) -- (3.17,4.3);
            \draw[line width=0.35mm] (3.83,2.3) -- (5.17,1.7);
            \draw[line width=0.35mm] (3.83,2.7) -- (5.17,3.3);
            \draw[line width=0.35mm] (3.83,4.3) -- (5.17,3.7);
            
            \draw[line width=0.35mm] (5.83,1.7) -- (7.17,2.3);
            \draw[line width=0.35mm] (5.83,3.3) -- (7.17,2.7);
            \draw[line width=0.35mm] (5.83,3.7) -- (7.17,4.3);
            
            \draw[line width=0.35mm, dashed, shorten >= 0.7cm] (7.83,2.3) -- (9.17,1.7);
            \draw[line width=0.35mm, dashed, shorten >= 0.7cm] (7.83,2.7) -- (9.17,3.3);
            \draw[line width=0.35mm, dashed, shorten >= 0.7cm] (7.83,4.3) -- (9.17,3.7);
            
            \draw [rounded corners = 0.5cm, line width=0.5mm, -latex, blue] (1.88,3.4) -- (3.25,2.8);
            \draw [rounded corners = 0.5cm, line width=0.5mm, -latex, blue] (5.25,1.8) -- (3.65,2.5) -- (5.25,3.2);
            \draw [rounded corners = 0.5cm, line width=0.5mm, -latex, blue] (3.88,4.4) -- (5.25,3.8);
        \end{tikzpicture}
    \end{tabular}
}

\newcommand{\knightWireZeroed}{
    \begin{tikzpicture}[x=1cm,y=-1cm]
        \board{4}{11}
        \path[use as bounding box] (1,1) rectangle (12,5);
        \foreach \b/\x/\y [count=\i] in {
            2/1/1, 0/1/3, 0/3/2, 2/3/4, 2/5/1, 0/5/3, 0/7/2, 2/7/4, 2/9/1, 0/9/3, 0/11/2, 2/11/4
        }{
            \knight[\i]{\x}{\y}{\b}
        }

        \foreach \u/\v/\x/\y in {
            1/1/3/2, 1/3/3/2, 1/3/3/4, 3/2/5/1, 3/2/5/3, 3/4/5/3, 5/1/7/2, 5/3/7/2, 5/3/7/4, 7/2/9/1, 7/2/9/3, 7/4/9/3, 9/1/11/2, 9/3/11/2, 9/3/11/4
        }{
            \draw[edge] (\u,\v) -- (\x,\y);
        }
        \foreach \u/\v/\x/\y in {
            1/1/-1/2, 1/3/-1/2, 1/3/-1/4, 11/2/13/1, 11/2/13/3, 11/4/13/3
        }{
            \draw[halfEdge] (\u,\v) -- (\x,\y);
        }
    \end{tikzpicture}
}

\newcommand{\knightAfterSignals}{
    \begin{tabular}{@{}c@{}}
        \begin{tikzpicture}[x=1cm,y=-1cm]
            \board{4}{7}
            \path[use as bounding box] (1,1) rectangle (8,5);
            \foreach \x/\y/\c [count=\i from 5] in {
                5/1/2, 5/3/0, 7/2/2, 7/4/2%
            }{
                \knight[\i]{\x}{\y}{\c}
            }
            
            \draw[line width=0.35mm] (5.83,1.7) -- (7.17,2.3);
            \draw[line width=0.35mm] (5.83,3.3) -- (7.17,2.7);
            \draw[line width=0.35mm] (5.83,3.7) -- (7.17,4.3);
            
            \draw[line width=0.35mm, dashed, shorten >= 0.7cm] (7.83,2.3) -- (9.17,1.7);
            \draw[line width=0.35mm, dashed, shorten >= 0.7cm] (7.83,2.7) -- (9.17,3.3);
            \draw[line width=0.35mm, dashed, shorten >= 0.7cm] (7.83,4.3) -- (9.17,3.7);
            
            \draw [rounded corners = 0.5cm, line width=0.35mm, -latex, blue] (3.25,4.2) -- (1.65,3.5) -- (3.25,2.8);
            \draw [rounded corners = 0.5cm, line width=0.35mm, -latex, blue] (1.65,1.88) -- (5.11,3.55);
        \end{tikzpicture}

        \\

        \begin{tikzpicture}[x=1cm,y=-1cm]
            \board{4}{7}
            \path[use as bounding box] (1,1) rectangle (8,5);
            \foreach \x/\y/\c [count=\i from 6] in {
                5/3/1, 7/2/2, 7/4/2%
            }{
                \knight[\i]{\x}{\y}{\c}
            }
            
            \draw[line width=0.35mm] (5.83,3.3) -- (7.17,2.7);
            \draw[line width=0.35mm] (5.83,3.7) -- (7.17,4.3);
            
            \draw[line width=0.35mm, dashed, shorten >= 0.7cm] (7.83,2.3) -- (9.17,1.7);
            \draw[line width=0.35mm, dashed, shorten >= 0.7cm] (7.83,2.7) -- (9.17,3.3);
            \draw[line width=0.35mm, dashed, shorten >= 0.7cm] (7.83,4.3) -- (9.17,3.7);
            
            \draw[line width=0.35mm, dashed, shorten >= 0.7cm] (7.83,2.3) -- (9.17,1.7);
            \draw[line width=0.35mm, dashed, shorten >= 0.7cm] (7.83,2.7) -- (9.17,3.3);
            \draw[line width=0.35mm, dashed, shorten >= 0.7cm] (7.83,4.3) -- (9.17,3.7);
            
            \draw [rounded corners = 0.5cm, line width=0.35mm, -latex, blue] (1.88,3.4) -- (3.25,2.8);
            \draw [rounded corners = 0.5cm, line width=0.35mm, -latex, blue] (5.25,1.8) -- (3.65,2.5) -- (5.25,3.2);
            \draw [rounded corners = 0.5cm, line width=0.35mm, -latex, blue] (3.88,4.4) -- (5.25,3.8);
        \end{tikzpicture}
    \end{tabular}
}

\newcommand{\knightWireCorner}{
    \begin{tikzpicture}[x=1cm,y=-1cm]
        \board{8}{9}
        \path[use as bounding box] (1,1) rectangle (8.5, 9.5);
        \foreach \x/\y [count=\i] in {
            1/8, 3/8, 2/6, 4/6, 4/5, 3/4, 5/3, 5/5, 7/2, 7/4, 9/3, 9/5
        }{
            \knight{\x}{\y}{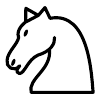}
        }
        \foreach \u/\v/\x/\y in {
            1/8/2/6, 3/8/2/6, 3/8/4/6, 2/6/3/4, 2/6/4/5, 4/6/3/4, 4/5/5/3, 3/4/5/3, 3/4/5/5, 5/3/7/2, 5/3/7/4, 5/5/7/4, 7/2/9/3, 7/4/9/3, 7/4/9/5
        }{
            \draw[edge] (\u,\v) -- (\x,\y);
        }
        \foreach \u/\v/\x/\y in {
            9/3/11/2, 9/3/11/4, 9/5/11/4, 1/8/2/10, 3/8/2/10, 3/8/4/10
        }{
            \draw[halfEdge] (\u,\v) -- (\x,\y);
        }
    \end{tikzpicture}
}

\newcommand{\knightWireDiagonalForward}{
    \begin{tikzpicture}[x=1cm,y=-1cm]
        \board{5}{10}
        \path[use as bounding box] (0.5,1) rectangle (11.5,6);
        \fill[gray] (1,1) rectangle (2,2);
        \fill[gray] (1,3) rectangle (2,4);
        \fill[gray] (5,1) rectangle (6,2);
        \fill[gray] (5,3) rectangle (6,4);
        \fill[gray] (9,1) rectangle (10,2);
        \fill[gray] (9,3) rectangle (10,4);
        \fill[gray!50] (10,2) rectangle (11,3);
        \fill[gray!50] (10,4) rectangle (11,5);
        
        \foreach \x/\y [count=\i] in {
            1/1, 1/3, 3/2, 3/4, 5/1, 5/3, 7/2, 4/5, 9/1, 6/4, 8/3, 8/5, 10/2, 10/4%
        }{
            \knight{\x}{\y}{2}
        }

        \draw[line width=0.35mm] (1.83,1.7) -- (3.17,2.3);
        \draw[line width=0.35mm] (1.83,3.3) -- (3.17,2.7);
        \draw[line width=0.35mm] (1.83,3.7) -- (3.17,4.3);
        \draw[line width=0.35mm] (3.83,2.3) -- (5.17,1.7);
        \draw[line width=0.35mm] (3.83,2.7) -- (5.17,3.3);
        \draw[line width=0.35mm] (3.83,4.3) -- (5.17,3.7);
        
        \draw[line width=0.35mm] (5.83,1.7) -- (7.17,2.3);
        \draw[line width=0.35mm] (5.83,3.3) -- (7.17,2.7);
        \draw[line width=0.35mm] (7.83,2.3) -- (9.17,1.7);
        
        \draw[line width=0.35mm] (5.3,3.83) -- (4.7,5.17);
        \draw[line width=0.35mm] (4.83,5.3) -- (6.17,4.7);
        \draw[line width=0.35mm] (7.3,2.83) -- (6.7,4.17);
        \draw[line width=0.35mm] (9.3,1.83) -- (8.7,3.17);
        
        \draw[line width=0.35mm] (6.83,4.3) -- (8.17,3.7);
        \draw[line width=0.35mm] (6.83,4.7) -- (8.17,5.3);
        \draw[line width=0.35mm] (8.83,3.3) -- (10.17,2.7);
        \draw[line width=0.35mm] (8.83,3.7) -- (10.17,4.3);
        \draw[line width=0.35mm] (8.83,5.3) -- (10.17,4.7);

        \draw[line width=0.35mm, dashed, shorten >= 0.7cm] (1.17,1.7) -- (-0.17,2.3);
        \draw[line width=0.35mm, dashed, shorten >= 0.7cm] (1.17,3.3) -- (-0.17,2.7);
        \draw[line width=0.35mm, dashed, shorten >= 0.7cm] (1.17,3.7) -- (-0.17,4.3);
        
        \draw[line width=0.35mm, dashed, shorten >= 0.7cm] (10.83,2.7) -- (12.17,3.3);
        \draw[line width=0.35mm, dashed, shorten >= 0.7cm] (10.83,4.3) -- (12.17,3.7);
        \draw[line width=0.35mm, dashed, shorten >= 0.7cm] (10.83,4.7) -- (12.17,5.3);

        \draw [line width=0.35mm, -latex] (9.8, 1.8) -- (10.2,2.2);
        \draw [line width=0.35mm, -latex] (9.8, 3.8) -- (10.2,4.2);
    \end{tikzpicture}
}

\newcommand{\knightWireDiagonalBackward}{
    \begin{tikzpicture}[x=1cm,y=-1cm]
        \board{5}{9}
        \path[use as bounding box] (0.5,1) rectangle (10.2,6);
        \fill[gray] (1,1) rectangle (2,2);
        \fill[gray] (1,3) rectangle (2,4);
        \fill[gray] (5,1) rectangle (6,2);
        \fill[gray] (5,3) rectangle (6,4);
        \fill[gray] (9,1) rectangle (10,2);
        \fill[gray] (9,3) rectangle (10,4);
        \fill[gray!50] (8,2) rectangle (9,3);
        \fill[gray!50] (8,4) rectangle (9,5);
        
        \foreach \x/\y [count=\i] in {
            1/1, 1/3, 3/2, 2/5, 5/1, 4/4, 6/3, 6/5, 8/2, 8/4%
        }{
            \knight{\x}{\y}{2}
        }

        \draw[line width=0.35mm] (1.83,1.7) -- (3.17,2.3);
        \draw[line width=0.35mm] (1.83,3.3) -- (3.17,2.7);
        \draw[line width=0.35mm] (1.7,3.83) -- (2.3,5.17);
        \draw[line width=0.35mm] (3.83,2.3) -- (5.17,1.7);
        \draw[line width=0.35mm] (3.7,2.83) -- (4.3,4.17);
        \draw[line width=0.35mm] (2.83,5.3) -- (4.17,4.7);

        \draw[line width=0.35mm] (5.7,1.83) -- (6.3,3.17);
        \draw[line width=0.35mm] (4.83,4.3) -- (6.17,3.7);
        \draw[line width=0.35mm] (4.83,4.7) -- (6.17,5.3);
        
        \draw[line width=0.35mm] (6.83,3.3) -- (8.17,2.7);
        \draw[line width=0.35mm] (6.83,3.7) -- (8.17,4.3);
        \draw[line width=0.35mm] (6.83,5.3) -- (8.17,4.7);

        \draw[line width=0.35mm, dashed, shorten >= 0.7cm] (1.17,1.7) -- (-0.17,2.3);
        \draw[line width=0.35mm, dashed, shorten >= 0.7cm] (1.17,3.3) -- (-0.17,2.7);
        \draw[line width=0.35mm, dashed, shorten >= 0.7cm] (1.17,3.7) -- (-0.17,4.3);
        
        \draw[line width=0.35mm, dashed] (8.83,2.7) -- (10.17,3.3);
        \draw[line width=0.35mm, dashed] (8.83,4.3) -- (10.17,3.7);
        \draw[line width=0.35mm, dashed] (8.83,4.7) -- (10.17,5.3);
        
        \draw [line width=0.35mm, -latex] (9.2, 1.8) -- (8.8,2.2);
        \draw [line width=0.35mm, -latex] (9.2, 3.8) -- (8.8,4.2);
    \end{tikzpicture}
}

\newcommand{\knightRotatedWire}{
    \begin{tikzpicture}[x=1cm,y=-1cm]
        \board{6}{12}
        
        \foreach \x/\y in {
            1/1, 1/3, 3/2, 2/5, 5/1, 4/4, 6/3, 5/6, 7/1, 7/5, 8/3, 9/4, 10/4, 10/2, 12/3, 12/1
        }{
            \knight{\x}{\y}{2}
        }

        \draw[line width=0.35mm] (1.83,1.7) -- (3.17,2.3);
        \draw[line width=0.35mm] (1.83,3.3) -- (3.17,2.7);
        \draw[line width=0.35mm] (1.7,3.83) -- (2.3,5.17);
        \draw[line width=0.35mm] (3.83,2.3) -- (5.17,1.7);
        \draw[line width=0.35mm] (3.7,2.83) -- (4.3,4.17);
        \draw[line width=0.35mm] (2.83,5.3) -- (4.17,4.7);

        \draw[line width=0.35mm] (5.7,1.83) -- (6.3,3.17);
        \draw[line width=0.35mm] (4.83,4.3) -- (6.17,3.7);
        \draw[line width=0.35mm] (4.7,4.83) -- (5.3,6.17);
        \draw[line width=0.35mm] (6.7,3.83) -- (7.3,5.17);
        \draw[line width=0.35mm] (5.83,6.3) -- (7.17,5.7);
        
        \draw[line width=0.35mm] (6.7,3.17) -- (7.3,1.83);
        \draw[line width=0.35mm] (7.7,1.83) -- (8.3,3.17);
        \draw[line width=0.35mm] (7.7,5.17) -- (8.3,3.83);
        
        \draw[line width=0.35mm] (8.83,3.3) -- (10.17,2.7);
        \draw[line width=0.35mm] (8.83,3.7) -- (10.17,4.3);
        \draw[line width=0.35mm] (7.83,5.3) -- (9.17,4.7);
        \draw[line width=0.35mm] (9.7,4.17) -- (10.3,2.83);
        
        \draw[line width=0.35mm] (10.83,2.3) -- (12.17,1.7);
        \draw[line width=0.35mm] (10.83,2.7) -- (12.17,3.3);
        \draw[line width=0.35mm] (10.83,4.3) -- (12.17,3.7);

        \draw[line width=0.35mm, dashed, shorten >= 0.7cm] (1.17,1.7) -- (-0.17,2.3);
        \draw[line width=0.35mm, dashed, shorten >= 0.7cm] (1.17,3.3) -- (-0.17,2.7);
        \draw[line width=0.35mm, dashed, shorten >= 0.7cm] (1.17,3.7) -- (-0.17,4.3);
        
        \draw[line width=0.35mm, dashed] (12.83,1.7) -- (14.17,2.3);
        \draw[line width=0.35mm, dashed] (12.83,3.3) -- (14.17,2.7);
        \draw[line width=0.35mm, dashed] (12.83,3.7) -- (14.17,4.3);
    \end{tikzpicture}
}

\newcommand{\knightHighWire}{
    \begin{tikzpicture}[x=1cm,y=-1cm]
        \board{8}{7}
        \path[use as bounding box] (1,1) rectangle (8,8);
        \foreach \x/\y [count=\i] in {
            1/2, 1/4, 3/3, 2/6, 4/1, 4/5, 5/3, 5/7, 6/1, 6/5, 7/3, 7/7
        }{
            \knight{\x}{\y}{2}
        }

        \draw[line width=0.35mm] (1.83,2.7) -- (3.17,3.3);
        \draw[line width=0.35mm] (1.83,4.3) -- (3.17,3.7);
        \draw[line width=0.35mm] (1.7,4.83) -- (2.3,6.17);
        \draw[line width=0.35mm] (3.7,3.17) -- (4.3,1.83);
        \draw[line width=0.35mm] (3.7,3.83) -- (4.3,5.17);
        \draw[line width=0.35mm] (2.83,6.3) -- (4.17,5.7);

        \draw[line width=0.35mm] (4.7,1.83) -- (5.3,3.17);
        \draw[line width=0.35mm] (4.7,5.17) -- (5.3,3.83);
        \draw[line width=0.35mm] (4.7,5.83) -- (5.3,7.17);
        \draw[line width=0.35mm] (5.7,3.17) -- (6.3,1.83);
        \draw[line width=0.35mm] (5.7,3.83) -- (6.3,5.17);
        \draw[line width=0.35mm] (5.7,7.17) -- (6.3,5.83);

        \draw[line width=0.35mm] (6.7,1.83) -- (7.3,3.17);
        \draw[line width=0.35mm] (6.7,5.17) -- (7.3,3.83);
        \draw[line width=0.35mm] (6.7,5.83) -- (7.3,7.17);

        \foreach \u/\v/\x/\y in {
            1/1/1/1
        }{
            \draw[edge] (\u,\v) -- (\x,\y);
        }
        \foreach \u/\v/\x/\y in {
            1/2/-1/3, 1/4/-1/3, 1/4/-1/5, 7/3/8/1, 7/3/8/5, 7/7/8/5
        }{
            \draw[halfEdge] (\u,\v) -- (\x,\y);
        }
    \end{tikzpicture}
}

\newcommand{\knightWireCrossing}{
    \begin{tikzpicture}[x=1cm,y=-1cm]
        \board{9}{11}
        \foreach \x/\y in {
            1/4, 1/6, 3/5, 3/7, 5/4, 5/6, 7/5, 7/7, 9/4, 9/6, 11/5, 11/7
        }{
            \knightOrange{\x}{\y}{2}
        }
        
        \foreach \x/\y in {
            5/1, 9/1, 3/2, 7/2, 5/3, 9/3, 3/4, 7/4, 5/5, 9/5, 3/6, 7/6, 5/7, 9/7, 3/8, 7/8, 5/9, 9/9
        }{
            \knightBlue{\x}{\y}{2}
        }
        
        \draw[line width=0.45mm, kit-blue] (5.17,1.7) -- (3.83,2.3);
        \draw[line width=0.45mm, kit-blue] (5.83,1.7) -- (7.17,2.3);
        \draw[line width=0.45mm, kit-blue] (9.17,1.7) -- (7.83,2.3);
        \draw[line width=0.45mm, kit-blue] (3.83,2.7) -- (5.17,3.3);
        \draw[line width=0.45mm, kit-blue] (7.17,2.7) -- (5.83,3.3);
        \draw[line width=0.45mm, kit-blue] (7.83,2.7) -- (9.17,3.3);
        
        \draw[line width=0.45mm, kit-blue] (5.17,3.7) -- (3.83,4.3);
        \draw[line width=0.45mm, kit-blue] (5.83,3.7) -- (7.17,4.3);
        \draw[line width=0.45mm, kit-blue] (9.17,3.7) -- (7.83,4.3);
        \draw[line width=0.45mm, kit-blue] (3.83,4.7) -- (5.17,5.3);
        \draw[line width=0.45mm, kit-blue] (7.17,4.7) -- (5.83,5.3);
        \draw[line width=0.45mm, kit-blue] (7.83,4.7) -- (9.17,5.3);
        
        \draw[line width=0.45mm, kit-blue] (5.17,5.7) -- (3.83,6.3);
        \draw[line width=0.45mm, kit-blue] (5.83,5.7) -- (7.17,6.3);
        \draw[line width=0.45mm, kit-blue] (9.17,5.7) -- (7.83,6.3);
        \draw[line width=0.45mm, kit-blue] (3.83,6.7) -- (5.17,7.3);
        \draw[line width=0.45mm, kit-blue] (7.17,6.7) -- (5.83,7.3);
        \draw[line width=0.45mm, kit-blue] (7.83,6.7) -- (9.17,7.3);
        
        \draw[line width=0.45mm, kit-blue] (5.17,7.7) -- (3.83,8.3);
        \draw[line width=0.45mm, kit-blue] (5.83,7.7) -- (7.17,8.3);
        \draw[line width=0.45mm, kit-blue] (9.17,7.7) -- (7.83,8.3);
        \draw[line width=0.45mm, kit-blue] (3.83,8.7) -- (5.17,9.3);
        \draw[line width=0.45mm, kit-blue] (7.17,8.7) -- (5.83,9.3);
        \draw[line width=0.45mm, kit-blue] (7.83,8.7) -- (9.17,9.3);
        
        \draw[line width=0.45mm, dashed, kit-blue] (3.83,0.7) -- (5.17,1.3);
        \draw[line width=0.45mm, dashed, kit-blue] (7.17,0.7) -- (5.83,1.3);
        \draw[line width=0.45mm, dashed, kit-blue] (7.83,0.7) -- (9.17,1.3);
        \draw[line width=0.45mm, dashed, kit-blue] (5.17,9.7) -- (3.83,10.3);
        \draw[line width=0.45mm, dashed, kit-blue] (5.83,9.7) -- (7.17,10.3);
        \draw[line width=0.45mm, dashed, kit-blue] (9.17,9.7) -- (7.83,10.3);

        \draw[line width=0.45mm, kit-orange] (1.83,4.7) -- (3.17,5.3);
        \draw[line width=0.45mm, kit-orange] (1.83,6.3) -- (3.17,5.7);
        \draw[line width=0.45mm, kit-orange] (1.83,6.7) -- (3.17,7.3);
        \draw[line width=0.45mm, kit-orange] (3.83,5.3) -- (5.17,4.7);
        \draw[line width=0.45mm, kit-orange] (3.83,5.7) -- (5.17,6.3);
        \draw[line width=0.45mm, kit-orange] (3.83,7.3) -- (5.17,6.7);
        
        \draw[line width=0.45mm, kit-orange] (5.83,4.7) -- (7.17,5.3);
        \draw[line width=0.45mm, kit-orange] (5.83,6.3) -- (7.17,5.7);
        \draw[line width=0.45mm, kit-orange] (5.83,6.7) -- (7.17,7.3);
        \draw[line width=0.45mm, kit-orange] (7.83,5.3) -- (9.17,4.7);
        \draw[line width=0.45mm, kit-orange] (7.83,5.7) -- (9.17,6.3);
        \draw[line width=0.45mm, kit-orange] (7.83,7.3) -- (9.17,6.7);
        
        \draw[line width=0.45mm, kit-orange] (9.83,4.7) -- (11.17,5.3);
        \draw[line width=0.45mm, kit-orange] (9.83,6.3) -- (11.17,5.7);
        \draw[line width=0.45mm, kit-orange] (9.83,6.7) -- (11.17,7.3);
        
        \draw[line width=0.45mm, dashed, shorten >= 0.7cm, kit-orange] (1.17,4.7) -- (-0.17,5.3);
        \draw[line width=0.45mm, dashed, shorten >= 0.7cm, kit-orange] (1.17,6.3) -- (-0.17,5.7);
        \draw[line width=0.45mm, dashed, shorten >= 0.7cm, kit-orange] (1.17,6.7) -- (-0.17,7.3);
        
        \draw[line width=0.45mm, dashed, shorten >= 0.7cm, kit-orange] (11.83,5.3) -- (13.17,4.7);
        \draw[line width=0.45mm, dashed, shorten >= 0.7cm, kit-orange] (11.83,5.7) -- (13.17,6.3);
        \draw[line width=0.45mm, dashed, shorten >= 0.7cm, kit-orange] (11.83,7.3) -- (13.17,6.7);
    \end{tikzpicture}
}

\newcommand{\knightSetComponent}{
    \begin{tikzpicture}[x=1cm,y=-1cm]
        \board{4}{9}
        \path[use as bounding box] (1,1) rectangle (10,5);
        \foreach \x/\y [count=\i] in {
            1/1, 1/3, 3/2, 3/4, 4/4, 5/1, 5/3, 7/2, 7/4, 9/1, 9/3%
        }{
            \knight[\i]{\x}{\y}{2}
        }
        
        \draw[line width=0.35mm] (1.8,1.65) -- (3.2,2.35);
        \draw[line width=0.35mm] (1.8,3.35) -- (3.2,2.65);
        \draw[line width=0.35mm] (1.8,3.65) -- (3.2,4.35);
        
        \draw[line width=0.35mm] (3.65,2.8) -- (4.35,4.2);
        \draw[line width=0.35mm] (3.8,2.35) -- (5.2,1.65);
        \draw[line width=0.35mm] (3.8,2.65) -- (5.2,3.35);
        \draw[line width=0.35mm] (3.8,4.35) -- (5.2,3.65);
        
        \draw[line width=0.35mm] (5.8,1.65) -- (7.2,2.35);
        \draw[line width=0.35mm] (5.8,3.35) -- (7.2,2.65);
        \draw[line width=0.35mm] (5.8,3.65) -- (7.2,4.35);
        
        \draw[line width=0.35mm] (7.8,2.35) -- (9.2,1.65);
        \draw[line width=0.35mm] (7.8,2.65) -- (9.2,3.35);
        \draw[line width=0.35mm] (7.8,4.35) -- (9.2,3.65);

        \draw[line width=0.35mm, dashed, shorten >= 0.7cm] (1.2,1.65) -- (-0.2,2.35);
        \draw[line width=0.35mm, dashed, shorten >= 0.7cm] (1.2,3.35) -- (-0.2,2.65);
        \draw[line width=0.35mm, dashed, shorten >= 0.7cm] (1.2,3.65) -- (-0.2,4.35);
        
        \draw[line width=0.35mm, dashed, shorten >= 0.7cm] (9.8,1.65) -- (11.2,2.35);
        \draw[line width=0.35mm, dashed, shorten >= 0.7cm] (9.8,3.35) -- (11.2,2.65);
        \draw[line width=0.35mm, dashed, shorten >= 0.7cm] (9.8,3.65) -- (11.2,4.35);
    \end{tikzpicture}
}

\newcommand{\knightSetComponentMoves}{
    \begin{tikzpicture}[x=1cm,y=-1cm]
        \board{4}{9}
        \foreach \x/\y/\c in {
            1/3/0, 3/2/2, 3/4/2, 4/4/2, 5/1/2, 5/3/2, 7/2/2, 7/4/2, 9/1/2, 9/3/2%
        }{
            \knight{\x}{\y}{\c}
        }
        
        \draw [rounded corners = 0.35cm, line width=0.25mm, -latex] (3.2,4.2) -- (1.6,3.5) -- (3.2,2.8);
        \draw [rounded corners = 0.35cm, line width=0.25mm, -latex] (4.2,4.2) -- (3.5,2.5) -- (5.2,3.2);
        \draw [rounded corners = 0.35cm, line width=0.25mm, -latex, dashed] (5.2,1.8) -- (3.7,2.3);
        \node at (4.6,2) {\Huge{\textbf{\red{\texttimes}}}};
    \end{tikzpicture}
}

\newcommand{\knightDecrementComponent}{
    \begin{tikzpicture}[x=1cm,y=-1cm]
        \board{4}{9}
        \foreach \x/\y [count=\i] in {
            1/1, 1/3, 3/2, 3/4, 5/3, 7/2, 7/4, 9/1, 9/3%
        }{
            \knight[\i]{\x}{\y}{2}
        }
        
        \draw[line width=0.35mm] (1.8,1.65) -- (3.2,2.35);
        \draw[line width=0.35mm] (1.8,3.35) -- (3.2,2.65);
        \draw[line width=0.35mm] (1.8,3.65) -- (3.2,4.35);
        
        \draw[line width=0.35mm] (3.8,2.65) -- (5.2,3.35);
        \draw[line width=0.35mm] (3.8,4.35) -- (5.2,3.65);
        
        \draw[line width=0.35mm] (5.8,3.35) -- (7.2,2.65);
        \draw[line width=0.35mm] (5.8,3.65) -- (7.2,4.35);
        
        \draw[line width=0.35mm] (7.8,2.35) -- (9.2,1.65);
        \draw[line width=0.35mm] (7.8,2.65) -- (9.2,3.35);
        \draw[line width=0.35mm] (7.8,4.35) -- (9.2,3.65);

        \draw[line width=0.35mm, dashed, shorten >= 0.7cm] (1.2,1.65) -- (-0.2,2.35);
        \draw[line width=0.35mm, dashed, shorten >= 0.7cm] (1.2,3.35) -- (-0.2,2.65);
        \draw[line width=0.35mm, dashed, shorten >= 0.7cm] (1.2,3.65) -- (-0.2,4.35);
        
        \draw[line width=0.35mm, dashed, shorten >= 0.7cm] (9.8,1.65) -- (11.2,2.35);
        \draw[line width=0.35mm, dashed, shorten >= 0.7cm] (9.8,3.35) -- (11.2,2.65);
        \draw[line width=0.35mm, dashed, shorten >= 0.7cm] (9.8,3.65) -- (11.2,4.35);
    \end{tikzpicture}
}

\newcommand{\knightCheckLabeled}{
    \begin{tikzpicture}[x=1cm,y=-1cm]
        \board{4}{9}
        \path[use as bounding box] (1,1) rectangle (10,5);
        
        \squareColor{5}{1}{blue}
        \foreach \x/\y [count=\i] in {
            1/1, 1/3, 3/2, 3/4, 5/1, 7/2, 9/1%
        }{
            \knight[\i]{\x}{\y}{2}
        }
        
        \draw[line width=0.35mm] (1.8,1.65) -- (3.2,2.35);
        \draw[line width=0.35mm] (1.8,3.35) -- (3.2,2.65);
        \draw[line width=0.35mm] (1.8,3.65) -- (3.2,4.35);
        
        \draw[line width=0.35mm] (3.8,2.35) -- (5.2,1.65);        
        \draw[line width=0.35mm] (5.8,1.65) -- (7.2,2.35);        
        \draw[line width=0.35mm] (7.8,2.35) -- (9.2,1.65);
        
        \draw[line width=0.35mm, dashed, shorten >= 0.7cm] (1.2,1.65) -- (-0.2,2.35);
        \draw[line width=0.35mm, dashed, shorten >= 0.7cm] (1.2,3.35) -- (-0.2,2.65);
        \draw[line width=0.35mm, dashed, shorten >= 0.7cm] (1.2,3.65) -- (-0.2,4.35);
    \end{tikzpicture}
}

\newcommand{\knightThreeVariable}{
    \begin{tikzpicture}[x=1cm,y=-1cm]
        \board{4}{17}
        \squareColor{9}{2}{blue}
        \foreach \x/\y [count=\i] in {
            1/2, 1/4, 3/1, 3/3, 5/2, 7/3, 9/2, 11/3, 13/4, 13/2, 15/1, 15/3, 17/4, 17/2
        }{
            \knight[\i]{\x}{\y}{2}
        }
        
        \draw[line width=0.35mm] (1.8,2.35) -- (3.2,1.65);
        \draw[line width=0.35mm] (1.8,2.65) -- (3.2,3.35);
        \draw[line width=0.35mm] (1.8,4.35) -- (3.2,3.65);
        
        \draw[line width=0.35mm] (3.8,1.65) -- (5.2,2.35);
        \draw[line width=0.35mm] (3.8,3.35) -- (5.2,2.65);
        
        \draw[line width=0.35mm] (5.8,2.65) -- (7.2,3.35);
        \draw[line width=0.35mm] (7.8,3.35) -- (9.2,2.65);
        \draw[line width=0.35mm] (9.8,2.65) -- (11.2,3.35);
        
        \draw[line width=0.35mm] (11.8,3.35) -- (13.2,2.65);
        \draw[line width=0.35mm] (11.8,3.65) -- (13.2,4.35);
        \draw[line width=0.35mm] (13.8,4.35) -- (15.2,3.65);
        
        \draw[line width=0.35mm] (13.8,2.35) -- (15.2,1.65);
        \draw[line width=0.35mm] (13.8,2.65) -- (15.2,3.35);
        
        \draw[line width=0.35mm] (15.8,1.65) -- (17.2,2.35);
        \draw[line width=0.35mm] (15.8,3.35) -- (17.2,2.65);
        \draw[line width=0.35mm] (15.8,3.65) -- (17.2,4.35);

        \draw[line width=0.35mm, dashed, shorten >= 0.7cm] (1.2,2.35) -- (-0.2,1.65);
        \draw[line width=0.35mm, dashed, shorten >= 0.7cm] (1.2,2.65) -- (-0.2,3.35);
        \draw[line width=0.35mm, dashed, shorten >= 0.7cm] (1.2,4.35) -- (-0.2,3.65);
        
        \draw[line width=0.35mm, dashed, shorten >= 0.7cm] (17.8,2.35) -- (19.2,1.65);
        \draw[line width=0.35mm, dashed, shorten >= 0.7cm] (17.8,2.65) -- (19.2,3.35);
        \draw[line width=0.35mm, dashed, shorten >= 0.7cm] (17.8,4.35) -- (19.2,3.65);
    \end{tikzpicture}
}

\newcommand{\knightAddition}{
    \begin{tikzpicture}[x=1cm,y=-1cm]
        \board{9}{11}
        \foreach \i/\x/\y in {
            1/1/4, 2/1/6, 4/3/7, 5/5/4, 6/5/6, 7/7/5, 8/7/7, 9/9/4, 10/9/6, 11/11/5, 12/11/7
        }{
            \knightOrange[\i]{\x}{\y}{2}
        }
        
        \foreach \x/\y [count=\i from 13] in {
            3/1, 5/1, 4/3, 6/3, 5/5
        }{
            \knightBlue[\i]{\x}{\y}{2}
        }
        \knightSplit[3]{3}{5}{2}
        
        \foreach \x/\y [count=\i from 18] in {
            4/7, 5/9, 6/8, 7/8
        }{
            \knight[\i]{\x}{\y}{2}
        }
        
        \draw[line width=0.35mm, kit-orange] (1.8,4.65) -- (3.2,5.35);
        \draw[line width=0.35mm, kit-orange] (1.8,6.35) -- (3.2,5.65);
        \draw[line width=0.35mm, kit-orange] (1.8,6.65) -- (3.2,7.35);
        
        \draw[line width=0.35mm, kit-orange] (3.8,5.35) -- (5.2,4.65);
        \draw[line width=0.35mm, kit-orange] (3.8,5.65) -- (5.2,6.35);
        \draw[line width=0.35mm, kit-orange] (3.8,7.35) -- (5.2,6.65);
        
        \draw[line width=0.35mm, kit-orange] (5.8,4.65) -- (7.2,5.35);
        \draw[line width=0.35mm, kit-orange] (5.8,6.35) -- (7.2,5.65);
        \draw[line width=0.35mm, kit-orange] (5.8,6.65) -- (7.2,7.35);
        
        \draw[line width=0.35mm, kit-orange] (7.8,5.35) -- (9.2,4.65);
        \draw[line width=0.35mm, kit-orange] (7.8,5.65) -- (9.2,6.35);
        \draw[line width=0.35mm, kit-orange] (7.8,7.35) -- (9.2,6.65);
        
        \draw[line width=0.35mm, kit-orange] (9.8,4.65) -- (11.2,5.35);
        \draw[line width=0.35mm, kit-orange] (9.8,6.35) -- (11.2,5.65);
        \draw[line width=0.35mm, kit-orange] (9.8,6.65) -- (11.2,7.35);

        \draw[line width=0.45mm, kit-blue] (3.65,1.8) -- (4.35,3.2);
        \draw[line width=0.45mm, kit-blue] (3.65,5.2) -- (4.35,3.8);
        \draw[line width=0.45mm, kit-blue] (4.65,3.2) -- (5.35,1.8);
        \draw[line width=0.45mm, kit-blue] (4.65,3.8) -- (5.35,5.2);
        \draw[line width=0.45mm, kit-blue] (4.65,7.2) -- (5.35,5.8);

        \draw[line width=0.45mm, kit-blue] (5.65,1.8) -- (6.35,3.2);
        \draw[line width=0.45mm, kit-blue] (5.65,5.2) -- (6.35,3.8);

        \draw[line width=0.35mm] (5.65,6.8) -- (6.35,8.2);
        \draw[line width=0.35mm] (6.65,3.8) -- (7.35,5.2);
        
        \draw[line width=0.35mm] (3.65,5.8) -- (4.35,7.2);
        \draw[line width=0.35mm] (4.65,7.8) -- (5.35,9.2);
        \draw[line width=0.35mm] (4.8,7.65) -- (6.2,8.35);
        \draw[line width=0.35mm] (5.8,9.35) -- (7.2,8.65);

        \draw[line width=0.35mm, dashed, shorten >= 0.7cm, kit-orange] (1.2,4.65) -- (-0.2,5.35);
        \draw[line width=0.35mm, dashed, shorten >= 0.7cm, kit-orange] (1.2,6.35) -- (-0.2,5.65);
        \draw[line width=0.35mm, dashed, shorten >= 0.7cm, kit-orange] (1.2,6.65) -- (-0.2,7.35);
        
        \draw[line width=0.35mm, dashed, shorten >= 0.7cm, kit-orange] (11.8,5.35) -- (13.2,4.65);
        \draw[line width=0.35mm, dashed, shorten >= 0.7cm, kit-orange] (11.8,5.65) -- (13.2,6.35);
        \draw[line width=0.35mm, dashed, shorten >= 0.7cm, kit-orange] (11.8,7.35) -- (13.2,6.65);
    \end{tikzpicture}
}

\usetikzlibrary{calc}

\newcommand{\gadgetBox}[6]{
    \draw (#1 * #6, #2 * #6) rectangle (#1 * #6 + #3 * #6, #2 * #6 + #4 * #6);
    \node[anchor=south west] at (#1 * #6, #2 * #6) {#5};
}

\newcommand{\knightFlipGadget}{
    \begin{tikzpicture}
      \gadgetBox{-2.7}{-1.1}{7.4}{2.8}{Flip}{1}
        
      \node (asym) at (1,-0.3) [draw] {\begin{tabular}{l}
                                    \( (1,1)\ \longleftrightarrow\ (0,0)\) \\
                                    \hline  \( (0,0)\ \longleftrightarrow\ (0,1) \)
                                    \end{tabular}   };
      \node (plusOne) at (-2,1) [circle, draw] { \( \max \) };
      \node (minus) at (0,1) [draw] { \( -(1,0) \) };
      \node (times) at (2,1) [draw] { \( \cdot (1,0) \) };
      \node (plusTwo) at (4,1) [circle, draw] { \( \max \) };

      \node (in) at (-4,1) {$(x,0)$};
      \node (out) at (6,1) {$(0,x)$};

      \draw[-latex] (in) -- (plusOne);
      \draw[-latex] (asym) to[out=180, in=-90] (plusOne);
      \draw[-latex] (plusOne) -- (minus);
      \draw[-latex] (minus) -- (times);
      \draw[-latex] (times) -- (plusTwo);
      \draw[-latex] (asym) to[out=0, in=-90] (plusTwo);      
      \draw[-latex] (plusTwo) -- (out);
    \end{tikzpicture}
}

\newcommand{\knightZeroComponent}{
    \begin{tikzpicture}
      \gadgetBox{-0.9}{-0.5}{5.8}{2}{$\cdot (0,1)$}{1}
      
      \node (plus) at (0,1) [draw] { \( +(1,0) \) };
      \node (minus) at (4,1) [draw] { \( -(1,0) \) };

      \node (in) at (-2,1) {$(x,y)$};
      \node (out) at (6,1) {$(0,y)$};

      \draw[-latex] (in) -- (plus);
      \draw[-latex] (plus) -- (minus) node [midway, fill=white] {$(1,y)$};;
      \draw[-latex] (minus) -- (out);
    \end{tikzpicture}
}

\newcommand{\knightAsymmetricVar}{
    \begin{tikzpicture}
      \gadgetBox{1.1}{-1}{4.7}{2.5}{\begin{tabular}{l}
                                    \( (1,1)\ \longleftrightarrow\ (0,0)\) \\
                                    \hline  \( (0,0)\ \longleftrightarrow\ (1,0) \)
                                    \end{tabular}}{1}
      
      \node (out1) at (0,1) {$(x,x)$};
      \node (out2) at (7,1) {$(1-x,0)$};

      \node (var) at (3,1) [draw] { 3-Val };
      \node (zero) at (5,1) [draw] { \( \cdot(1,0) \) };

      \draw[-latex] (var) -- (out1);
      \draw[-latex] (var) -- (zero);
      \draw[-latex] (zero) -- (out2);
    \end{tikzpicture}
}

\newcommand{\knightVarAbstract}{
    \begin{tikzpicture}
      \gadgetBox{1.1}{-1}{4.7}{2.5}{\begin{tabular}{l}
                                    \( (1,0)\ \longleftrightarrow\ (0,0)\) \\
                                    \hline  \( (0,0)\ \longleftrightarrow\ (1,0) \)
                                    \end{tabular}}{1}
      
      \node (out1) at (0,1) {$(x,0)$};
      \node (out2) at (7,1) {$(1-x,0)$};

      \node (var) at (3,1) [draw] { 3-Val };
      \node (zero) at (5,1) [draw] { \( -(0,1) \) };

      \draw[-latex] (var) -- (out1);
      \draw[-latex] (var) -- (zero);
      \draw[-latex] (zero) -- (out2);
    \end{tikzpicture}
}

\newcommand{\knightAndGadget}{
    \newcommand{\scale}{0.9cm}
    \begin{tikzpicture}[x=\scale,y=\scale]
      \gadgetBox{-4.4}{-0.5}{12.6}{4.1}{AND}{1}
      
      \node (flip) at (-3.5,2.8) [draw] {Flip};
      \node (asym) at (3.4,2.5) [draw] {\begin{tabular}{l}
                                    \( (1,1)\ \longleftrightarrow\ (0,0)\) \\
                                    \hline  \( (0,0)\ \longleftrightarrow\ (1,0) \)
                                    \end{tabular}};
      \node (plusOne) at (-3.5,1) [circle, draw] { \( \max \) };
      \node (plusTwo) at (-0.5,1) [circle, draw] { \( \max \) };
      \node (minusOne) at (2.5,1) [draw] { \( -(1,0) \) };
      \node (minusTwo) at (4.3,1) [draw] { \( -(0,1) \) };
      \node (plusThree) at (7.4,1) [circle, draw] { \( \max \) };

      \node (inOne) at (-5.5,1) {$(x,0)$};
      \node (inTwo) at (-3.5,4.4) {$(y,0)$};
      \node (out) at (9.4,1) {$(x \land y, 0)$};

      \draw[-latex] (inOne) -- (plusOne);
      \draw[-latex] (inTwo) -- (flip);
      \draw[-latex] (flip) -- (plusOne);
      \draw[-latex] (asym) to[out=180, in=90] (plusTwo);
      \draw[-latex] (plusOne) -- (plusTwo)  node [midway, fill=white] {$(x,y)$};
      \draw[-latex] (plusTwo) -- (minusOne) node [midway, fill=white, align=center] {!\\$(1,1)$\\};
      \draw[-latex] (minusOne) -- (minusTwo);
      \draw[-latex] (minusTwo) -- (plusThree) node [midway, fill=white] {$(0,0)$};
      \draw[-latex] (asym) to[out=0, in=90] (plusThree);      
      \draw[-latex] (plusThree) -- (out);
    \end{tikzpicture}
}

\definecolor{kit-green}{RGB}{0, 150, 130}
\colorlet{kit-green100}{kit-green}
\colorlet{kit-green90}{kit-green!90!white}
\colorlet{kit-green80}{kit-green!80!white}
\colorlet{kit-green70}{kit-green!70!white}
\colorlet{kit-green60}{kit-green!60!white}
\colorlet{kit-green50}{kit-green!50!white}
\colorlet{kit-green40}{kit-green!40!white}
\colorlet{kit-green30}{kit-green!30!white}
\colorlet{kit-green25}{kit-green!25!white}
\colorlet{kit-green20}{kit-green!20!white}
\colorlet{kit-green15}{kit-green!15!white}
\colorlet{kit-green10}{kit-green!10!white}
\colorlet{kit-green5}{kit-green!5!white}

\definecolor{kit-blue}{RGB}{70, 100, 170}
\colorlet{kit-blue100}{kit-blue}
\colorlet{kit-blue90}{kit-blue!90!white}
\colorlet{kit-blue80}{kit-blue!80!white}
\colorlet{kit-blue70}{kit-blue!70!white}
\colorlet{kit-blue60}{kit-blue!60!white}
\colorlet{kit-blue50}{kit-blue!50!white}
\colorlet{kit-blue40}{kit-blue!40!white}
\colorlet{kit-blue30}{kit-blue!30!white}
\colorlet{kit-blue25}{kit-blue!25!white}
\colorlet{kit-blue20}{kit-blue!20!white}
\colorlet{kit-blue15}{kit-blue!15!white}
\colorlet{kit-blue10}{kit-blue!10!white}
\colorlet{kit-blue5}{kit-blue!5!white}

\definecolor{kit-royalblue}{RGB}{0, 45, 76}
\colorlet{kit-royalblue100}{kit-royalblue}
\colorlet{kit-royalblue90}{kit-royalblue!90!white}
\colorlet{kit-royalblue80}{kit-royalblue!80!white}
\colorlet{kit-royalblue70}{kit-royalblue!70!white}
\colorlet{kit-royalblue60}{kit-royalblue!60!white}
\colorlet{kit-royalblue50}{kit-royalblue!50!white}
\colorlet{kit-royalblue40}{kit-royalblue!40!white}
\colorlet{kit-royalblue30}{kit-royalblue!30!white}
\colorlet{kit-royalblue25}{kit-royalblue!25!white}
\colorlet{kit-royalblue20}{kit-royalblue!20!white}
\colorlet{kit-royalblue15}{kit-royalblue!15!white}
\colorlet{kit-royalblue10}{kit-royalblue!10!white}
\colorlet{kit-royalblue5}{kit-royalblue!5!white}

\definecolor{kit-iceblue}{RGB}{30, 53, 69}
\colorlet{kit-iceblue100}{kit-iceblue}
\colorlet{kit-iceblue90}{kit-iceblue!90!white}
\colorlet{kit-iceblue80}{kit-iceblue!80!white}
\definecolor{kit-iceblue70}{RGB}{68, 94, 111}
\colorlet{kit-iceblue60}{kit-iceblue70!85!white}
\definecolor{kit-iceblue50}{RGB}{168, 185, 196}
\colorlet{kit-iceblue40}{kit-iceblue50!80!white}
\definecolor{kit-iceblue30}{RGB}{218, 225, 230}
\colorlet{kit-iceblue25}{kit-iceblue30!83!white}
\colorlet{kit-iceblue20}{kit-iceblue30!67!white}
\colorlet{kit-iceblue15}{kit-iceblue30!50!white}
\colorlet{kit-iceblue10}{kit-iceblue30!33!white}
\colorlet{kit-iceblue5}{kit-iceblue30!17!white}

\definecolor{kit-red}{RGB}{162, 34, 35}
\colorlet{kit-red100}{kit-red}
\colorlet{kit-red90}{kit-red!90!white}
\colorlet{kit-red80}{kit-red!80!white}
\colorlet{kit-red70}{kit-red!70!white}
\colorlet{kit-red60}{kit-red!60!white}
\colorlet{kit-red50}{kit-red!50!white}
\colorlet{kit-red40}{kit-red!40!white}
\colorlet{kit-red30}{kit-red!30!white}
\colorlet{kit-red25}{kit-red!25!white}
\colorlet{kit-red20}{kit-red!20!white}
\colorlet{kit-red15}{kit-red!15!white}
\colorlet{kit-red10}{kit-red!10!white}
\colorlet{kit-red5}{kit-red!5!white}

\definecolor{kit-yellow}{RGB}{252, 229, 0}
\colorlet{kit-yellow100}{kit-yellow}
\colorlet{kit-yellow90}{kit-yellow!90!white}
\colorlet{kit-yellow80}{kit-yellow!80!white}
\colorlet{kit-yellow70}{kit-yellow!70!white}
\colorlet{kit-yellow60}{kit-yellow!60!white}
\colorlet{kit-yellow50}{kit-yellow!50!white}
\colorlet{kit-yellow40}{kit-yellow!40!white}
\colorlet{kit-yellow30}{kit-yellow!30!white}
\colorlet{kit-yellow25}{kit-yellow!25!white}
\colorlet{kit-yellow20}{kit-yellow!20!white}
\colorlet{kit-yellow15}{kit-yellow!15!white}
\colorlet{kit-yellow10}{kit-yellow!10!white}
\colorlet{kit-yellow5}{kit-yellow!5!white}

\definecolor{kit-orange}{RGB}{223, 155, 27}
\colorlet{kit-orange100}{kit-orange}
\colorlet{kit-orange90}{kit-orange!90!white}
\colorlet{kit-orange80}{kit-orange!80!white}
\colorlet{kit-orange70}{kit-orange!70!white}
\colorlet{kit-orange60}{kit-orange!60!white}
\colorlet{kit-orange50}{kit-orange!50!white}
\colorlet{kit-orange40}{kit-orange!40!white}
\colorlet{kit-orange30}{kit-orange!30!white}
\colorlet{kit-orange25}{kit-orange!25!white}
\colorlet{kit-orange20}{kit-orange!20!white}
\colorlet{kit-orange15}{kit-orange!15!white}
\colorlet{kit-orange10}{kit-orange!10!white}
\colorlet{kit-orange5}{kit-orange!5!white}

\definecolor{kit-lightgreen}{RGB}{140, 182, 60}
\colorlet{kit-lightgreen100}{kit-lightgreen}
\colorlet{kit-lightgreen90}{kit-lightgreen!90!white}
\colorlet{kit-lightgreen80}{kit-lightgreen!80!white}
\colorlet{kit-lightgreen70}{kit-lightgreen!70!white}
\colorlet{kit-lightgreen60}{kit-lightgreen!60!white}
\colorlet{kit-lightgreen50}{kit-lightgreen!50!white}
\colorlet{kit-lightgreen40}{kit-lightgreen!40!white}
\colorlet{kit-lightgreen30}{kit-lightgreen!30!white}
\colorlet{kit-lightgreen25}{kit-lightgreen!25!white}
\colorlet{kit-lightgreen20}{kit-lightgreen!20!white}
\colorlet{kit-lightgreen15}{kit-lightgreen!15!white}
\colorlet{kit-lightgreen10}{kit-lightgreen!10!white}
\colorlet{kit-lightgreen5}{kit-lightgreen!5!white}

\definecolor{kit-purple}{RGB}{163, 16, 124}
\colorlet{kit-purple100}{kit-purple}
\colorlet{kit-purple90}{kit-purple!90!white}
\colorlet{kit-purple80}{kit-purple!80!white}
\colorlet{kit-purple70}{kit-purple!70!white}
\colorlet{kit-purple60}{kit-purple!60!white}
\colorlet{kit-purple50}{kit-purple!50!white}
\colorlet{kit-purple40}{kit-purple!40!white}
\colorlet{kit-purple30}{kit-purple!30!white}
\colorlet{kit-purple25}{kit-purple!25!white}
\colorlet{kit-purple20}{kit-purple!20!white}
\colorlet{kit-purple15}{kit-purple!15!white}
\colorlet{kit-purple10}{kit-purple!10!white}
\colorlet{kit-purple5}{kit-purple!5!white}

\definecolor{kit-brown}{RGB}{167, 130, 46}
\colorlet{kit-brown100}{kit-brown}
\colorlet{kit-brown90}{kit-brown!90!white}
\colorlet{kit-brown80}{kit-brown!80!white}
\colorlet{kit-brown70}{kit-brown!70!white}
\colorlet{kit-brown60}{kit-brown!60!white}
\colorlet{kit-brown50}{kit-brown!50!white}
\colorlet{kit-brown40}{kit-brown!40!white}
\colorlet{kit-brown30}{kit-brown!30!white}
\colorlet{kit-brown25}{kit-brown!25!white}
\colorlet{kit-brown20}{kit-brown!20!white}
\colorlet{kit-brown15}{kit-brown!15!white}
\colorlet{kit-brown10}{kit-brown!10!white}
\colorlet{kit-brown5}{kit-brown!5!white}

\definecolor{kit-cyan}{RGB}{35, 161, 224}
\colorlet{kit-cyan100}{kit-cyan}
\colorlet{kit-cyan90}{kit-cyan!90!white}
\colorlet{kit-cyan80}{kit-cyan!80!white}
\colorlet{kit-cyan70}{kit-cyan!70!white}
\colorlet{kit-cyan60}{kit-cyan!60!white}
\colorlet{kit-cyan50}{kit-cyan!50!white}
\colorlet{kit-cyan40}{kit-cyan!40!white}
\colorlet{kit-cyan30}{kit-cyan!30!white}
\colorlet{kit-cyan25}{kit-cyan!25!white}
\colorlet{kit-cyan20}{kit-cyan!20!white}
\colorlet{kit-cyan15}{kit-cyan!15!white}
\colorlet{kit-cyan10}{kit-cyan!10!white}
\colorlet{kit-cyan5}{kit-cyan!5!white}

\definecolor{kit-gray}{RGB}{0, 0, 0}
\colorlet{kit-gray100}{kit-gray}
\colorlet{kit-gray90}{kit-gray!90!white}
\colorlet{kit-gray80}{kit-gray!80!white}
\colorlet{kit-gray70}{kit-gray!70!white}
\colorlet{kit-gray60}{kit-gray!60!white}
\colorlet{kit-gray50}{kit-gray!50!white}
\colorlet{kit-gray40}{kit-gray!40!white}
\colorlet{kit-gray30}{kit-gray!30!white}
\colorlet{kit-gray25}{kit-gray!25!white}
\colorlet{kit-gray20}{kit-gray!20!white}
\colorlet{kit-gray15}{kit-gray!15!white}
\colorlet{kit-gray10}{kit-gray!10!white}
\colorlet{kit-gray5}{kit-gray!5!white}

\tikzset{inputsquare/.style={
line width=0.35mm,
-latex,
 rounded corners=0.1cm,
 white,
 fill=kit-cyan60,
}}
\tikzset{outputsquare/.style={
line width=0.35mm,
-latex,
 rounded corners=0.1cm,
 white,
 fill=kit-purple60,
}}

\tikzset{gadgetoverlay/.style={
line width=0.35mm,
-latex,
 rounded corners=0.1cm,
 black,
 fill=kit-gray30,
 opacity=0.7,
}}

\newcounter{pieceID}

\newenvironment{newboard}[2]{
    \setcounter{pieceID}{1}
    \begin{tikzpicture}[x=1cm,y=-1cm]
        \board{#1}{#2}
}{\end{tikzpicture}}

\newcommand{\kingnew}[4][]{
    \coordinate (\Alph{pieceID}) at ($(#2+0.5,#3+0.5)$);
    \ifthenelse{\equal{#4}{x}}{}{
        \node (tmp) at ($(#2+0.5,#3+0.5)$) {\includegraphics[width=0.85cm]{images/king/king_new_#4.pdf}};
        \ifthenelse{\equal{#1}{}}{}{
            \node[inner sep=0cm, outer sep=0cm, fill=white, circle, minimum width=0.3cm, fill opacity=0.9, text opacity=1, yshift=-0.1cm] at (tmp.center) {\Alph{pieceID}};
            \stepcounter{pieceID}
        }
    }
}

\newcommand{\kingList}[4][]{
    \foreach \row [count=\y] in {#4} {
        \foreach \c [count=\x] in \row {
            \kingnew[#1]{\x + #2}{\y + #3}{\c}
        }
    }
}

\newcommand{\kingRectangle}[5][]{
    \foreach \y in {1,...,#5} {
        \foreach \x in {1,...,#4} {
            \kingnew[#1]{\x + #2}{\y + #3}{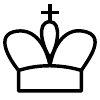}
        }
    }
}

\newcommand{\wireHorizontal}[3][]{
    \foreach \x/\y/\c in {
        2/2/2,
        2/3/2,
        3/2/2,
        3/3/2%
    }{
        \kingnew[#1]{\x + #2}{\y + #3}{\c}
    }
}

\newcommand{\wireConnector}[3][]{
    \foreach \x/\y/\c in {
        1/2/2,
        1/3/2,
        1/4/2,
        2/3/2%
    }{
        \kingnew[#1]{\x + #2}{\y + #3}{\c}}
}

\newcommand{\wireConnectorLeft}[3][]{
    \kingList[#1]{#2}{#3}{
        {x,2},
        {2,2},
        {x,2}%
    }
}

\newcommand{\wireConnectorRight}[3][]{
    \kingList[#1]{#2}{#3}{
        {2,x},
        {2,2},
        {2,x}%
    }
}
\newcommand{\wireConnectorUp}[3][]{
    \kingList[#1]{#2}{#3}{
        {x,2,x},
        {2,2,2}%
    }
}

\newcommand{\wireConnectorDown}[3][]{
    \kingList[#1]{#2}{#3}{
        {2,2,2},
        {x,2,x}%
    }
}

\newcommand{\kingWireCorner}[3][]{
    \foreach \x/\y/\c in {
        1/1/2,
        1/2/2%
    }{
        \kingnew[#1]{\x + #2}{\y + #3}{\c}}
}

\newcommand{\varGadget}[3][]{
    \foreach \x/\y/\c in {
        1/2/2,
        1/3/2,
        2/3/2,
        3/3/2,
        4/3/2,
        5/3/2,
        6/3/2,
        7/2/2,
        7/3/2%
    }{
        \kingnew[#1]{\x + #2}{\y + #3}{\c}
    }
}

\newcommand{\orGadget}[3][]{
    \kingList[#1]{#2}{#3}{
        {x,x,x,x,x},
        {x,x,2,2,x},
        {x,2,2,2,x},
        {x,2,2,2,x},
        {x,x,x,2,x}%
    }
}

\newcommand{\andGadget}[3][]{
    \kingList[#1]{#2}{#3}{
        {x,x,x,x,x,x},
        {x,x,x,2,2,x},
        {x,x,2,2,2,2},
        {x,2,2,x,2,2},
        {x,2,2,2,x,2},
        {x,x,x,2,x,x}%
    }
}

\newcommand{\varGadgetRotated}[3][]{
    \kingList[#1]{#2}{#3}{
        {2,2,2,2,2}%
    }
}

\newcommand{\testGadgetRotated}[3][]{
    \kingList[#1]{#2}{#3}{
        {2,2,2}%
    }
}

\newcommand{\andGadgetRotated}[3][]{
    \kingList[#1]{#2}{#3}{
        {x,2,2,2,x,x},
        {x,x,2,2,2,x},
        {2,2,x,2,2,x},
        {x,2,2,2,x,x},
        {x,2,2,x,x,x},
        {x,x,x,x,x,x}%
    }
}

\newcommand{\orGadgetRotated}[3][]{
    \kingList[#1]{#2}{#3}{
        {x,x,x,2},
        {x,2,2,2},
        {x,2,2,2},
        {x,x,2,2},
        {x,x,x,x}%
    }
}

\newcommand{\gfunc}[3][]{
    \kingList[#1]{#2}{#3}{
        {x,x,2,2,x},
        {x,2,2,2,2},
        {2,2,x,x,2},
        {2,x,x,x,2},
        {x,2,2,2,2},
        {x,x,2,2,x}%
    }
}

\newcommand{\hfunc}[3][]{
    \foreach \x/\y/\c in {
        2/3/2,
        2/4/2,
        2/5/2,
        3/3/2,
        3/4/2,
        4/4/2,
        4/5/2,
        5/2/2,
        5/3/2,
        5/5/2,
        6/2/2,
        6/4/2,
        6/5/2,
        7/3/2%
    }{
        \kingnew[#1]{\x + #2}{\y + #3}{\c}
    }
}

\newcommand{\hfuncrot}[3][]{
    \kingList[#1]{#2}{#3}{
        {x,x,x,x,x},
        {x,x,2,2,2},
        {x,x,2,2,x},
        {x,x,x,2,2},
        {x,2,2,x,2},
        {x,2,x,2,2},
        {x,x,2,x,x}%
    }
}

\newcommand{\kingWireCrossing}{
    \begin{tikzpicture}
      \node (a) at (0,2)  { \( a \) };
      \node (b) at (-2,-2) { \( b \) };
      \node (f) at (0,0) [circle, draw] { CHECK-A };
      \node (g) at (0,-2) [circle, draw] { CHECK-B };
      \node (h) at (0,-4) [circle, draw] { OUT };
      \node (res) at (0,-6)  {$\tilde{a}$};
      \node (b_res) at (4,-2) { \( \tilde{b} \) };

      \node (new_b) at (2,-2) [rectangle, draw] { \( - \ \ \ \tilde{b} \ \ \ + \)};

      \draw [line width=0.25mm, -latex] (a) -- (f);
      \draw [line width=0.25mm, -latex] (b) -- (g);
      \draw [line width=0.25mm, -latex] (new_b.west) -- (f);
      \draw [line width=0.25mm, -latex] (new_b.west) -- (h);
      \draw [line width=0.25mm, -latex] (new_b) -- (b_res);
      \draw [line width=0.25mm, -latex] (f) -- (g);
      \draw [line width=0.25mm, -latex] (g) -- (h);
      \draw [line width=0.25mm, -latex] (h) -- (res);

    \end{tikzpicture}
}

\maketitle

\begin{abstract}
    We extend the study of the 2-\solo problem which was first introduced by Aravind, Misra, and Mittal in 2022.
    2-\solo is a single-player variant of chess in which the player must clear the board via captures such that only one piece remains, with each piece capturing at most twice.
    
    It is known that the problem is solvable in polynomial time for instances containing only pawns, while it becomes \NP-complete for instances restricted to rooks, bishops, or queens.
    In this work, we complete the complexity classification by proving that 2-\solo is \NP-complete if the instance contains only knights or only kings.
\end{abstract}

\section{Introduction}
\label{sec:intro}
\solo is a single-player variant of chess.
Implemented on chess.com \cite{chesscom}, given is a chess position consisting only of pieces of the same color.
The goal is to find a sequence of captures such that only one piece remains, following the rules of chess except that pieces are allowed to capture other pieces of the same color.
Each move has to be a capture, and each piece has a \emph{budget} that limits the number of times it can move. 
In $k$-\solo, each piece can capture at most $k$ times.
The original game on chess.com is implemented with $k = 2$.
(A difference to our variant is that in the chess.com version, the position has at most one king, and if so, it has to be the final remaining piece of the capture sequence.)

The complexity of deciding whether a \solo puzzle can be solved was first studied by Aravind, Misra, and Mittal~\cite{DBLP:journals/tcs/AravindMM24, aravind_et_al:LIPIcs.FUN.2022.5}, where the authors generalize the game to be played on a board of arbitrary size.
They focus on instances that contain only a single piece type and provide complexity results for pawns, queens, rooks, and bishops.
While there is a linear time algorithm for Pawn 2-\solo, they show \NP-completeness for Queen 2-\solo.
Moreover, they prove hardness for Bishop \solo and Rook \solo variants, where each piece receives an individual capture budget, which may be 0, 1, or 2.

The result for rooks was improved by Bilò, Di Donato, Gualà, and Leucci~\cite{DBLP:journals/tcs/BiloDGL25, bilo_et_al:LIPIcs.FUN.2024.4}, who showed that Rook \solo is already \NP-hard if every rook has budget 2. 
With a reduction given in~\cite{aravind_et_al:LIPIcs.FUN.2022.5}, this hardness result immediately extends to 2-\solo with only bishops.
For knights, it was shown in~\cite{DBLP:journals/tcs/BiloDGL25} that 11-\solo is \NP-hard.
For King \solo, no complexity result was known. 

In \cite{unlimited-moves}, a variant is considered where each piece has unlimited budget. 
While all single piece type variants are in \P, combining two piece types makes the problem \NP-hard again.
They also consider a variant where there is a piece (the king in the original game) which must be the final piece remaining.
This problem is again in \P{} for a single piece type, while the picture becomes more diverse when considering the problem where the uncapturable piece is of a different piece type than all the remaining pieces.
Various piece type combinations are discussed, some of which are decidable in polynomial time while others are \NP-hard.

Other variants change the topology of the board such as 1D-\solo (played on a single row) or \textsc{Graph Capture Game} (played on the edges of a given graph)~\cite{aravind_et_al:LIPIcs.FUN.2022.5}.

\subparagraph*{Our Contribution}
In this paper, we complete the complexity classification of 2-\solo with a single piece type.
Based on the first author's Master's thesis~\cite{kuehn2024solochess}, we prove \NP-hardness for both King 2-\solo and Knight 2-\solo by reduction from a well-known variant of 3-SAT.

\subparagraph*{Structure of the Paper} 
Section \ref{sec:prelim} defines some notation and introduces the SAT-variant used in our reductions.
In Section \ref{sec:king} we show \NP-hardness of King 2-\solo, while Knight 2-\solo is considered in Section \ref{sec:knight}.
Proofs omitted in Sections \ref{sec:king} and \ref{sec:knight} are given in Appendix \ref{appendix:king} and \ref{appendix:knight} respectively.
\section{Preliminaries}
\label{sec:prelim}
In this paper, we only consider 2-\solo, and a \emph{piece} is either a king or a knight.
Furthermore, each instance contains only pieces of the same type, i.e., only kings or only knights.
\begin{figure}
    \centering
    \resizebox{0.95\textwidth}{!}{
    \begin{newboard}{3}{5}
        \kingList[id]{0}{0}{
            {x,2,2,x,x},
            {2,2,x,2,x},
            {x,x,2,2,2}%
        }
        \draw[move] (A) -- (B) -- (E);
        \draw[move] (C) -- (D) -- (F);
    \end{newboard}
        \begin{newboard}{3}{5}
            \setcounter{pieceID}{5}
        \kingList[id]{0}{0}{
            {x,x,x,x,x},
            {x,x,x,0,x},
            {x,x,0,2,2}%
        }
        \draw[move] (H) -- (E);
    \end{newboard}
        \begin{newboard}{3}{5}
            \setcounter{pieceID}{5}
        \kingList[id]{0}{0}{
            {x,x,x,x,x},
            {x,x,x,1,x},
            {x,x,0,2,x}%
        }
        \draw[move] (G) -- (F) -- (E);
    \end{newboard}
    }
    \caption{An example instance of King 2-\solo, $2$-kings are in white, $1$-kings in half-red and $0$-kings in red. Middle: Configuration after $\threeSquareMove{A}{B}{E}$ and $\threeSquareMove{C}{D}{F}$. Right: Configuration after $\twoSquareMove{H}{E}$.
    After $\threeSquareMove{G}{F}{E}$, the instance is cleared with $E$ as final square.}
    \label{fig:example-instance}
\end{figure}
A \emph{configuration} is a setup of pieces on a board, together with a \emph{budget} for each piece.
A piece with budget $b \in \set{0, 1, 2}$ is also called a \emph{$b$-piece} (or \emph{$b$-king} or \emph{$b$-knight}).
In each move, a piece with positive budget captures another piece.
The move $\twoSquareMove{X}{Y}$ describes that the piece on square $X$ captures the piece on square $Y$.
If the same piece immediately continues capturing a piece on square $Z$, we use the notation $\threeSquareMove{X}{Y}{Z}$.
Applying a sequence of captures $s$ to a configuration $C$, we obtain a new configuration denoted by $C[s]$.
A capturing sequence $s$ \emph{clears} $C$ if there is only one piece remaining in $C[s]$.
In this case, we call the square of the last piece in $C[s]$ the \emph{final square} of $s$.
Figure~\ref{fig:example-instance} shows a clearable instance of King 2-\solo.

We observe that capturing sequences are monotone regarding budgets.
Let $C$ and $C'$ be configurations with the identical setup of pieces.
We say $C' \geq C$ if the budget of each piece in $C'$ is at least the budget of the corresponding piece in $C$.
\begin{observation}
    \label{obs:monotony}
    Let $C$ be a configuration and $s$ a capturing sequence.
    Let $C'$ be a configuration with $C' \geq C$.
    Then, $s$ is a valid capturing sequence for $C'$, and it is $C'[s] \geq C[s]$.
\end{observation}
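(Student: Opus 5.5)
We argue by induction on the length of the capturing sequence $s = m_1 m_2 \dots m_\ell$. The claim to carry through the induction is: for every prefix $s_i = m_1 \dots m_i$, the sequence $s_i$ is valid for $C'$, and $C'[s_i] \geq C[s_i]$. Here the two configurations have the same setup of pieces at every step, so the comparison $\geq$ makes sense.

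For the base case $i = 0$, both configurations are unchanged. We have $C'[s_0] = C'$ and $C[s_0] = C$, so $C' \geq C$ holds by assumption.

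For the inductive step, suppose $s_i$ is valid for both $C$ and $C'$, and that $C'[s_i] \geq C[s_i]$ with identical piece placements. Consider the next move $m_{i+1} = \twoSquareMove{X}{Y}$, which is valid in $C[s_i]$. Its validity depends on three things:
\begin{itemize}
\item squares $X$ and $Y$ are occupied;
\item the piece on $Y$ is reachable from $X$ by a king or knight move;
\item the piece on $X$ has positive budget.
\end{itemize}
The first two conditions depend only on the piece placement, which is the same in both configurations. For the third, the budget of the piece on $X$ in $C'[s_i]$ is at least its budget in $C[s_i]$, which is positive. Hence $m_{i+1}$ is also valid in $C'[s_i]$.

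Applying $m_{i+1}$ changes both configurations in the same way. The piece on $Y$ is removed, the piece from $X$ moves to $Y$, and its budget decreases by exactly one. All other pieces are untouched. The placements therefore stay identical. The budget inequality is preserved because every budget either stays the same or, for the moving piece, decreases by one in both configurations. So $C'[s_{i+1}] \geq C[s_{i+1}]$.

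Taking $i = \ell$ gives the observation. No step is a real obstacle. The only point to be careful about is that validity of a capture depends on budgets only through the condition ``positive budget'', and this condition is monotone in the budget.
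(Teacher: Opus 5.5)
Your induction over prefixes is correct and complete. The placements stay identical at every step, and budgets affect whether a capture is legal only through the condition that the capturing piece has positive budget, which is monotone in the budget. The paper states this observation without proof, and your argument is the routine verification it leaves implicit.
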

For every capturing sequence $s$ of some configuration $C$, there are pieces in $C$ that never move from their original square.
We call these pieces \emph{virtual $0$-pieces} with respect to $s$.
All other pieces are called \emph{virtual $2$-pieces}.
Observe that if we replace each virtual $0$-piece in $C$ with a piece with budget $0$, then $s$ is still a valid capturing sequence.

Before we fix properties on virtual $0$-pieces, it is useful to introduce some graph-theoretical notions.
We define a \emph{capture graph} $G$ of $C$ whose vertex set consists of pieces in $C$.
Two vertices are connected by an (undirected) edge if and only if the pieces can capture each other (disregarding the budget).
For kings and knights, every capture yields a subgraph of the original capture graph.
The \emph{neighborhood} $N(X)$ of a vertex set $X$ is the set of vertices not in $X$ that have an edge to a vertex in $X$.
A \emph{vertex cut} is a set of vertices whose removal disconnects the graph.
The following two lemmas provide properties of virtual $0$-pieces.

\begin{lemma}[restate=zeropieceprops,name=]
    \label{lem:0-piece-properties}
    Let $s$ be a clearing sequence with final square $f$.
    The following properties hold.
    \begin{enumerate}
        \item\label{lem:0-piece-properties:connected} The virtual $0$-pieces form a connected subgraph of the capture graph.
        \item\label{lem:0-piece-properties:vertex-cut} In every vertex cut of the capture graph, there is a virtual $0$-piece.
        \item\label{lem:0-piece-properties:inverse-neighborhood} For every set $X$ of virtual $0$-pieces that does \emph{not} contain $f$, the number of virtual $2$-pieces in the neighborhood of $X$ is at least $|X|$.
    \end{enumerate}
\end{lemma}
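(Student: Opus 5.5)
The plan is to view a clearing sequence $s$ as a rooted forest of captures. When a piece $P$ captures a piece $Q$, we record $Q$ as a child of $P$. The pieces that are never captured are exactly the final piece on $f$. It is convenient to track pieces by identity rather than by square. The key structural fact I will use is that a virtual $0$-piece stays on its original square until it is captured, while every virtual $2$-piece leaves its square at least once. All three properties will come from analysing which pieces can capture a virtual $0$-piece and how captured pieces leave the board.

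For property \ref{lem:0-piece-properties:connected}, I will show that every virtual $0$-piece other than the one finally on $f$ has a "path" of $0$-pieces leading to $f$. If the final piece is a virtual $2$-piece, then $f$ is not an original square of any virtual $0$-piece, so every virtual $0$-piece gets captured. Each capture of a $0$-piece $Z$ is made by a piece $P$ standing on some square adjacent (in the capture graph sense) to $Z$'s square. The natural argument is to consider the squares occupied throughout the game. The set of squares ever occupied by a virtual $0$-piece should stay "linked" to the rest because a piece arriving at a $0$-square came from a neighboring square.

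I expect the cleaner route is the following reformulation. The sequence of occupied-square sets is a process in which, in each move, one square is vacated and it must be adjacent to an occupied square. Since only one square remains at the end, the occupied set shrinks by vacating neighbours. I will try to prove connectivity of the $0$-set by contradiction. Suppose the $0$-pieces split into two components $A$ and $B$ separated by $2$-squares. Then I will argue, via the cut property, that some $2$-piece must move across. Here the key is that a $2$-piece can move at most twice, so it can travel a capture-graph distance of at most $2$ and ends adjacent to its start. Honestly this part needs care, and I expect property \ref{lem:0-piece-properties:connected} to be the main obstacle. I would first try to derive it from property \ref{lem:0-piece-properties:vertex-cut} applied to the induced subgraph.

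For property \ref{lem:0-piece-properties:vertex-cut}, let $S$ be a vertex cut that is entirely composed of virtual $2$-pieces. Then the components on each side of $S$ must eventually merge into one piece. But pieces only move along capture-graph edges onto squares of pieces (the capture graph is preserved as a subgraph), so some piece must traverse a square of $S$. Every square in $S$ is vacated at some point, and after it is vacated no piece can land there, since landing requires capturing a piece on it. Pieces on opposite sides can therefore only merge by capturing the $S$-pieces themselves before those leave. I will make this precise by tracking the last moment a square of $S$ is occupied.

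For property \ref{lem:0-piece-properties:inverse-neighborhood}, each $Z\in X$ (not at $f$) is captured exactly once, by a piece standing on a neighbouring square at that moment. That capturer is not a $0$-piece, because $0$-pieces never move; it is therefore a moving piece that arrived at or stood on a square in $N(X)$ and captured into $X$. Each capture of $Z$ places the capturer on $Z$'s square, and it must later be captured (since $f\notin X$) or move on. I will build an injection from $X$ into the virtual $2$-pieces of $N(X)$ by a Hall-type counting argument using budget $2$.
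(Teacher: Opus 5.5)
Your plan leaves (1) and (3) unproved. Both gaps close with one observation you never isolate: \emph{no capture ever lands on the original square of a virtual $2$-piece}. Before that piece leaves, a capture onto its square would remove it, so it would never move. After it leaves, the square is empty forever, since landing there requires capturing a piece on it. Consequently every capture lands on a square originally holding a virtual $0$-piece. In particular, the original occupant of $f$ is always a virtual $0$-piece, so your claim that $f$ need not be such a square is wrong. Likewise, a piece that moves twice need not end adjacent to its start; for knights it never does.

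With this observation, (1) follows as in the paper. The piece that finally vacates a virtual $0$-square $z\neq f$ captures into an adjacent square $z'$. That square is again a virtual $0$-square, and it stays occupied strictly longer than $z$. Iterating therefore yields a path of virtual $0$-pieces ending at $f$. Your idea of deriving (1) from (2) also works, but only if you apply (2) to the right set: if the virtual $0$-pieces induced a disconnected subgraph, then the set of \emph{all} virtual $2$-pieces would be a vertex cut containing no virtual $0$-piece. Your sketch of (2) matches the paper's, but it needs the same observation to finish. Since nothing ever lands on $S$, take a component $K$ of $G-S$ with $f\notin K$. The last square of $K$ to be vacated must be vacated by a capture into an occupied neighbour outside $K$, i.e.\ into $S$, which is impossible.

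For (3), the capturer of $Z$ is the wrong object to inject. It may already have captured once and stand on another square of $X$ (hence not on a square of $N(X)$), arrive at $z$ with budget $0$, and then be captured there itself. Your ``Hall-type counting'' is never specified, and nothing built from the capturer gives the injection. The missing idea is to consider the piece $P_z$ that finally \emph{vacates} $z$. Since the original occupant of $z$ never moves, $P_z$ arrived at $z$ by a capture, and it still has budget to leave. With budget at most $2$, its entry into $z$ was therefore its first capture, made from its original square, which is adjacent to $z$. Hence $P_z$ is a virtual $2$-piece in $N(X)$. The map $z\mapsto P_z$ is injective because a piece's first capture lands on only one square. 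This gives $|X|$ distinct virtual $2$-pieces in $N(X)$ directly, which is exactly the paper's argument.
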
%
\begin{proof}
    Observe that the final square $f$ holds a virtual $0$-piece (provided the instance contains more than a single piece).
    Any other virtual $0$-piece eventually reaches $f$ through a series of captures, so the path it takes consists entirely of virtual $0$-pieces.
    Thus, within the subgraph of virtual $0$-pieces, each piece is connected to $f$ by a path, which shows the claim.
    
    For the second property, assume that there is a vertex cut where every piece moves.
    But then, the capture graph is disconnected afterwards, and thus, $s$ is not a clearing sequence.

    Finally, let $X$ be a set of virtual $0$-pieces that do not contain $f$.
    Then, for each piece of $X$ its original square is cleared at some point.
    The only way to clear a square without capturing with the piece that is originally present, is to capture into the square with a 2-piece, followed by another capture by that piece.
    As a result, for each virtual $0$-piece of $X$, one neighboring $2$-piece captures it, yielding the claim.
\end{proof}

It is clear that the total number of captured pieces in a capturing sequence does not exceed the sum over all virtual budgets.
More specifically, the budget of the final piece of a clearing sequence is at most the difference between the sum over all virtual budgets and the total number of captured pieces.
We say that a clearing sequence \emph{loses} budget if the budget of the final piece is strictly less than the difference.
In particular, budget is lost if a $1$-piece is captured.
\begin{restatable}{lemma}{losebudget}
    \label{lemma:lose-budget}
    Let $s$ be a clearing sequence that does not lose budget, and let $X$ be a set of virtual $2$-pieces.
    The number of virtual $0$-pieces in the neighborhood of $X$ is at least $|X|$.
\end{restatable}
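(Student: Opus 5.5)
The plan is to set up an injective charging map from $X$ into the virtual $0$-pieces of $N(X)$, analogous to the proof of Lemma~\ref{lem:0-piece-properties}\,(\ref{lem:0-piece-properties:inverse-neighborhood}), but with the roles of the two kinds of pieces exchanged and the ``no budget loss'' hypothesis taking the place of the clearing argument.

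\textbf{Step 1: translate ``no budget loss''.} The virtual budgets sum to $2k$, where $k$ is the number of virtual $2$-pieces, and exactly $n-1$ captures happen. Any budget still held by a piece at the moment it is captured is destroyed. Equality between the budget of the final piece and $2k-(n-1)$ therefore forces the following: every virtual $2$-piece other than the final one makes both of its captures, and only afterwards is it captured, as a $0$-piece. A captured $1$-piece would violate this, consistent with the remark preceding the lemma. The consequence I will use is that each virtual $2$-piece $x$ performs a first capture $\twoSquareMove{x}{y_x}$, where $y_x$ is a square adjacent to $x$'s original square in the capture graph.

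\textbf{Step 2: define the charge.} Map each $x\in X$ to the \emph{original} occupant of $y_x$. If that occupant is a virtual $0$-piece, it lies in $N(X)$, or in $X$ itself, which is impossible because $X$ consists of virtual $2$-pieces. If the original occupant $z$ of $y_x$ is itself a virtual $2$-piece, it must have left $y_x$ before $x$ arrived. Then the piece actually captured by $x$ is some piece that moved into $y_x$, and by Step~1 it has budget $0$. In that case I would redirect the charge along the chain of first captures: follow $z$'s own first-capture square, and so on. Since first moves are totally ordered in time, this chain goes strictly backwards in time and must terminate at a virtual $0$-piece. However, that piece need not be adjacent to $x$ any more, so the redirection must be done more carefully. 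The cleaner choice is to charge $x$ to a virtual $0$-piece occupying a neighbour of $x$'s \emph{original} square at the time $x$ first moves. I would argue that such a piece exists because $x$ has no budget to waste on capturing a piece that still has budget, and a $0$-budget occupant is either an original virtual $0$-piece or a spent $2$-piece that arrived via a capture of a virtual $0$-piece at that very square.

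\textbf{Step 3: injectivity, which is the main obstacle.} A virtual $0$-piece is captured at most once, since once captured it is gone. Hence if every charge is realised by an actual capture of that virtual $0$-piece on its original square, by a piece of $X$ or by a spent piece sitting on a square adjacent to $X$, the charges are distinct. The delicate case is two pieces of $X$ whose first captures land on the same square at different times. There I plan to use the time ordering: process $X$ in order of first-move time and apply a Hall-type argument. For any $X'\subseteq X$, each square of $N(X')$ can absorb captures repeatedly only if it is refilled by pieces that were spent when they arrived. Counting captures into $N(X')$ against the number of virtual $0$-pieces there, using Step~1 so that no spent piece is captured ``for free'', should give the bound $|N(X')\cap V_0|\ge |X'|$. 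I expect this counting of refills to be the step that needs the most care.
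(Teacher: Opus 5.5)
\textbf{Verdict: genuine gap.} Your charging map is the right one: send each $x\in X$ to the original occupant of the square $y_x$ into which $x$ makes its first capture. This is essentially what the paper's short proof does. The proposal does not close, though. Step~2 spends its effort on a case that cannot occur. Step~3 only announces a Hall-type refill count that ``should give the bound''. Proving $|N(X')\cap V_0|\ge |X'|$ for every $X'\subseteq X$ is the lemma itself, so invoking Hall adds nothing. Two elementary facts are missing.

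\emph{First}, every move is a capture, so a square that has been vacated is never entered again. Suppose the original occupant $z$ of $y_x$ ever moved. Nothing could have captured into $y_x$ before that, since this would capture $z$. Nothing could capture into it afterwards either, since the square is then empty. This contradicts $x$ capturing into $y_x$. So your case ``$z$ left $y_x$ before $x$ arrived and another piece moved in'' is vacuous. The original occupant of any square that is ever captured into is a virtual $0$-piece. It lies in $N(X)$, because it is adjacent to $x$ in the capture graph and is not in $X$. No redirection along chains of first captures is needed.

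\emph{Second}, injectivity follows directly from the no-budget-loss hypothesis. Right after its first capture, $x$ sits on $y_x$ with budget $1$, so it cannot be captured there without losing budget. It therefore either moves on, which empties $y_x$ for good, or it is the final piece. Either way, no capture into $y_x$ happens after $x$'s first capture. Hence two distinct pieces of $X$ cannot make their first captures into the same square, because the earlier one blocks the later one. The ``refills'' you worry about are arrivals of spent pieces via their \emph{second} capture, and these never serve as charges.

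With these two facts, $x\mapsto$ (original occupant of $y_x$) is an injection of $X$ into the virtual $0$-pieces of $N(X)$, and the lemma follows with no counting argument.
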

\begin{proof}
    Let $X$ be a set of virtual $2$-pieces.
    Since $s$ does not lose budget, each piece in $X$ captures a virtual $0$-piece at some point and becomes a $1$-piece, which captures another virtual $0$-piece.
    Thus, no square can be captured by two distinct virtual $2$-pieces, and every virtual $2$-piece has to capture at some point.
    This means that $X$ has at least $|X|$ virtual $0$-pieces in its neighborhood.
\end{proof}

Consider now a capture graph that has a leaf $u$.
Observe that if $u$ is not the final square of some clearing sequence $s$, then the latter contains a capture of $u$ into its only neighbor, and no capture into $u$.
In particular, if $u$ has budget 0, it never leaves its square and we call it \emph{stranded}.
If the neighbor $v$ of $u$ has only one other neighbor $w$, we call $A = (u,v,w)$ a (2-)\emph{antenna} and we can generalize the above observation.
\begin{observation}
    \label{obs:intro:antenna}
    Let $A = (u,v,w)$ be an antenna and let $s$ be a clearing sequence whose final square is not $u$ or $v$.
    Then, $s$ contains the moves $\threeSquareMove{u}{v}{w}$, and it is $C[s] \leq C[s']$ where $s'$ is obtained from $s$ by shifting $\threeSquareMove{u}{v}{w}$ to the beginning of the sequence.
\end{observation}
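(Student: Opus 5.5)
We first show that $s$ contains the moves $\threeSquareMove{u}{v}{w}$, and then that shifting them to the front yields a configuration at least as good, in the sense of Observation~\ref{obs:monotony}. Since the capture graph only loses vertices and edges as captures happen, the neighbourhoods of $u$ and $v$ only shrink along $s$. The relevant capture graph is the current one, in which $u$ is a leaf adjacent only to $v$, and $v$ is adjacent only to $u$ and $w$.

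\textbf{Step 1: $u$ captures $v$, and nothing is captured into $u$.} Since $u$ is not the final square, the piece on $u$ is captured at some point or leaves $u$. By the leaf remark preceding the statement, the only way $u$ gets cleared is a capture $\twoSquareMove{u}{v}$. A capture $\twoSquareMove{v}{u}$ would have to be followed by a move out of $u$, which is possible only back to $v$. After $\twoSquareMove{v}{u}$, however, $v$ is empty, so no such move exists, and $u$ would be final, a contradiction. Hence $s$ contains $\twoSquareMove{u}{v}$.

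\textbf{Step 2: the piece then continues to $w$.} After $\twoSquareMove{u}{v}$, the square $u$ is empty, so the piece now on $v$ has $w$ as its only neighbour. Since $v$ is not final, $v$ must be cleared again, either by the piece on $v$ capturing $w$ or by $w$ capturing into $v$. In the latter case, the capturing piece must leave $v$ again, but its only neighbour is then the now-empty $w$ (or $u$, which is empty). So it would be stuck, and $v$ would be final, a contradiction. Hence the piece moves $\twoSquareMove{v}{w}$. This move needs budget, and the piece from $u$ started with at most $2$ and spent one, so it can make this second capture. We also argue that no other move into $v$ happens before it. Such a move could only come from $w$, and then the argument above again makes $v$ final. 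The same reasoning rules out moves into $u$ or $v$ at any time. Therefore $\threeSquareMove{u}{v}{w}$ occurs, possibly interleaved with other moves. If needed, we also handle the case where $v$'s original piece captures $w$ first, which leaves $u$ stranded; this is excluded because $u$ would then be final.

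\textbf{Step 3: shifting.} Let $s'$ be $s$ with these two moves (they are not necessarily consecutive in $s$) moved to the beginning. Before $\threeSquareMove{u}{v}{w}$ in $s$, the piece on $w$ may have been replaced by another piece $p$, since some piece may have captured into $w$. In $s'$, the piece from $u$ instead captures the original occupant of $w$, and later moves into $w$ capture it in turn. We verify validity move by move. Every move of $s$ not involving $u,v$ is unaffected except on square $w$. There the occupant in $s'$ after the shift is the $u$-piece, with budget $0$ or more, while in $s$ it was $p$ or the original $w$-piece. Here the main obstacle appears: a move of $s$ that is made by the piece standing on $w$ may not be possible in $s'$, because the $u$-piece there may have budget $0$.

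To deal with this, note that if in $s$ the piece sitting on $w$ later moves away, then in $s'$ we can instead let the piece that captured into $w$ continue. This works whenever it is the same piece, i.e. when the captures into $w$ come from pieces that then leave. If the original $w$-piece itself moves in $s$ before the antenna capture, then $\twoSquareMove{v}{w}$ in $s$ actually captures whichever piece arrived later, and the order of occupants on $w$ is just a relabelling. We therefore compare budgets per square rather than per piece. At the time $s$ performs $\twoSquareMove{v}{w}$, the occupant of $w$ in $s$ and the occupant in $s'$ have been reached by the same sequence of arrivals, with the antenna piece inserted earlier in $s'$. Since in both sequences the final occupant of $w$ after these moves is the last arriving piece with identical history, the budgets coincide from then on and $C[s]\le C[s']$ follows. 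The delicate part is showing that moving the antenna piece earlier never forces the departure from $w$ to be performed by a $0$-piece; this needs care but should follow because any departure from $w$ in $s$ is by a piece that arrived there or is original, and we can match them.
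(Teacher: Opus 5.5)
The paper gives no written proof here; it states this as an observation generalizing the leaf remark. Your Steps~1 and~2 are fine in substance. One sentence runs backwards, though: you write that the $u$-piece ``started with at most $2$ and spent one, so it can make this second capture.'' The second capture is forced because $v$ is not final. What follows from that is that the $u$-piece had budget $2$, so it ends on $w$ with budget $0$ after $\threeSquareMove{u}{v}{w}$. That is exactly the fact Step~3 needs.

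\textbf{The gap is in Step~3.} You correctly identify the only possible obstruction: a move of $s$ that departs from $w$ before the antenna capture $\twoSquareMove{v}{w}$. You never exclude it. Your sketched repair relies on a scenario that cannot occur. Every move is a capture onto an occupied square, so a square once emptied stays empty, and $w$ cannot be vacated and later re-entered. ``Letting a different piece continue'' would change $s'$, which the statement fixes. Your claim that the budgets on $w$ ``coincide from then on'' is also false in general. Only $\geq$ holds, which is why the statement reads $C[s]\le C[s']$.

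\textbf{The missing idea:} in $s$, no piece leaves $w$ before $\twoSquareMove{v}{w}$. Otherwise $w$ is empty from then on. If $\twoSquareMove{u}{v}$ has not happened yet, $\{u,v\}$ is cut off from all other pieces. If it has, the $u$-piece on $v$ is cut off. Either way the final square is forced into $\{u,v\}$, a contradiction.

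With this, the argument closes as follows:
\begin{itemize}
\item Before the position of $\twoSquareMove{v}{w}$ in $s$, the square $w$ serves only as a capture target.
\item Such captures remain valid in $s'$. Validity depends only on the capturer, which sits outside $\{u,v,w\}$ with the same budget in both sequences, and on $w$ being occupied.
\item At that position, the configurations of $s$ and $s'$ agree outside $\{u,v,w\}$, and $u,v$ are empty in both.
\item In $s$, $w$ holds a $0$-piece. In $s'$, it holds either that same $u$-piece or the last piece that captured into $w$, with budget at least $0$.
\end{itemize}
So the $s'$-configuration dominates at that point. Observation~\ref{obs:monotony} then gives validity of the remainder of $s'$ and $C[s]\le C[s']$.
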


The definition and observation generalizes to antennae of arbitrary length.
\begin{observation}
\label{obs:intro:long-antenna}
    Let $A = (v_0, \dots, v_b)$ be an antenna in a configuration $C$, and let the leaf $v_0$ of the antenna have a budget of $b'$ with $b' < b$.
    Then there is no capturing sequence that empties each square of $A$.
    Thus, any clearing sequence $s$ of $C$ has its final square in $\{v_0, \dots, v_{b'}\}$, without loss of generality on $v_{b'}$.
\end{observation}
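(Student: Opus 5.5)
The plan is to prove the first claim with a monotone invariant on the \emph{leftmost occupied square} of the antenna. We then derive the statement about the final square from it, and finally normalise the final square with an exchange argument in the spirit of Observation~\ref{obs:intro:antenna}.

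First we record the structure. In the capture graph $v_0$ is a leaf, each $v_i$ with $1 \le i \le b-1$ has exactly the neighbours $v_{i-1}$ and $v_{i+1}$, and only $v_b$ touches the rest of the board. Captures happen along edges of the original capture graph. So a piece can reach $v_i$ with $i<b$ only by walking along the path, one capture per step, and a piece standing on $v_i$ can leave only by capturing into an occupied neighbour on the path. For a configuration reached during a capturing sequence, let $i$ be the smallest index with $v_i$ occupied. Let $r$ be the remaining budget of the piece on $v_i$, and let $R_j$ denote the remaining budgets of the pieces on $v_j$ for $j>i$.

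The key step is an invariant of the form ``the leftmost occupant satisfies $i + r \le b'$, or the square $v_{i+1}$ is empty and can only be refilled at a cost''. More precisely, I would use the potential
\[
\Phi = \max\{\, j + (\text{remaining budget of the piece on } v_j) : v_j \text{ occupied},\ j \le k\,\},
\]
where $k$ is the end of the maximal occupied prefix block starting at $i$. I would show that $\Phi \le b'$ initially, since the block then consists of $v_0$ alone or of original pieces whose relevant reach is bounded by that of $v_0$. I would then show that no capture increases $\Phi$ beyond what is needed to keep $v_0,\dots,v_b$ from being emptied. The main obstacle is the case of inward captures. For example, a $2$-piece on $v_1$ capturing a $0$-piece on $v_0$ raises the leftmost budget but leaves $v_1$ empty. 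The resulting piece is stranded until $v_1$ is refilled, and a refill comes only from the right and costs one unit of budget per step. I expect to handle this case by an exchange argument that reorders moves inside the antenna, as in Observation~\ref{obs:intro:antenna}. The aim is to assume without loss of generality that no capture ever moves toward $v_0$ in a sequence that empties the whole antenna. Under that assumption the piece on the leftmost square simply walks outward, and emptying all of $v_0,\dots,v_b$ forces it to make $b$ captures while having at most $b' < b$ budget, a contradiction.

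For the second part, suppose a clearing sequence $s$ ends on a square $f \notin \{v_0,\dots,v_{b'}\}$. If $f$ lies outside the antenna, then every square of $A$ is empty at the end, contradicting the first part. If $f = v_j$ with $j > b'$, I apply the same invariant to the prefix $(v_0,\dots,v_j)$. Emptying $v_0,\dots,v_{j-1}$ with the last piece arriving on $v_j$ would require the walk from $v_0$ to perform $j > b'$ captures, which is impossible. Finally, for the ``without loss of generality'' part, suppose the final square is $v_i$ with $i \le b'$. By the exchange argument, the original piece of $v_0$ first walks outward as far as possible, so we may alter the final segment so that it finishes on $v_{b'}$. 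By Observation~\ref{obs:monotony} the changed configuration dominates the old one, so the modified sequence still clears.
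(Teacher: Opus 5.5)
Your closing step is the right one: if no capture inside $A$ ever goes towards $v_0$, the piece starting on $v_0$ is never captured and has to vacate $v_0,\dots,v_{b-1}$ itself, which takes $b>b'$ captures. But you never establish this premise, and the route you sketch towards it does not work. The potential $\Phi$ already fails at the start: initially all of $A$ is occupied, so the block from $v_0$ contains $v_1$, and if $v_1$ is a $2$-piece then $\Phi\ge 3>b'$. The claim that no capture increases $\Phi$ ``beyond what is needed'' cannot be checked, and the exchange argument meant to remove inward captures is only announced. The real issue is your picture of a vacated $v_1$ being ``refilled at a cost''. That never happens, because every move is a capture onto an \emph{occupied} square, so a vacated square stays empty forever. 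This is the missing idea, and with it no reordering is needed. A capture $\twoSquareMove{v_{k+1}}{v_k}$ with $k<b$ empties $v_{k+1}$ permanently. Since $v_{k+1}$ is the only neighbour of $\{v_0,\dots,v_k\}$, no piece can enter or leave that set afterwards, and as every capture keeps the capturing piece, some square among $v_0,\dots,v_k$ stays occupied forever. So a sequence that empties $A$ contains no inward capture at all (this is forced, not a normalisation), and your walk argument finishes. Equivalently, the invariant in your first sentence, made precise as ``$i\le b'$, and $i+r\le b'$ or $v_{i+1}$ is empty'', is preserved by a two-line check. Since $i\le b'<b$, the piece on $v_i$ can only interact with $v_{i+1}$. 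If $v_{i+1}$ is empty, that piece is frozen. Otherwise $\twoSquareMove{v_i}{v_{i+1}}$ keeps $i+r$, and $\twoSquareMove{v_{i+1}}{v_i}$ empties $v_{i+1}$ without changing $i$. So some $v_i$ with $i\le b'$ is occupied at all times, which gives both the first claim and the location of the final square. This is just the leaf reasoning behind Observation~\ref{obs:intro:antenna} iterated along the path, which is how the paper intends the statement (it gives no separate proof).

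The final ``without loss of generality'' step is also unsupported. Observation~\ref{obs:monotony} compares configurations with the same set of pieces, and moving the walk of the $v_0$-piece $P$ to the front changes which pieces are present, so ``dominates'' has no meaning there. A local change of the last moves suffices. If the final square is $v_i$ with $i<b'$, then $P$ reaches $v_i$ with budget $b'-i\ge 1$. The sequence ends in one of two ways. Either some $Q$ captures $P$ from $v_{i+1}$, or $Q$ captures from $v_{i+1}$ into $v_i$ and then $P$ captures $Q$ from $v_{i-1}$. In the first case let $P$ capture $Q$ instead. In the second, let $P$ enter $v_i$ before $Q$ moves and then capture $Q$ on $v_{i+1}$. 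Either way the final square becomes $v_{i+1}$, and iterating reaches $v_{b'}$.
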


\subsection*{\sat}
In this section we briefly discuss the 3-SAT-variant which we use for our reductions.
A \sat instance consists of a set of variables $U$ and a set of clauses $C$.
Each clause contains two or three literals, and each variable occurs exactly three times in the formula.
Tovey showed that this variant of SAT remains \NP-complete \cite[Thm. 2.1]{TOVEY198485}.
It is easy to show that we may assume that of these three variable occurrences, two are negative and one is positive.

In both reductions, we assume a rectilinear embedding of the \sat instance is given.
The rectilinear embedding is constructed as follows.
There are variables with three outgoing edges and binary OR- and AND-gates, each placed at integer coordinates.
They are connected by rectilinearly drawn directed edges, with turns and crossings at integer coordinates.

\begin{figure}
    \centering
    \includegraphics[page=2,width=0.5\textwidth]{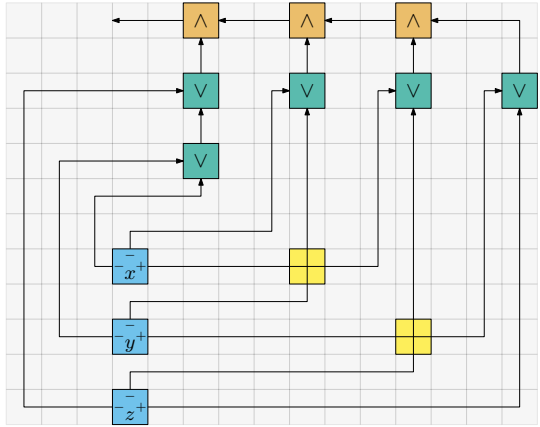}
    \caption{A rectilinear embedding for \sat instance $I = (U, C)$ with $U = \set{x, y, z}$ and $C = \set{(\neg x, \neg y, \neg z), (\neg x, \neg y), (x, \neg z), (y, z)}$. The crossings are indicated by yellow tiles.}
    \label{fig:sat-circuit}
\end{figure}
Then, the embedding is discretized into tiles for the logic gates, variables, straight wires, wire turns and crossings.
See Figure \ref{fig:sat-circuit} for an example embedding.
These tiles fit in a square grid, with variables placed in the first column in the lower rows at sufficient distance to each other.
Clauses are placed to the top right of the variables in distinct columns at sufficient distance.
In particular, the top row contains an AND-tile for each clause except the top-right clause.
Below there are two rows for the up to two OR-tiles of each clause.
Each OR-tile is connected to the tile immediately above it, except the top-right OR-tile, which is connected to the AND-tile to its top left.
AND-tiles are connected to the AND-tile to their left, except for the leftmost AND-tile which outputs the result.
Finally, the first two literals of each clause are connected to the corresponding (lower) OR-tile and, if required, the third literal is connected to the corresponding higher OR-tile.
This placement allows for the edges to be drawn such that they do not go downwards and every pair of edges crosses at most once.
Then it is easy to see that the resulting directed graph with gates, variables and crossings as vertices is acyclic.
The entire embedding is scaled such that each tile has size $\tileSize \times \tileSize$.

\section{King 2-\solo}
\label{sec:king}
We show that King 2-\solo is \NP-hard by reduction from \sat.
Given is a \sat instance with a formula $\Phi$ together with a drawing as described in the previous section.
We now construct an instance $I_\Phi$ of King 2-\solo that mimics the drawing of $\Phi$ such that $\Phi$ is satisfiable if and only if $I_\Phi$ is clearable.
For this, we have two types of \emph{variable assignment} gadgets VAR and 2VAR, AND-, OR-, and crossing gadgets (X), as well as a transportation gadget (so-called \emph{wire}) and a checking gadget (1-TEST).
%
%
\begin{figure}
    \hspace{1.1cm} 1-TEST \qquad VAR \hspace{0.5cm} wire \hspace{1.5cm} OR \hspace{1.8cm} AND

    \centering
    \resizebox{!}{3.7cm}{
        \begin{newboard}{7}{3}
            \draw[inputsquare] (2,1) rectangle ++(1,1);
            \kingnew[]{2}{1}{2}
            \kingnew[]{2}{2}{2}
            \kingnew[]{2}{3}{2}
            \kingnew[]{2}{4}{2}
            \draw[gadgetoverlay] (2,2) rectangle ++(1,3);
            \node at (2.5,2.5) {\Huge $=$};
            \node at (2.5,4.5) {\Huge $1$};
        \end{newboard}
    }
    \resizebox{!}{3.7cm}{
        \begin{newboard}{7}{2}
            \begin{scope}[xshift=5cm, yshift=0cm, x={(0cm,-1cm)}, y={(-1cm,0cm)}]
            \draw[outputsquare] (2,3) rectangle ++(1,1);
            \draw[outputsquare] (6,3) rectangle ++(1,1);
            \varGadget{0}{0}
        \end{scope}
            \draw[gadgetoverlay] (1,3) rectangle ++(1,3);
            \node at (1.5,3.5) {\Huge $+$};
            \node at (1.5,5.5) {\Huge $-$};
        \end{newboard}
    }
        \resizebox{!}{3.7cm}{
        \begin{newboard}{7}{4}
            \draw[inputsquare] (2,1) rectangle ++(1,1);
            \draw[outputsquare] (2,7) rectangle ++(1,1);
            \kingnew[]{2}{1}{2}
            \wireHorizontal{0}{0}
            \wireHorizontal{0}{2}
            \begin{scope}[xshift=5cm, x={(0cm,-1cm)}, y={(-1cm,0cm)}]
                \wireConnector{5}{-1}
            \end{scope}
            \draw[gadgetoverlay] (2,2) rectangle ++(2,4);
            \draw[gadgetoverlay] (1,6) rectangle ++(3,1);
        \end{newboard}
    }
    \resizebox{!}{3.7cm}{
        \begin{newboard}{7}{4}
            \draw[inputsquare] (1,3) rectangle ++(1,1);
            \draw[inputsquare] (3,1) rectangle ++(1,1);
            \draw[outputsquare] (4,5) rectangle ++(1,1);
            \kingnew[]{1}{3}{2}
            \kingnew[]{3}{1}{2}
            \orGadget{0}{0}
            \kingnew[]{3}{6}{2}
            \kingnew[]{4}{6}{2}
            \kingnew[]{3}{7}{2}
            \kingnew[]{4}{7}{2}
            \draw[gadgetoverlay] (2,2) rectangle ++(3,3);
            \node at (3.5,3.5) {\Huge $\lor$};
        \end{newboard}
    }
    \resizebox{!}{3.7cm}{
        \begin{newboard}{7}{6}
            \draw[inputsquare] (1,4) rectangle ++(1,1);
            \draw[inputsquare] (4,1) rectangle ++(1,1);
            \draw[outputsquare] (4,6) rectangle ++(1,1);
            \kingnew[]{1}{4}{2}
            \kingnew[]{4}{1}{2}
            \andGadget{0}{0}
            \kingnew[]{3}{7}{2}
            \kingnew[]{4}{7}{2}
            \draw[gadgetoverlay] (2,2) rectangle ++(5,4);
            \node at (4.5,4) {\Huge $\land$};
        \end{newboard}
    }

    \centering

    \vspace{0.5cm}
    \includegraphics{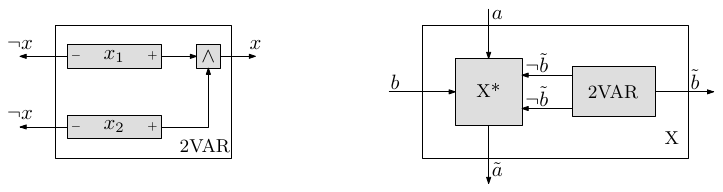}
    \caption{Gadgets from left to right: 1-TEST, VAR, a wire with connector, OR, AND, 2VAR, X. The input squares are marked in blue, the output squares in purple.}
    \label{fig:king:gadgets}
\end{figure}
Each gadget has up to two input and up to two output squares.
All gadgets are depicted in Figure~\ref{fig:king:gadgets}.
We call the wire-, VAR-, OR-, AND-, X* gadget (part of the X gadget), and 1-TEST gadget \emph{atomic gadgets}.
Each non-transportation atomic gadget is placed (possibly rotated or mirrored as needed) in the center of the corresponding tile in the given embedding.
They are connected via wires in a way that adjacent atomic gadgets share a single square, which is an output square in one gadget and an input square in the other.
Other than that, the gadgets do not touch or overlap with other gadgets.
Wires are placed along the edges of the embedding.
Within a tile, wires are extended to connect the tile input and output with the gadget inputs and outputs, respectively.
The size of the tiles for the embedding is chosen such that this is always possible.
The 1-TEST gadget is placed such that its input square is the output square of the wire that corresponds to the circuit output.
See Figure~\ref{fig:king:full-instance} for a full example instance.
\begin{figure}
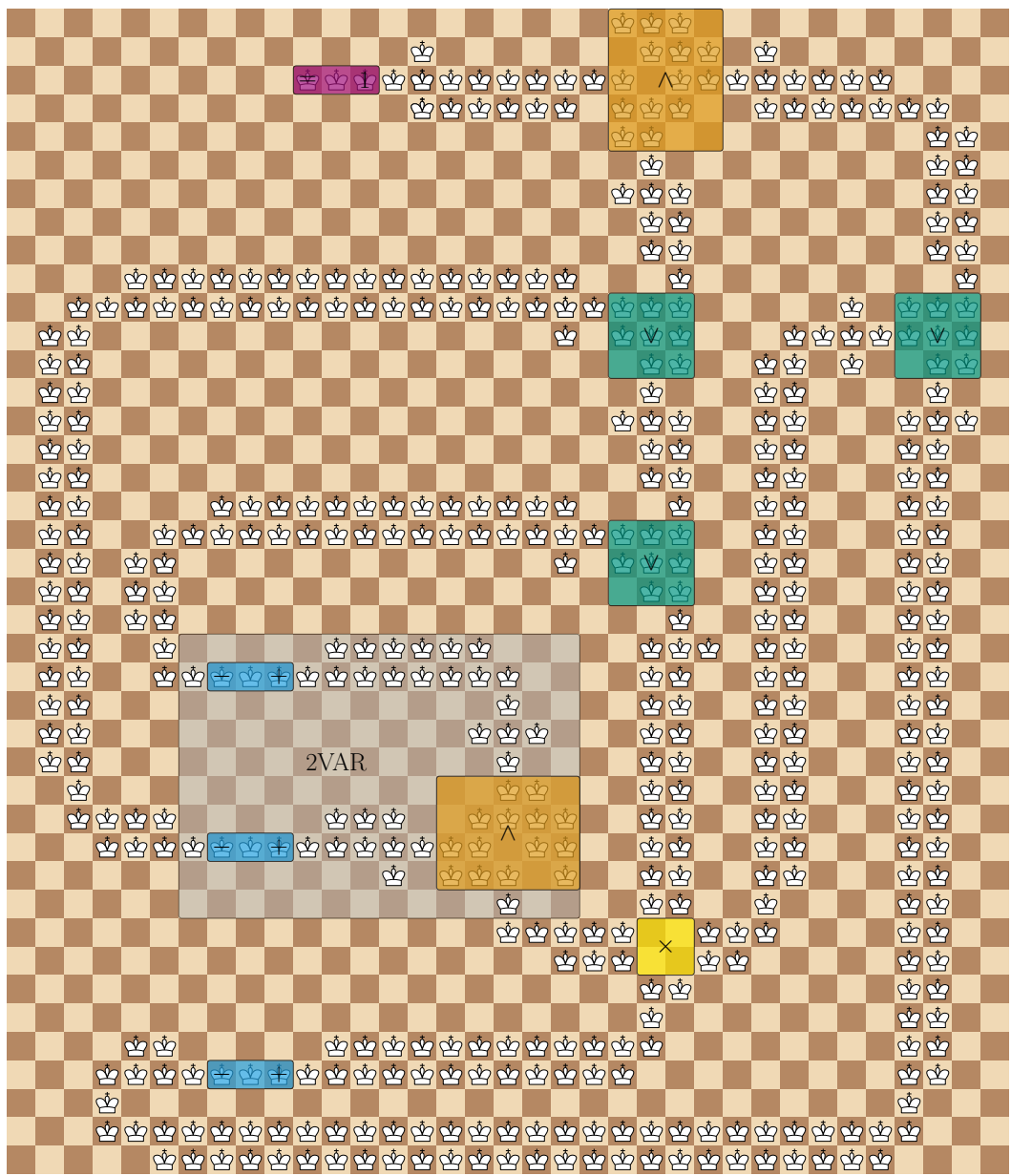

    \centering
    \resizebox{\textwidth}{!}{
        \begin{newboard}{41}{35}

            \draw[gadgetoverlay, opacity=0.5] (7,23) rectangle ++(14,10);
            \node at (12.5, 27.5) {\Huge 2VAR};

            \varGadgetRotated{6}{23}
            \draw[gadgetoverlay, fill=kit-cyan] (8,24) rectangle ++(3,1);
            \node at (8.5,24.5) {\Huge $-$};
            \node at (10.5,24.5) {\Huge $+$};

            \varGadgetRotated{6}{29}
            \draw[gadgetoverlay, fill=kit-cyan] (8,30) rectangle ++(3,1);
            \node at (8.5,30.5) {\Huge $-$};
            \node at (10.5,30.5) {\Huge $+$};

            \andGadget{14}{26}
            \draw[gadgetoverlay, fill=kit-orange] (16,28) rectangle ++(5,4);
            \node at (18.5,30) {\Huge $\land$};

            \wireConnectorRight{13}{28}
            \kingRectangle{11}{28}{2}{2}

            \kingRectangle{11}{22}{6}{2}
            \kingRectangle{17}{23}{1}{2}
            \wireConnectorDown{16}{25}

            \varGadgetRotated{6}{37}
            \draw[gadgetoverlay, fill=kit-cyan] (8,38) rectangle ++(3,1);
            \node at (8.5,38.5) {\Huge $-$};
            \node at (10.5,38.5) {\Huge $+$};

            \orGadgetRotated{20}{17}
            \draw[gadgetoverlay, fill=kit-green] (22,19) rectangle ++(3,3);
            \node at (23.5,20.5) {\Huge $\lor$};

            \orGadgetRotated{20}{9}
            \draw[gadgetoverlay, fill=kit-green] (22,11) rectangle ++(3,3);
            \node at (23.5,12.5) {\Huge $\lor$};

            \orGadgetRotated{30}{9}
            \draw[gadgetoverlay, fill=kit-green] (32,11) rectangle ++(3,3);
            \node at (33.5,12.5) {\Huge $\lor$};

            \andGadgetRotated{20}{0}
            \draw[gadgetoverlay, fill=kit-orange] (22,1) rectangle ++(4,5);
            \node at (24,3.5) {\Huge $\land$};

            \testGadgetRotated{10}{2}
            \draw[gadgetoverlay, fill=kit-purple] (11,3) rectangle ++(3,1);
            \node at (11.5,3.5) {\Huge $=$};
            \node at (13.5,3.5) {\Huge $1$};

            \wireConnectorLeft{13}{1}
            \kingRectangle{14}{2}{6}{2}

            \wireConnectorUp{21}{5}
            \kingRectangle{22}{7}{2}{2}
            \wireConnectorUp{21}{13}
            \kingRectangle{22}{15}{2}{2}

            \kingRectangle{5}{22}{1}{2}
            \kingRectangle{4}{19}{2}{3}
            \kingRectangle{5}{18}{2}{1}
            \kingRectangle{7}{17}{12}{2}
            \wireConnectorRight{19}{17}

            \kingRectangle{3}{28}{3}{2}
            \kingRectangle{2}{27}{1}{2}
            \kingRectangle{1}{11}{2}{16}
            \kingRectangle{2}{10}{2}{1}
            \kingRectangle{4}{9}{16}{2}
            \wireConnectorRight{19}{9}

            \kingRectangle{32}{4}{2}{5}
            \kingRectangle{31}{3}{2}{1}
            \kingRectangle{27}{2}{4}{2}
            \wireConnectorLeft{25}{1}

            \wireConnectorUp{31}{13}
            \kingRectangle{31}{15}{2}{23}
            \kingRectangle{31}{38}{1}{2}
            \kingRectangle{5}{39}{26}{2}
            \kingRectangle{3}{39}{2}{1}
            \kingRectangle{3}{37}{1}{2}
            \kingRectangle{4}{36}{2}{2}

            \kingRectangle{11}{36}{11}{2}
            \kingRectangle{22}{35}{1}{2}
            \kingRectangle{22}{34}{2}{1}
            \kingRectangle{22}{23}{2}{9}
            \wireConnectorUp{22}{21}

            \kingRectangle{17}{32}{2}{1}
            \kingRectangle{19}{32}{3}{2}
            \kingRectangle{24}{32}{2}{2}
            \kingRectangle{26}{31}{1}{2}
            \kingRectangle{26}{12}{2}{19}
            \kingRectangle{27}{11}{2}{1}
            \wireConnectorRight{29}{10}

            \draw[gadgetoverlay, fill=kit-yellow] (23,33) rectangle ++(2,2);
            \node at (24,34) {\Huge $\times$};

        \end{newboard}
    }
    \caption{An example instance $I_\Phi$ for \sat-instance $\Phi$ with variables $\set{x, y}$ and clauses $\set{\set{\neg x, \neg x, y}, \set{x, \neg y}}$.
    The crossing gadget (in yellow) is omitted due to its size.
    Moreover, $I_\Phi$ is constructed without considering tiles to save space.}
    \label{fig:king:full-instance}
\end{figure}


It is useful to assume that in a clearing sequence of $I_\Phi$, the gadgets are cleared in a specific order, starting with the VAR gadgets and following the wires towards the 1-TEST gadget.
We later show that every clearable instance indeed admits such a clearing sequence.
With such an order, we can think of the capturing of kings as a \emph{signal} that goes from each VAR gadget to the 1-TEST gadget.
The budget of the king on an input square of a gadget right before making moves within the gadget indicates an incoming signal.
Analogously, the budget of the king on an output square right after clearing the corresponding gadget indicates an outgoing signal.
We note that the only two possible budgets for both input and output are $0$ and $1$.
These signal values represent false and true, respectively.
In the following, we describe the individual gadgets in more detail.

The 1-TEST gadget checks if its input is $1$.
It is easy to verify that not all kings in the gadget can be captured and thus, it contains the final square in every clearing sequence.
In particular, it is only clearable if its input is a $1$-signal.

The VAR gadget
is connected to a wire on each end, one side representing the positive literal and the other the negative literal.
We note that in every clearing sequence, the middle king of each VAR gadget captures either to the top or to the bottom, producing a $0$-signal in the direction of the capture and a $1$-signal on the other side.
The 2VAR gadget produces two copies of the value on the side representing the positive literal.
As one would expect, variable gadgets never produce $1$-signals on both the positive and the negative side, which we show in the proof of correctness.

The wire gadget propagates signals.
A wire starts with a single input square (the output square of the previous gadget) and ends with a \emph{connector}, which consists of a row with three kings and one row with one king, the output square.
The output square is an input square of the subsequent gadget.
A wire can be placed both vertically and horizontally, and corner gadgets (see Figure~\ref{fig:king:wire-ok}) allow a wire to turn.

The OR and AND gadget both have two input squares and one output square.
The X gadget, which consists of a 2VAR gadget and an X* gadget (omitted due to its size), has two pairs consisting of an input and an output square, and each output square should output the incoming signal at its corresponding input square.

\subparagraph*{Correctness of Reduction}

As the correctness of gadgets requires significant case work, it is convenient to assume for now that the wire, OR, AND, and X gadget work as expected.
This means that they can output the correct value for all possible inputs and cannot do better.
To define this formally, we introduce the notion of \emph{normalized} capturing sequences.
A capturing sequence of a gadget is normalized if it clears the full gadget except for the output squares.
Note that in a gadget with only one output square, every normalized capturing sequence is also a clearing sequence of the gadget (but not vice versa).
Let $f$ be a function that maps $i$ input signals to one output signal.
We say that a gadget with one output \emph{computes} function $f$ if the following holds for every possible combination $X$ of input signals:
\begin{enumerate}
    \item There is a normalized capturing sequence that outputs $f(X)$.
    \item Every normalized capturing sequence outputs $f(X)$ or a signal with lower value.
\end{enumerate}
This notion can be easily extended to the X gadget, which has two outputs.
There should be a normalized capturing sequence such that both outputs are as expected and for every other capturing sequence, no individual output is better than expected.

Before proving that King 2-\solo is \NP-hard, we show that if $I_\Phi$ is clearable, then it can be cleared in a specific order, which we call a \emph{canonical} clearing sequence.
For this, we define the \emph{gadget graph}, whose vertices consist of the atomic gadgets.
There is a directed edge from a gadget $A$ to gadget $B$ if an output square of $A$ is an input square of $B$.
Note that the gadget graph is acyclic and thus admits a topological order of the gadgets where the 1-TEST gadget is last.
In a canonical clearing sequence, the moves are ordered as follows.
\begin{enumerate}
    \item VAR gadgets are cleared first.
    \item If a move is made within an atomic gadget, then all previous atomic gadgets in the gadget graph are already cleared.
    \item The captures within an atomic gadget form a normalized capturing sequence that is consecutive in the clearing sequence.
\end{enumerate}
\begin{restatable}{lemma}{canonicalsequence}
    \label{lemma:canonical-sequence}
    If $I_\Phi$ is clearable, then it admits a canonical clearing sequence.
\end{restatable}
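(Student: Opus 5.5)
We plan to transform an arbitrary clearing sequence $s$ of $I_\Phi$ into a canonical one by a sequence of reorderings, each of which keeps the sequence valid and never lowers any budget, so that Observation~\ref{obs:monotony} applies throughout. The key structural fact is that adjacent atomic gadgets share exactly one square, an output square of one and an input square of the other. Apart from these squares the gadgets are pairwise non-adjacent, so in the capture graph every shared square is a cut vertex separating the upstream part from the downstream part. By Lemma~\ref{lem:0-piece-properties}(\ref{lem:0-piece-properties:vertex-cut}), every such shared square holds a virtual $0$-piece with respect to $s$. Moreover, the final square lies in the 1-TEST gadget, since that gadget cannot be fully emptied.

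First, we show that the virtual $0$-pieces on the shared squares force every capture to stay inside a single atomic gadget. A king on a shared square never moves, so it cannot carry material from one gadget into another. Any king that moves must capture within its own atomic gadget, including possibly into the shared boundary square. A king captured into the shared square $q$ between $A$ and $B$ may continue either into $A$ or into $B$. Since the final square lies downstream in the 1-TEST gadget, the contents of upstream gadgets must eventually flow through $q$. We will argue via the orientation of the gadget graph (acyclic, with all paths leading to the 1-TEST gadget) that every piece of material entering $q$ from $A$ subsequently leaves $q$ into $B$ and never in the reverse direction. The argument: if a king entered $q$ from $B$ and then moved back into $A$, the region of $A$ would have to be emptied again through $q$ later. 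A counting argument with Lemma~\ref{lem:0-piece-properties}(\ref{lem:0-piece-properties:inverse-neighborhood}) and Lemma~\ref{lemma:lose-budget} then shows that this yields no gain, so such moves can be replaced.

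Second, we partition the moves of $s$ according to the atomic gadget in which they take place. A move into a shared square $q$ from $A$ is assigned to $A$, and a move out of $q$ into $B$ is assigned to $B$. We then reorder $s$ following a topological order of the gadget graph: VAR gadgets first, and within each gadget the moves keep their relative order. The reordered sequence is valid for the following reasons.
\begin{itemize}
\item The moves inside $A$ touch only the squares of $A$ together with its boundary squares.
\item The moves of a downstream gadget $B$ depend on upstream events only through the state of its input square $q$: whether $q$ is occupied and with what budget.
\item When all of $A$ is executed first, $q$ holds exactly its final post-$A$ piece, whose budget is at least the budget the piece on $q$ had at the moment $B$ used it in $s$. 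Budgets only decrease as a piece moves, and later moves of $A$ into $q$ replace the piece rather than lower its budget.
\end{itemize}
Observation~\ref{obs:monotony} then gives validity and budget domination. Restricted to $A$, the moves clear everything except the output squares, hence they form a normalized capturing sequence, which is property~3. Properties~1 and~2 follow from the chosen order.

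The main obstacle is the first step: excluding or repairing interleavings in which $B$ acts on $q$ before $A$ has finished, for instance when a king of $B$ captures into $q$ and then $A$ captures it. We intend to handle this by a local case analysis on the connector and input squares, which have very few neighbors (the connector design is chosen to make this true). In each such case we either show the pattern is impossible, because some square of $B$ would be stranded, or exchange it for an equivalent pattern that respects the gadget direction without lowering budgets. Antenna-type arguments (Observations~\ref{obs:intro:antenna} and~\ref{obs:intro:long-antenna}) should settle the boundary kings of the wires.
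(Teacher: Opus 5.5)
Your proof rests on the claim that every shared square is a cut vertex of the capture graph of $I_\Phi$, so that Lemma~\ref{lem:0-piece-properties}.\ref{lem:0-piece-properties:vertex-cut} puts a virtual $0$-king on it for an arbitrary clearing sequence $s$. For the initial instance this is false. The gadget graph is acyclic as a directed graph, but it is not a tree as an undirected graph, because every VAR gadget has two outgoing sides. For example, take a variable whose positive and negative occurrences lie in different clauses. This gives an undirected cycle through its VAR gadget, the wires and OR gadgets of both clauses, and the chain of AND gadgets. No shared square on this cycle disconnects the capture graph, and acyclicity of the orientation does not repair this.

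The missing idea, which is how the paper proceeds, is to first reorder $s$ so that all VAR gadgets are cleared at the very beginning. The middle king captures twice to one side and a neighboring king captures once to the other side, which severs the two sides. In the resulting configuration, every atomic gadget other than 1-TEST has exactly one outgoing edge. An undirected path from a gadget $A$ to 1-TEST that avoids the output edge of $A$ would therefore have to pass through a gadget with two outgoing edges. Hence each output king is a cut vertex, and so a virtual $0$-king for the remainder of the sequence.

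Second, the step you yourself call the main obstacle is not carried out. The claim that material entering $q$ never flows back upstream is supported only by an unspecified counting argument that ``yields no gain''. The interleavings in which a king of $B$ acts on $q$ before $A$ is finished are deferred to a case analysis you do not give. Your reordering step depends on exactly this, in particular the claim that the post-$A$ piece on $q$ is at least as strong as whatever $B$ used in $s$.

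The paper does not try to exclude such interleavings; it postpones them. Once output kings are known to be virtual $0$-kings separating an in-tree of gadgets, moves of $B$ not involving its input king can be delayed until the upstream gadgets are cleared. A capture by a king of $B$ into its input square is likewise never worse when delayed. With these two ingredients, your topological reordering via Observation~\ref{obs:monotony} goes through essentially as in the paper.
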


Assuming the correctness of the variable, wire, OR-, AND, and X gadgets, we are now ready to prove that 2-King \solo is \NP-hard.
The idea is that in a canonical clearing sequence, we may determine an input value of a gadget $A$ as the budget of the king on the respective input square right before the first move within $A$ is made.
Thus, following a canonical clearing sequence is equivalent to evaluating the corresponding \sat-instance.
See Appendix~\ref{appendix:correctness} for the proof of the following theorem.
\begin{restatable}{theorem}{kingnphard}
    \label{thm:king:np}
    King 2-\solo is \NP-hard.
\end{restatable}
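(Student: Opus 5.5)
We reduce from \sat, taking as given the correctness of the variable, wire, OR, AND and X gadgets (each computes its intended function in the sense defined above) and Lemma~\ref{lemma:canonical-sequence}. Given $\Phi$ with its rectilinear embedding, we build $I_\Phi$ as described. The board has size polynomial in $|\Phi|$: the grid has $O(|U|+|C|)$ rows and columns of tiles, each tile is $\tileSize\times\tileSize$, and every gadget has constant size. Hence the construction runs in polynomial time. It remains to show that $\Phi$ is satisfiable if and only if $I_\Phi$ is clearable.

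\emph{($\Rightarrow$)} Fix a satisfying assignment $\beta$ and build a canonical sequence in topological order of the gadget graph. In each VAR gadget, let the middle king capture toward the side of the literal that is false under $\beta$. This produces a $1$-signal on the true literal's side and a $0$-signal on the false side. For 2VAR gadgets we use two copies of the positive side. By property~1 of ``computes'', every subsequent wire, OR, AND and X gadget admits a normalized capturing sequence that outputs exactly the Boolean value of its inputs. Inductively, each wire carries the truth value under $\beta$ of the literal or subformula it represents, so the wire into the 1-TEST gadget carries a $1$-signal. The 1-TEST gadget is then clearable with this input, and concatenating all these sequences clears $I_\Phi$.

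\emph{($\Leftarrow$)} Suppose $I_\Phi$ is clearable. By Lemma~\ref{lemma:canonical-sequence} it has a canonical clearing sequence $s$. Define the input value of each gadget as the budget on its input square just before its first move, which lies in $\{0,1\}$. Define $\beta(x)=1$ iff the positive side of $x$'s VAR gadget emits a $1$-signal. This is the key point: a VAR gadget never emits $1$ on both sides, because its middle king must capture toward one side, leaving that side $0$. So every wire leaving a variable carries a value at most the truth value of its literal under $\beta$; for 2VAR we use its correctness to get the same bound on both copies. We then argue by induction along the topological order, using property~2 of ``computes'' together with monotonicity of OR and AND (and Observation~\ref{obs:monotony}, since higher input budgets can only help). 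This shows that every gadget output is at most the value of the corresponding subformula under $\beta$. The X gadget is handled the same way, output by output. Because the 1-TEST gadget contains the final square and is only clearable with a $1$-input, the circuit output is $1$, so $\Phi(\beta)=1$.

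The main obstacle is the variable consistency step: ensuring that no clearing sequence extracts $1$-signals on both sides of a VAR gadget, or on both the positive and negative copies via 2VAR and crossings. I would handle this by a local budget count on the VAR gadget. Its kings form a path, so Observation~\ref{obs:intro:long-antenna}-style reasoning applies: the total budget available minus the captures required to empty the gadget's interior leaves at most one unit of budget to distribute between the two output squares. I would verify this count explicitly and then rely on the gadget lemmas for everything downstream.
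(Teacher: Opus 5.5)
Your proposal is correct and follows the paper's proof essentially step for step. In both, the forward direction drives the VAR gadgets by the assignment and clears the gadgets in topological order using property~1. The converse takes a canonical sequence via Lemma~\ref{lemma:canonical-sequence}, reads the assignment off the VAR gadgets, and propagates upper bounds with property~2 and monotonicity to the 1-TEST gadget.

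One small correction to your closing sketch, which the proof does not need since, like the paper, you take VAR behaviour as given. A raw budget count on the VAR gadget leaves three spare units, not one: three interior $2$-kings give budget $6$ against three captures. The bound of one actually comes from the path structure, which forces the captured interior king to be one that never moved.
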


\subsection*{Correctness of Gadgets}
In the following, we prove that the variable, wire and the OR, AND, and X gadget each compute the desired function.
For every gadget apart from the variable gadgets, we describe a normalized capturing sequence that yields the expected output(s) for every combination of inputs.
Moreover, we show that we cannot do better.
Note that the difference between the total virtual budget and the total number of kings is an upper bound for the output signal.
For every normalized capturing sequence, we show that it has too many virtual $0$-kings or that it loses too much budget to have a better output than expected.

\subparagraph*{Variable Gadgets}
We first argue briefly that the variable gadgets are work as expected.
Note that a VAR gadget never produces a $1$-signal on both the positive and the negative side.
This also holds for 2VAR gadgets: if it produces a $1$-signal on the positive side, then both input values of the AND gadget are $1$.
Thus, both VAR gadgets produce $0$-signals on the negative side.
Moreover, if both a $0$- and a $1$-signal are produced on the side with two outputs, it is never worse to produce two $1$-signals instead by monotony (Observation~\ref{obs:monotony}).
Thus, we may assume that the signals on this side are equal.

\subparagraph*{Wire Gadget}
The wire gadget can be seen in Figure~\ref{fig:king:wire-ok}.
\begin{figure}
        \centering
        \resizebox{0.99\textwidth}{!}{
            \begin{newboard}{3}{5}
                \draw[inputsquare] (1,2) rectangle ++(1,1);
                \draw[outputsquare] (5,2) rectangle ++(1,1);
                \kingnew[id]{1}{2}{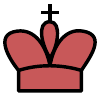}
                \wireHorizontal[id]{0}{-1}
                \wireConnector[id]{3}{-1}

                \draw[move] (B) -- (A) -- (C);
                \draw[move] (D) -- (C) -- (E);
                \draw[move] (F) -- (E) -- (G);
                \draw[move] (H) -- (G) -- (I);
            \end{newboard}
            \begin{newboard}{3}{5}
                \draw[inputsquare] (1,2) rectangle ++(1,1);
                \draw[outputsquare] (5,2) rectangle ++(1,1);
                \kingnew[id]{1}{2}{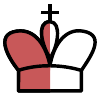}
                \wireHorizontal[id]{0}{-1}
                \wireConnector[id]{3}{-1}
                \draw[move] (A) -- (C);
                \draw[move] (B) -- (C) -- (E);
                \draw[move] (D) -- (E) -- (G);
                \draw[move] (F) -- (G) -- (I);
                \draw[move] (H) -- (I);
            \end{newboard}
            \hspace{1cm}
            \begin{newboard}{4}{5}
                \draw[inputsquare] (1,2) rectangle ++(1,1);
                \draw[outputsquare] (4,3) rectangle ++(1,1);
                \kingnew[id]{1}{2}{0}
                \wireHorizontal[id]{0}{-1}
                \kingnew[id]{4}{2}{2}
                \kingnew[id]{4}{3}{2}
                \kingRectangle[id]{3}{3}{2}{1}
                \draw[move] (B) -- (A) -- (C);
                \draw[move] (D) -- (C) -- (E);
                \draw[move] (F) -- (E) -- (G);
            \end{newboard}
            \begin{newboard}{4}{5}
                \draw[inputsquare] (1,2) rectangle ++(1,1);
                \draw[outputsquare] (4,3) rectangle ++(1,1);
                \kingnew[id]{1}{2}{1}
                \wireHorizontal[id]{0}{-1}
                \kingnew[id]{4}{2}{2}
                \kingnew[id]{4}{3}{2}
                \kingRectangle[id]{3}{3}{2}{1}
                \draw[move] (A) -- (C);
                \draw[move] (B) -- (C) -- (E);
                \draw[move] (D) -- (E) -- (G);
                \draw[move] (F) -- (G);
            \end{newboard}
        }
        \caption{Left: Capturing sequences for a straight wire with input $0$: $\threeSquareMove{B}{A}{C}$, $\threeSquareMove{D}{C}{E}$, $\threeSquareMove{F}{E}{G}$, $\threeSquareMove{H}{G}{I}$, with input $1$: $\twoSquareMove{A}{C}$, $\threeSquareMove{B}{C}{E}$, $\threeSquareMove{D}{E}{G}$, $\threeSquareMove{F}{G}{I}$, $\twoSquareMove{H}{I}$. Right: wire with corner (same capturing sequences as for straight wire).}
        \label{fig:king:wire-ok}
    \end{figure}
We show that the wire can propagate the input signal correctly.
On the other hand, we show that there are no normalized capturing sequences with better output, which yields the following lemma.
\begin{lemma}
    A wire computes the identity function $f(x) = x$ for $x \in \set{0, 1}$.
\end{lemma}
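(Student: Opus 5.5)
The proof has two parts, matching the two conditions in the definition of ``computes''. For existence, I use the explicit sequences of Figure~\ref{fig:king:wire-ok}. Label the input king $A$ and the wire kings column by column, with $B,C$ in the first $2\times 2$ block, $D,E$ in the next, and so on. The output king $I$ sits in the connector.

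With input $0$, each step $\threeSquareMove{B}{A}{C}$, $\threeSquareMove{D}{C}{E}$, \dots{} consumes one $2$-king. That king captures a $0$-king (or the square just vacated), then captures forward, so the last surviving king on the output square is a $0$-king. With input $1$, the input king first plays $\twoSquareMove{A}{C}$. Each subsequent $2$-king then captures the square just entered by its predecessor and continues forward, and the final $\twoSquareMove{H}{I}$ leaves a $1$-king on the output square. I check that every capture is between adjacent squares and that every king except the one on the output square is removed. The sequence is repeated per $2\times2$ block, so it works for any wire length. The corner version uses the same pattern; I verify it only locally at the turn.

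For optimality, I count budget. Let the wire, input square included, have $n$ kings: the input king with budget $x$ and $n-1$ kings with budget $2$. A normalized sequence makes exactly $n-1$ captures, since exactly one king remains, on the output square. So the output budget is at most
\[ x + 2\cdot(\#\text{virtual } 2\text{-kings among the non-input kings}) - (n-1). \]
The task is therefore to show that at least $\lceil (n-1)/2\rceil$ of the non-input kings are virtual $0$-kings, with the precise count chosen so the bound gives exactly $x$. I would prove this with a cut argument along the wire. Each column pair of the $2$-wide wire, together with the single squares at the connector, forms a vertex cut of the capture graph, since kings only reach adjacent squares. If both kings of a column pair were virtual $2$-kings, the part of the wire behind the pair would be separated from the output before being cleared. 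This is Lemma~\ref{lem:0-piece-properties}(\ref{lem:0-piece-properties:vertex-cut}), applied to the gadget with the output king playing the role of the final square. So every cut contains a virtual $0$-king. The disjoint cuts (each column pair, and the connector's middle king or output king) together force enough virtual $0$-kings. Any loss of budget, for instance capturing a $1$-king, only lowers the output further, so that case is handled by the same inequality.

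The main obstacle is the exact accounting at the ends. The input king may itself be a virtual $0$- or $2$-king, and the connector has an odd shape (three kings in a row, then one), so the number of disjoint cuts must match $\lceil (n-1)/2\rceil$ exactly. If plain cut counting comes up one short at the connector, I would fall back on Observation~\ref{obs:intro:antenna}. The single output king hanging off the connector row, together with the side kings of that row, behaves like a short antenna. Within the gadget this forces the connector's side kings to capture inward and spend budget, which supplies the missing virtual $0$-king or lost budget unit. The corner version needs its own cut check, but the same argument applies.
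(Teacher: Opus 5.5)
Your overall strategy is the paper's: the figure's sequences for existence, and for optimality the bound output $\le x + 2\cdot(\text{number of virtual } 2\text{-kings}) - (n-1)$, combined with Lemma~\ref{lem:0-piece-properties}. The existence part is fine.

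\textbf{The gap.} It lies in the one step that carries the optimality argument. Let $k$ be the number of two-king columns, so $n-1 = 2k+4$. You must exhibit $k+2$ virtual $0$-kings among the non-input kings, and the cuts you list do not deliver this.
\begin{itemize}
\item The middle king of the connector row is not a vertex cut. Each side king of that row is adjacent both to the last wire column and to the output square, so removing the middle king leaves a path around it.
\item If you treat ``middle king or output king'' as a single item, you get only $k+1$.
\end{itemize}
Your fallback does not repair this. Observation~\ref{obs:intro:antenna} needs a leaf whose unique neighbour has exactly one other neighbour. Within the gadget, however, the output king is adjacent to all three connector kings, and the side kings have degree at least three. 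So there is no antenna here, and the claim that it ``supplies the missing virtual $0$-king or lost budget unit'' has nothing behind it.

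\textbf{The fix.} The count closes without any extra device if you take the whole three-king connector row as a cut. It separates the output king from the rest of the gadget and is disjoint from the $k$ column pairs. This gives $k+1$ pairwise disjoint vertex cuts, each containing a virtual $0$-king by Lemma~\ref{lem:0-piece-properties}.\ref{lem:0-piece-properties:vertex-cut}. The output king is one more virtual $0$-king, because in a normalized sequence it is the final square and never captures.

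Hence there are at most $(2k+4)-(k+2) = k+2$ virtual $2$-kings, so the total virtual budget is at most $x + 2k + 4$. Since exactly $2k+4$ captures are made, the output is at most $x$. This is precisely the paper's proof.

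The loose ends you worry about are harmless. Losing budget only lowers the output, and the input king contributes at most $x$ whether or not it moves.
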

\begin{proof}
    Figure~\ref{fig:king:wire-ok} shows a normalized capturing sequence that produces the correct output for each possible input.
    In both cases, the signal eats up the wire, and every two moves the number of columns with kings is reduced by $1$ while preserving the input signal.

    On the other hand, we show that there is no normalized capturing sequence that outputs $1$ if the input is $0$.
    Let $k$ be the number of columns with two $2$-kings.
    Then, without the input square, the wire has $k + 2$ columns and $2k + 4$ kings in total.
    Each column of the wire except for the input and the output square forms a vertex cut in the capture graph.
    Thus, each of the $k + 1$ columns contains one virtual $0$-king by Lemma~\ref{lem:0-piece-properties:vertex-cut}.
    Moreover, the output king never captures in a normalized clearing sequence and is thus a virtual $0$-king as well.
    Thus, with input $i$, every normalized capturing sequence has at most $i + 2 \cdot (2k + 4 - (k + 1) - 1) = i + 2k + 4$ total virtual budget, and at most $i + 2k + 4$ kings can be cleared.
    Since the input is a $0$-king, which does not contribute additional budget, any clearing sequence results in a $0$-king.
\end{proof}
Normalized capturing sequences for the corner gadget with expected outputs are also depicted in Figure~\ref{fig:king:wire-ok}.
As the capture graph of a wire with corner is a subgraph of the capture graph of a straight wire, there is no normalized capturing sequence for input $0$ that outputs $1$, i.e., a wire with corners also computes the identity function.

\subparagraph*{OR Gadget}
\begin{figure}
        \centering
        \resizebox{!}{0.17\textheight}{
            \begin{newboard}{5}{4}
                \draw[inputsquare] (3,1) rectangle ++(1,1);
                \draw[inputsquare] (1,3) rectangle ++(1,1);
                \draw[outputsquare] (4,5) rectangle ++(1,1);
                \kingnew[id]{3}{1}{0}
                \kingnew[id]{1}{3}{0}
                \orGadget[id]{0}{0}
                \draw[move] (D) -- (A) -- (C);
                \draw[move] (H) -- (B) -- (E);
                \draw[move] (G) -- (C) -- (E);
                \draw[move] (F) -- (E) -- (I);
                \draw[move] (J) -- (I) -- (K);
            \end{newboard}
            \begin{newboard}{5}{4}
                \draw[inputsquare] (3,1) rectangle ++(1,1);
                \draw[inputsquare] (1,3) rectangle ++(1,1);
                \draw[outputsquare] (4,5) rectangle ++(1,1);
                \kingnew[id]{3}{1}{1}
                \kingnew[id]{1}{3}{0}
                \orGadget[id]{0}{0}
                \draw[move] (A) -- (C);
                \draw[move] (H) -- (B) -- (E);
                \draw[move] (D) -- (C) -- (E);
                \draw[move] (F) -- (E) -- (I);
                \draw[move] (G) -- (I) -- (K);
                \draw[move] (J) -- (K);
            \end{newboard}
            \begin{newboard}{5}{4}
                \draw[inputsquare] (3,1) rectangle ++(1,1);
                \draw[inputsquare] (1,3) rectangle ++(1,1);
                \draw[outputsquare] (4,5) rectangle ++(1,1);
                \kingnew[id]{3}{1}{0}
                \kingnew[id]{1}{3}{1}
                \orGadget[id]{0}{0}

                \draw[move] (B) -- (E);
                \draw[move] (D) -- (A) -- (C);
                \draw[move] (G) -- (C) -- (E);
                \draw[move] (H) -- (E) -- (I);
                \draw[move] (F) -- (I) -- (K);
                \draw[move] (J) -- (K);

            \end{newboard}

        }
        \caption{An OR gadget. If both inputs are $0$: $\threeSquareMove{D}{A}{C}, \threeSquareMove{H}{B}{E}, \threeSquareMove{G}{C}{E}, \threeSquareMove{F}{E}{I}, \threeSquareMove{J}{I}{K}$ outputs a $0$-king.
        If input $A$ is $1$, then $\threeSquareMove{H}{B}{E}, \twoSquareMove{A}{C}, \threeSquareMove{D}{C}{E}, \threeSquareMove{F}{E}{I}, \threeSquareMove{G}{I}{K}, \twoSquareMove{J}{K}$ outputs a $1$-king.
        If input $B$ is $1$, then $\threeSquareMove{D}{C}{A}, \twoSquareMove{B}{E}, \threeSquareMove{G}{C}{E}, \threeSquareMove{H}{E}{I}, \threeSquareMove{F}{I}{K}, \twoSquareMove{J}{K}$ outputs a $1$-king.}
        \label{fig:king:or-ok}
    \end{figure}
The OR gadget can be seen in Figure~\ref{fig:king:or-ok} with normalized capturing sequences that produce the correct output for each possible input.
It remains to show that there is no normalized capturing sequence with better output.
\begin{lemma}
    The OR gadget computes the OR-function.
\end{lemma}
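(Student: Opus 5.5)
The plan is to combine the three normalized capturing sequences of Figure~\ref{fig:king:or-ok} with an upper bound obtained by counting virtual budget. Label the kings as in the figure: inputs $A=(3,1)$ and $B=(1,3)$, internal $2$-kings $C,\dots,J$, and output $K=(4,5)$. The figure already supplies, for each input combination, a normalized capturing sequence with the correct output. For inputs $(1,1)$ we reuse the sequence for input $A=1$ and invoke monotony (Observation~\ref{obs:monotony}). So existence is settled, and it remains to show that no normalized capturing sequence does better than the OR of the inputs.

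First, the output can never exceed $1$. In a normalized sequence the king ending on the output square is either $K$ itself or a king that captured into that square. In the first case $K$ has become a virtual $0$-king that never moves, so it holds no usable value. In the second case the king has spent at least one capture, so its budget is at most $1$. This already handles every input combination except $(0,0)$, where $\mathrm{OR}=0$.

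The real case is inputs $(0,0)$, and here I would show that the output is $0$. A normalized sequence captures exactly $10$ kings ($A,B,C,\dots,J$), and the output budget is at most the total virtual budget minus $10$. Since $A$ and $B$ are $0$-kings, all budget comes from virtual $2$-kings among $C,\dots,J$ and $K$. I treat $K$ (or the square it occupies) as the final square, which makes it a virtual $0$-king with respect to the gadget. An output of $1$ would need total virtual budget at least $11$, so at least $6$ of the $8$ internal kings $C,\dots,J$ are virtual $2$-kings, leaving at most $2$ internal virtual $0$-kings.

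Now apply Lemma~\ref{lem:0-piece-properties}\ref{lem:0-piece-properties:connected}, applied locally by viewing the gadget together with its input squares as the configuration. The virtual $0$-kings form a connected subgraph containing $A$, $B$ and $K$. $A$ lies in row $1$ and $K$ in row $5$. King moves change the row by at most one, so any path from $A$ to $K$ needs internal vertices in rows $2$, $3$ and $4$. That is at least $3$ internal virtual $0$-kings, contradicting the bound of $2$.

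The main obstacle is justifying the local use of Lemma~\ref{lem:0-piece-properties} and of the budget count inside a gadget rather than for a full clearing sequence. Concretely, I need that in a normalized sequence the input kings, having budget $0$, must be captured by gadget kings, and that every captured square's king eventually merges into the output square through virtual $0$-squares. I would reprove the connectivity argument directly: each virtual $0$-king is removed by a king that ultimately ends on the output square, and the path it travels consists of virtual $0$-squares.
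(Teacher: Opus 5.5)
Your proof is correct and follows the paper's skeleton: the sequences of Figure~\ref{fig:king:or-ok} plus monotony for existence, then a virtual-budget bound of $10$ when both inputs are $0$. Where you differ is in how you force enough virtual $0$-kings.

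The paper applies Lemma~\ref{lem:0-piece-properties}.\ref{lem:0-piece-properties:vertex-cut} to the three pairwise disjoint vertex cuts $\set{C,D}$, $\set{E,H}$, $\set{I,J}$. Together with $K$, this yields at least four virtual $0$-kings among $C,\dots,K$ for \emph{every} input combination. You instead use Lemma~\ref{lem:0-piece-properties}.\ref{lem:0-piece-properties:connected}: $A$ (a $0$-king) and $K$ are virtual $0$-kings, and a path between them must pass through rows $2$, $3$ and $4$. Your route needs only the connectivity property and a one-line geometric observation. However, it depends on input $A$ being a $0$-king, whereas the cut count is input-independent and is the pattern the paper builds on for the AND, IN-B and OUT gadgets.

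Two small repairs are needed. First, in the row argument, note that $B=(1,3)$ is not adjacent to $C$ or $D$. So the step from row $2$ into row $3$ lands on $E$, $F$ or $G$, and the row-$3$ vertex really is internal.

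Second, your ``first case'' for the output bound is wrong as stated: an unmoved, uncaptured $K$ would be a $2$-king, not a king with ``no usable value''. The case cannot occur, though. Every capture leaves the capturer on the captured square, so the king on the output square at the end is the last capturer and has spent a capture. The same observation shows that $K$ is among the ten captured kings and one of $C,\dots,J$ survives. So the captured set is not $A,\dots,J$ as you wrote, although the count of ten is right.
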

\begin{proof}
    We show that if both inputs are $0$, it is impossible to output $1$.
    The capture graph of the OR gadget has three disjoint vertex cuts $\set{C, D}, \set{E, H}, \set{I, J}$.
    By Lemma~\ref{lem:0-piece-properties:vertex-cut}, each vertex cut contains a virtual $0$-king.
    Moreover, the output king never captures in a normalized clearing sequence.
    As the gadget (including output but excluding input) has nine kings in total, every normalized capturing sequence has a total virtual budget of at most $2 \cdot (9 - 4) = 10$ (without the inputs).
    With this budget, at most ten kings can be cleared, which is exactly the number that needs to be captured.
    Since both inputs are $0$-kings, which do not contribute additional budget, every normalized clearing sequence results in a $0$-king.
\end{proof}

\subparagraph*{AND Gadget}
 \begin{figure}
        \centering
        \resizebox{!}{0.17\textheight}{
            \begin{newboard}{6}{6}
                \draw[inputsquare] (4,1) rectangle ++(1,1);
                \draw[inputsquare] (1,4) rectangle ++(1,1);
                \draw[outputsquare] (4,6) rectangle ++(1,1);
                \kingnew[id]{4}{1}{0}
                \kingnew[id]{1}{4}{0}
                \andGadget[id]{0}{0}
                \draw[move] (C) -- (A) -- (D);
                \draw[move] (H) -- (D) -- (G);
                \draw[move] (L) -- (G) -- (K);

                \draw[move] (M) -- (B) -- (I);
                \draw[move] (E) -- (I) -- (J);
                \draw[move] (E) -- (I) -- (J);

                \draw[move] (F) -- (J) -- (O);
                \draw[move] (P) -- (K) -- (O);
                \draw[move] (N) -- (O) -- (Q);
            \end{newboard}
            \begin{newboard}{6}{6}
                \draw[inputsquare] (4,1) rectangle ++(1,1);
                \draw[inputsquare] (1,4) rectangle ++(1,1);
                \draw[outputsquare] (4,6) rectangle ++(1,1);
                \kingnew[id]{4}{1}{1}
                \kingnew[id]{1}{4}{1}
                \andGadget[id]{0}{0}
                \draw[move] (A) -- (C);
                \draw[move] (B) -- (I);
                \draw[move] (M) -- (I) -- (E);
                \draw[move] (F) -- (E) -- (C);
                \draw[move] (D) -- (C) -- (G);
                \draw[move] (H) -- (G) -- (K);
                \draw[move] (L) -- (K);
                \draw[move] (P) -- (K) -- (O);
                \draw[move] (J) -- (O) -- (Q);
                \draw[move] (N) -- (Q);
            \end{newboard}
            \hspace{2cm}
            \begin{newboard}{6}{6}
                \draw[inputsquare] (4,1) rectangle ++(1,1);
                \draw[inputsquare] (1,4) rectangle ++(1,1);
                \draw[outputsquare] (4,6) rectangle ++(1,1);
                \kingnew[id]{4}{1}{0}
                \kingnew[id]{1}{4}{0}
                \andGadget[id]{0}{0}
                \draw[gadgetoverlay] ($(I) + (-0.5, -0.5)$) rectangle ($(M) + (0.5, 0.5)$);
                \draw[gadgetoverlay] ($(C) + (-0.5, -0.5)$) rectangle ($(D) + (0.5, 0.5)$);
                \draw[gadgetoverlay] ($(K) + (-0.5, -0.5)$) rectangle ($(L) + (0.5, 0.5)$);
                \draw[gadgetoverlay] ($(N) + (-0.5, -0.5)$) rectangle ($(O) + (0.5, 0.5)$);
            \end{newboard}

        }
        \caption{An AND gadget.
        Left:
        If both inputs are $0$, then $\threeSquareMove{M}{B}{I}, \threeSquareMove{E}{I}{J}, \threeSquareMove{F}{J}{O}, \threeSquareMove{C}{A}{D}, \threeSquareMove{H}{D}{G}, \threeSquareMove{L}{G}{K}, \threeSquareMove{P}{K}{O}, \threeSquareMove{N}{O}{Q}$ outputs a $0$-king.
        If both inputs are $1$, then $\twoSquareMove{B}{I}, \twoSquareMove{A}{C}, \threeSquareMove{M}{I}{E}, \threeSquareMove{F}{E}{C}, \threeSquareMove{D}{C}{G}, \threeSquareMove{H}{G}{K}, \twoSquareMove{L}{K}, \threeSquareMove{P}{K}{O}, \threeSquareMove{J}{O}{Q}, \twoSquareMove{N}{Q}$ outputs a $1$-king.
        Right: Cut vertex sets which contain one virtual $0$-king each.
        }
        \label{fig:king:and-ok}
    \end{figure}
We show that the AND gadget can output the expected signal, but there is no normalized capturing sequence with better output.
\begin{restatable}{lemma}{kingand}
    The AND gadget computes the AND-function.
\end{restatable}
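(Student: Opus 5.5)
The plan mirrors the OR gadget. The normalized capturing sequences in Figure~\ref{fig:king:and-ok} already give the expected output for inputs $(0,0)$ and $(1,1)$. By monotony (Observation~\ref{obs:monotony}), the $(0,0)$ sequence also serves the mixed inputs $(1,0)$ and $(0,1)$, where a $0$-output is expected. So it remains to show that no normalized capturing sequence outputs a $1$-king unless both inputs are $1$. By symmetry of the argument I would fix one input to be a $0$-king, say $B$, leave $A$ arbitrary ($A$ has budget at most $1$), and derive a contradiction from assuming output $1$.

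\textbf{Step 1: global budget count.} With the labels of Figure~\ref{fig:king:and-ok}, the gadget without inputs consists of the $15$ kings $C,\dots,Q$, and $Q$ is the output square. The right part of the figure shows the four disjoint vertex cuts $\set{I,M}$, $\set{C,D}$, $\set{K,L}$, $\set{N,O}$. By Lemma~\ref{lem:0-piece-properties}(\ref{lem:0-piece-properties:vertex-cut}), each cut contains a virtual $0$-king. The output king $Q$ never captures in a normalized sequence, which gives at least five virtual $0$-kings and at most ten virtual $2$-kings. Unlike in the OR gadget, this crude bound leaves slack. A normalized sequence makes $16$ captures (the $14$ kings $C,\dots,P$ plus the two inputs). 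With input budgets $a+b$, the output budget is at most $20+a+b-16$ minus lost budget. So I must show that a $0$-input forces either additional virtual $0$-kings or lost budget totalling enough to push the output below $1$.

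\textbf{Step 2: exploit the $0$-input locally.} $B$ has budget $0$, so it is a virtual $0$-king that is not the final square. By Lemma~\ref{lem:0-piece-properties}(\ref{lem:0-piece-properties:inverse-neighborhood}), it is captured by a virtual $2$-king from its neighborhood $\set{I,M}$, and that king must continue capturing. Since $\set{I,M}$ contains a virtual $0$-king, exactly one of $I,M$ captures $B$ and then moves on. I would case-split on which of $I$ and $M$ this is, and follow the forced continuation: where can it capture next, and which squares (around $E,J,N$) must then hold virtual $0$-kings? Similarly, if $A$ is also $0$, the pair $\set{C,D}$ behaves the same way. In each case I would apply Lemma~\ref{lem:0-piece-properties}(\ref{lem:0-piece-properties:connected}) (the virtual $0$-kings must be connected to $Q$) together with Lemma~\ref{lemma:lose-budget} (a sequence that does not lose budget needs $|X|$ virtual $0$-neighbours for each set $X$ of virtual $2$-kings). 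The goal is to show that the virtual $0$-kings must number at least six, or that some $1$-king is captured. Either outcome lowers the achievable output to $0$ when $a+b\le 1$.

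\textbf{Main obstacle.} The hard part is Step 2. It is a finite but fiddly case analysis over which king of each cut is the virtual $0$-king. The path of virtual $0$-kings from the cuts to $Q$ must be connected, and the virtual $2$-kings around the interior squares $E,F,G,H,J,P$ must all find enough virtual $0$-neighbours. I would first try to show directly that connectivity of the virtual $0$-kings from $\set{I,M}$ and $\set{C,D}$ to $Q$ forces one extra virtual $0$-king among $\set{E,F,G,J}$ whenever an input is $0$. If that fails, I would fall back to exhibiting a set $X$ of virtual $2$-kings violating Lemma~\ref{lemma:lose-budget}, which means budget is lost.
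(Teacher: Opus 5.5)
There is a genuine gap, and it lies in the success criterion you set for Step~2. Generalize your own count: with $k$ virtual $0$-kings among $C,\dots,Q$ and lost budget $\ell$, the output is at most $2(15-k)+a+b-16-\ell = 14-2k+a+b-\ell$. For $k=6$ this is $2+a+b-\ell$. For $k=5$ and $\ell=1$ it is $3+a+b$. So neither ``at least six virtual $0$-kings'' nor ``some $1$-king is captured'' pushes the output to $0$. What you actually need is $k\ge 7$ when $a=b=0$, and either $k\ge 8$ or ($k=7$ and $\ell\ge 1$) when $a+b=1$.

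The missing idea is to use Lemma~\ref{lem:0-piece-properties}(\ref{lem:0-piece-properties:connected}) globally, not just the cut property. The virtual $0$-kings form a connected set containing $Q$ and one king from each of $\set{C,D}$, $\set{I,M}$, $\set{K,L}$, $\set{N,O}$. A short Steiner-tree check shows that such a set has at least seven kings. The key facts are that no king of $\set{I,M,K,L,N,O,Q}$ is adjacent to $C$ or $D$, that $N$ is adjacent to $\set{I,M}$ but not to $\set{K,L}$, and that $O$ is adjacent to $K$ but not to $\set{I,M}$. This bound holds for every input, so it is not something the $0$-input forces, and tracing where $B$ gets captured is a detour. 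The bound alone settles input $(0,0)$: the remaining virtual budget of $16$ exactly pays for the $16$ captures.

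For a mixed input, exactly seven virtual $0$-kings still leaves one unit of slack. You must show that every such placement loses budget, i.e., exhibit a set $X$ of virtual $2$-kings with fewer than $|X|$ virtual $0$-neighbours, so that Lemma~\ref{lemma:lose-budget} applies. The paper does this by splitting on whether $L$ is a virtual $0$-king.
If $L$ is one, the only connected choice of seven kings is $\set{L,G,C,E,I,N,Q}$, and $\set{H,K,P}$ sees only $G$ and $L$.
If $L$ is not one, then $K$ is, and a further split on $F,G,H$ yields the witnesses $\set{H,L,P}$, $\set{C,G,H,L,P}$, or $\set{P,L,H,G,F,E,I,M}$.

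Your reduction ``by symmetry'' to the case where $B$ is the $0$-input is also not available. The gadget is not symmetric in its two inputs: $A$'s neighbours $C,D$ both have degree $5$, while $B$'s neighbours $I,M$ have degrees $5$ and $4$. Indeed, the last two witnesses above are precisely the different sets needed for $a=1$ versus $b=1$.
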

\begin{proof}
    Figure~\ref{fig:king:and-ok} shows normalized capturing sequences with the desired output if both inputs are $0$ or if both inputs are $1$.
    In the other two cases, where one input is $0$ and the other is $1$, there are normalized capturing sequences that output $0$ by Observation~\ref{obs:monotony}.

    It remains to show that if at least one input is $0$, then it is impossible to output $1$.
    We first argue in the following that the number of virtual $0$-kings is at least $7$ (without the inputs) for every normalized capturing sequence.
    In a normalized capturing sequence, the output square always is a virtual $0$-piece as it does not move.
    The gadget contains the vertex cuts $\set{C, D}, \set{I, M}, \set{K, L}, \set{N, O}$ (see Figure~\ref{fig:king:and-ok}), which contain at least one virtual $0$-king each by Lemma~\ref{lem:0-piece-properties}.\ref{lem:0-piece-properties:vertex-cut}.
    Moreover, virtual $0$-kings are connected by Lemma~\ref{lem:0-piece-properties}.\ref{lem:0-piece-properties:connected}.
    We observe that to connect the virtual $0$-kings in the four vertex cuts and the output king, at least seven virtual $0$-kings in total are needed.
    Without the inputs, this gives us a virtual budget of at most $16$, which is also the number of kings that need to be cleared.
    Thus, if at least one input is $0$, then no capturing sequence with more than seven virtual $0$-kings clears the gadget.

    Consider a normalized capturing sequence $s$ with seven virtual $0$-kings.
    If both inputs are $0$, the output of $s$ cannot be $1$ since the virtual budget is not sufficient.

    If one input is $1$, we argue that $s$ loses budget by showing that there is no placement of the virtual $0$-kings such that all needed properties from Lemma~\ref{lemma:lose-budget} are satisfied.
    We distinguish between cases based on whether king $L$ is a virtual $0$-king.

    If king $L$ is a virtual $0$-king, then the only way to connect the four vertex cuts using seven virtual $0$-kings is to choose $\set{L, G, C, E, I, N}$ and the output king.
    Any other way uses more than seven $0$-virtual kings.
    But then, $\set{H, K, P}$ are three virtual $2$-kings that only have two virtual $0$-kings in their neighborhood, which is a contradiction to Lemma~\ref{lemma:lose-budget}.

    If king $L$ is not a virtual $0$-king, then king $K$ is a virtual $0$-king.
    We further distinguish the cases which kings in $\set{F, G, H}$ are virtual $0$-kings.
    If none of them or only king $F$ is a virtual $0$-king, then $\set{H, L, P}$ are virtual $2$-kings that only have two $0$-kings in their neighborhood.
    If at least two of $\set{F, G, H}$ are virtual $0$-kings, then there are at least eight virtual $0$-kings and the budget is not sufficient to clear the gadget.

    Otherwise, either king $G$ or king $H$ is a virtual $0$-king.
    Then, king $D$ is, too:
    This holds for king $G$ since otherwise, $\set{H, L, P}$ does not have enough virtual $0$-kings in its neighborhood.
    For king $H$, this holds since otherwise, the regions cannot be connected using seven virtual $0$-kings.
    Moreover, the kings $C$, $E$ and $P$ are not virtual $0$-kings since otherwise, the regions cannot be connected using seven virtual $0$-kings.
    With the same argument, king $O$ is a virtual $0$-king.
    Now, we know that $D$, $K$, $O$, $Q$ and either $G$ or $H$ are virtual $0$-kings, while $C$, $E$, $F$, $L$, $P$ and either $G$ or $H$ are not.
    If input $a$ is $1$, then $\set{C, G, H, L, P}$ contains four virtual $2$-kings, but is only connected to at most three virtual $0$-kings ($D$, $K$, and either $G$ or $H$).
    If input $b$ is $1$, then $\set{P, L, H, G, F, E, I, M}$ contains six virtual $2$-kings (since $I$ and $M$ cannot both be chosen).
    However, they are all not connected to input $a$, to king $O$ and the output king.
    Thus, they have at most five virtual $0$-kings in their neighborhood.

    Thus, by Lemma~\ref{lemma:lose-budget}, $s$ loses budget, and the output cannot be $1$.
\end{proof}

\subparagraph*{Crossing Gadget}
\begin{figure}
    \centering
    \includegraphics[page=2]{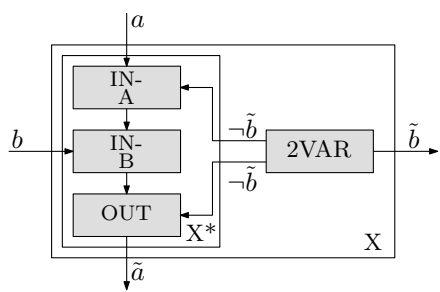}
    \caption{The wire crossing gadget with its four components.}
    \label{fig:king:crossing}
\end{figure}
The crossing gadget takes two inputs $a$ and $b$ and outputs two signals $\tilde{a}$ and $\tilde{b}$.
It contains a 2VAR gadget that feeds $\neg \tilde{b}$ into the X* gadget and outputs $\tilde{b}$.
The idea of the crossing gadget is that there is a normalized capturing sequence if and only if $\tilde{b} \leq b$.
Moreover, in this case, the gadget should output $\tilde{a} \leq a$.
These conditions ensure that an incoming $0$-signal never becomes a $1$-signal.
The first table in Table~\ref{table:crossing} shows the desired outputs for the X* gadget for the three inputs $a$, $b$ and $\neg \tilde{b}$.
\begin{table}
\centering
        \begin{tabular}{ccc   r}
        \toprule
         $a$ & $b$ & $\neg \tilde{b}$ & $X^*(a, b, \neg \tilde{b})$ \\
         \midrule
         0 & 0 & 0 & $\bot$ \\
         0 & 0 & 1 & 0 \\
         0 & 1 & 0 & 0 \\
         0 & 1 & 1 & 0 \\
         1 & 0 & 0 & $\bot$ \\
         1 & 0 & 1 & 1 \\
         1 & 1 & 0 & 1 \\
         1 & 1 & 1 & 1 \\
         \bottomrule
    \end{tabular}
\hfill
    \begin{tabular}{cc   r}
\toprule
$a$ & $\neg \tilde{b}$ & IN-A \\
\midrule
0 & 0 & 0 \\
0 & 1 & 0 \\
1 & 0 & 1 \\
1 & 1 & 2 \\
\bottomrule
\end{tabular}
\hfill
\begin{tabular}{cc   r}
\toprule
IN-A() & $b$ & IN-B \\
\midrule
0 & 0 & 0 \\
0 & 1 & 1 \\
1 & 0 & 0 \\
1 & 1 & 2 \\
2 & 0 & 2 \\
2 & 1 & 2 \\
\bottomrule
\end{tabular}
\hfill
    \begin{tabular}{cc   r}
\toprule
IN-B() & $\neg \tilde{b}$ & OUT \\
\midrule
0 & 0 & $\bot$ \\
0 & 1 & 0 \\
1 & 0 & 0 \\
1 & 1 & 0 \\
2 & 0 & 1 \\
2 & 1 & 1 \\
\bottomrule
\end{tabular}
\caption{Input/output table for the function X*, as well as the components of X*. The $\bot$ symbol indicates that the gadget is not clearable.}
\label{table:crossing}
\end{table}
There are three additional gadgets IN-A, IN-B and OUT that check the inputs (see Figure~\ref{fig:king:crossing}), which are concatenated by a modified wire.
Other than the usual gadgets, the output of IN-A and IN-B as well as the input of IN-B and OUT is not a single square, but a combination of two squares.
Moreover, the wire between IN-A and IN-B as well between IN-B and OUT does not end with a wire connector.
As a consequence, we can input and output more types of signals.
The $0$-signal and the $1$-signal is the same as before: one square has a $0$- or a $1$-king, respectively, and the other square is empty.
Additionally, there is a $2$-signal where one square has a $2$-king and the other square has a $1$-king.
Note that a $2$-signal can be trivially transformed into a $1$-signal.
Moreover, the modified wire can output a $2$-signal if the input is a $2$-signal, see Figure~\ref{king:crossing-gadgets}.
Conversely, we note that $0$- and $1$-inputs can never become $2$-signals.
This means that monotony from Observation~\ref{obs:monotony} also applies to $0$-, $1$-, and $2$-signals.

Table~\ref{table:crossing} shows the functions for IN-A, IN-B, and OUT.
We note that concatenating the three functions as shown in Figure~\ref{fig:king:crossing} yields the desired function X*.
We have a gadget for the IN-B-function and the OUT-function, for the IN-A-function, we can use the same gadget as for IN-B by swapping the inputs.

Assuming the correctness of each gadget, the correctness of the $X$ gadget directly follows.
Figure~\ref{king:crossing-gadgets} shows the IN-B gadget and the OUT gadget.
In the IN-B gadget, the upper input of the IN-B gadget corresponds to the output of the IN-A gadget while the left input is signal~$b$.
In the OUT gadget, the left input corresponds to the output of the IN-B gadget while the upper input is signal $\neg \tilde{b}$.
The following lemmas prove that the gadgets work as expected; see Appendix~\ref{appendix:crossing} for the proofs.
\begin{restatable}{lemma}{kinginb}
    The IN-B gadget computes the IN-B-function, where the upper input signal is in $\set{0, 1, 2}$, and the left input signal is in $\set{0, 1}$.
\end{restatable}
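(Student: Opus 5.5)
We follow the same two-part template as the wire, OR and AND lemmas: first an explicit construction for each input pair, then a virtual-budget argument showing nothing better is possible. We do not see the gadget's picture (Figure~\ref{king:crossing-gadgets}), so we argue with the generic tools of Section~\ref{sec:prelim} and refer to the gadget's vertex cuts abstractly.

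\textbf{Achievability.} For each of the six input combinations (upper $\in\set{0,1,2}$, left $\in\set{0,1}$) we exhibit a normalized capturing sequence producing the tabulated IN-B value. By Observation~\ref{obs:monotony}, it suffices to handle the minimal inputs giving each output value:
\begin{itemize}
\item upper $0$, left $0$ $\mapsto 0$;
\item upper $0$, left $1$ $\mapsto 1$;
\item upper $1$, left $1$ $\mapsto 2$;
\item upper $2$, left $0$ $\mapsto 2$.
\end{itemize}
The remaining rows follow from these, using that a $2$-signal can be turned into a $1$-signal. Each of these four sequences is written out move by move, in the style of Figures~\ref{fig:king:or-ok} and~\ref{fig:king:and-ok}. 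Typically the signal ``eats'' the gadget column by column, and the upper $2$-signal supplies its extra $2$-king to finish on the two output squares as a $2$-king plus a $1$-king.

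\textbf{Upper bounds via counting.} We give lower bounds on the number of virtual $0$-kings in any normalized sequence, using the tools of Lemma~\ref{lem:0-piece-properties}:
\begin{itemize}
\item every vertex cut of the capture graph contains a virtual $0$-king;
\item the virtual $0$-kings form a connected subgraph;
\item the output squares never move, since the sequence is normalized.
\end{itemize}
Let $z$ be the minimum number of virtual $0$-kings and $n$ the number of kings in the gadget (excluding inputs). The total budget available is then at most $2(n-z)$ plus the input budget (upper contributes $0$, $1$, or $3$ for a $2$-signal, namely $2+1$ on two squares). The output budget is at most this quantity minus the number of kings to be captured.

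We aim for cut counting alone to settle two of the bounds. First, inputs $(0,0)$ cannot output $1$. Second, inputs $(1,0)$ cannot output $1$ or more; this is the delicate row, since a $1$-input with a $0$-input yields $0$ even though the raw budget might allow $1$.

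\textbf{Rows needing Lemma~\ref{lemma:lose-budget}.} Several rows are expected to have slack in the raw count:
\begin{itemize}
\item $(1,0)\mapsto 0$;
\item $(0,1)\mapsto 1$, which must not be $2$;
\item $(1,1)\mapsto 2$, where we must check $2$ is the exact maximum.
\end{itemize}
For these we argue as in the AND proof. We enumerate the placements of a minimum set of virtual $0$-kings satisfying connectivity and the cut conditions. For each placement, we exhibit a set $X$ of virtual $2$-kings whose neighborhood contains fewer than $|X|$ virtual $0$-kings. By Lemma~\ref{lemma:lose-budget}, this forces budget loss and so caps the output.

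We also need to rule out a $2$-signal output: producing a $2$-king and a $1$-king on the two output squares needs three units of surplus budget, and we show these are not available. For the case distinctions we can additionally use Lemma~\ref{lem:0-piece-properties}.\ref{lem:0-piece-properties:inverse-neighborhood} and Observation~\ref{obs:intro:antenna} on any antenna-like protrusions of the gadget.

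\textbf{Main obstacle.} The hardest part is the case analysis for the upper-$1$/left-$0$ input. There the gadget must ``waste'' exactly one unit of budget, so the argument rests on a careful enumeration of placements of the virtual $0$-kings, branching on one pivotal king as in the AND proof's case split on $L$. We would first try to choose that pivot at the junction where the left input enters, since whether it moves determines whether the upper signal's budget can reach the output region.
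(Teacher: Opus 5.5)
Your outline has the same architecture as the paper's proof. Both use explicit normalized sequences for the four minimal input pairs plus Observation~\ref{obs:monotony}, a lower bound on virtual $0$-kings from vertex cuts, connectivity and the output, a budget count, and a Lemma~\ref{lemma:lose-budget} case split for upper~$1$/left~$0$. But it stops exactly where the proof begins: every bound in this lemma rests on gadget-specific facts, and none is established.

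One of your generic premises is also false for this gadget. Output kings \emph{do} move in normalized IN-B sequences, because the output consists of two squares, one of which ends empty; the paper's sequences for upper~$0$ contain $(S \rightarrow P \rightarrow R)$ and $(S \rightarrow R)$. Only one of $R,S$ is forced to be a virtual $0$-king. Imposing it for both would push your lower bound to nine virtual $0$-kings among the $17$ non-input kings. The paper's sequence for upper~$0$/left~$1$, which has exactly eight with $S$ moving, refutes that bound. The correct count uses:
\begin{itemize}
\item one output square;
\item the disjoint cuts $\{I,L\}$, $\{E,J,N\}$, $\{G,H,M\}$, $\{K,P,Q\}$;
\item connectivity;
\item also the cut $\{C,D\}$ around the upper input, since the four cuts alone force only five kings, e.g.\ $\{L,N,R,P,M\}$.
\end{itemize}
This gives at least eight virtual $0$-kings, i.e.\ virtual budget at most $18$ plus the input budget.

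Your assignment of tools to rows then needs correcting:
\begin{itemize}
\item Upper~$0$/left~$1$ is settled by counting alone. Its $19$ units against $18$ captures allow at most a $1$-signal, and a $2$-signal is excluded by the same count, so it needs no Lemma~\ref{lemma:lose-budget} analysis.
\item The rows with output $2$ need no upper bound, since $2$ is the largest signal.
\item $(0,0)$ follows by monotony.
\item Contrary to your counting paragraph, counting cannot settle upper~$1$/left~$0$. The budget is again $19$ against $18$, so one must show that every admissible placement of exactly eight virtual $0$-kings loses budget (nine would leave budget $17$).
\end{itemize}

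The paper does this by branching on the cut $\{L,M\}$. Your instinct to pivot at the king where the left input enters is right, but the split is on this pair:
\begin{itemize}
\item If $L$ moves, then $I$, $M$ and a path from $M$ to an output square are virtual $0$. Four virtual $2$-kings among $N,O,P,Q,R,S$ then see only three virtual $0$-kings.
\item If $M$ moves, then $Q$ and $S$ move and exactly one of $K,P$ is virtual $0$. So $\{Q\}$ or $\{M,Q\}$ violates Lemma~\ref{lemma:lose-budget}.
\item If neither moves, then $N$, $P$ and one of $I,J,K$ are virtual $0$. This leaves four virtual $2$-kings in $\{C,\dots,H\}$ with only three virtual $0$-neighbours besides the $1$-king $A$, which loses budget if it never moves.
\end{itemize}
This enumeration is the actual content of the lemma, and it is what your proposal is missing.
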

\begin{restatable}{lemma}{kingout}
    The OUT gadget computes the OUT-function, where the upper input signal is in $\set{0, 1}$ and the left input signal is in $\set{0, 1, 2}$.
    There is no normalized capturing sequence if both inputs are $0$.
\end{restatable}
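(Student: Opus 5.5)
We need the OUT gadget lemma. The figure itself is not available, so the plan follows the template the paper uses for the wire, OR and AND gadgets: an existence part by explicit sequences, and an upper-bound part by counting virtual budget.

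\textbf{Existence.} Read the target function off Table~\ref{table:crossing}. We only need to exhibit explicit normalized capturing sequences for three minimal input combinations:
\begin{itemize}
\item left input $0$ with upper input $\neg\tilde b = 1$, giving output $0$;
\item left input $1$ with upper input $0$, giving output $0$;
\item left input $2$ with upper input $0$, giving output $1$.
\end{itemize}
All remaining rows follow from these by monotony (Observation~\ref{obs:monotony}). This is because $0$-, $1$- and $2$-signals are ordered, as noted for the modified wire. Each of the three sequences is checked by a direct move listing, as in Figure~\ref{fig:king:or-ok}.

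\textbf{Upper bounds: setup.} There are two claims to prove. First, if the left signal is at most $1$, no normalized sequence outputs $1$. Second, if both inputs are $0$, no normalized sequence exists at all.

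For both, I would first identify disjoint vertex cuts of the gadget's capture graph, each of which must contain a virtual $0$-king by Lemma~\ref{lem:0-piece-properties}.\ref{lem:0-piece-properties:vertex-cut}. The output king is also a virtual $0$-king, since it never captures in a normalized sequence. Using connectivity (Lemma~\ref{lem:0-piece-properties}.\ref{lem:0-piece-properties:connected}), I would lower-bound the number $z$ of virtual $0$-kings needed to connect these cuts and the output king. This gives a virtual budget of at most $2(n-z)$ for the $n$ gadget kings, plus the input budgets: $0$, $1$ or $3$ for the left signal (a $2$-signal is a $2$-king and a $1$-king), and $0$ or $1$ for the upper signal.

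\textbf{Upper bounds: the counting.} I would choose $z$ so that $2(n-z) = n_{\text{cap}} - 1$, where $n_{\text{cap}}$ is the number of kings to be captured. Then the following hold.
\begin{itemize}
\item \emph{Both inputs $0$.} The budget is one short of clearing, so no normalized sequence exists.
\item \emph{Left input $1$ with upper input $0$, or left $0$ with upper $1$.} The budget exactly suffices, so the output is a $0$-king.
\item \emph{Left input $1$ and upper input $1$.} The total budget exceeds the requirement by one. Here I would argue, as in the AND proof, that the sequence must lose budget.
\end{itemize}
Concretely, for the last case I would case-split on the few placements of the minimum set of virtual $0$-kings. For each placement, exhibit a set $X$ of virtual $2$-kings whose neighborhood contains fewer than $|X|$ virtual $0$-kings, contradicting Lemma~\ref{lemma:lose-budget}. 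Alternatively, exhibit a set of virtual $0$-kings violating Lemma~\ref{lem:0-piece-properties}.\ref{lem:0-piece-properties:inverse-neighborhood}. Placements using more than $z$ virtual $0$-kings fail on budget alone.

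The intended mechanism is that, with left input $1$, the single extra budget from the upper input cannot be routed to the output. The gadget's structure near the upper input should act like an antenna (Observation~\ref{obs:intro:antenna}) that forces the upper $1$-king to be consumed locally. Intuitively, only the additional $1$-king from a $2$-signal on the left can carry budget through to the output.

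\textbf{Main obstacle.} The hard step is the case with left input $1$ and upper input $1$. There the raw budget count allows output $1$, and only a careful neighborhood argument in the style of the AND proof excludes it. I would first try to find a vertex cut separating the upper input from the output region so that Lemma~\ref{lemma:lose-budget} applies uniformly. Failing that, I would fall back to an exhaustive enumeration of the minimal connected virtual-$0$ sets.
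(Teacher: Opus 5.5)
Your outline is the paper's proof. It uses the same three witness sequences plus monotony, then a lower bound of seven virtual $0$-kings among the $14$ gadget kings, which gives budget $14<15$ for inputs $(0,0)$ and $16$ for inputs $(1,1)$, where budget loss must then be shown. That bound is not something you can choose but must be derived: one virtual $0$-king lies in each disjoint cut $\set{C,D,E},\set{F,G},\set{H,I},\set{J,M},\set{L,N}$ of Figure~\ref{fig:king:out}, plus the output king, and connectivity rules out six. Your antenna intuition does not fit this gadget, since the upper input king has the three neighbors $C,D,E$. The paper settles the $(1,1)$ case by your fallback enumeration: it splits on the cut $\set{K,L}$ ($K$ not a virtual $0$-king, $L$ not a virtual $0$-king, or both are), and in each case exhibits a set of virtual $2$-kings with too few virtual $0$-kings in its neighborhood.
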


\begin{figure}
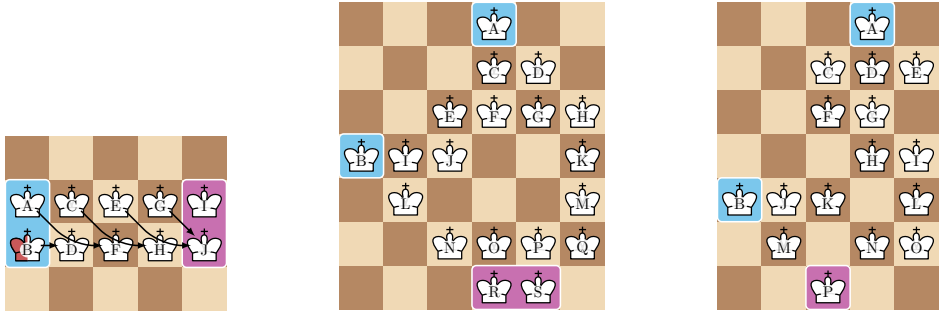

    \centering
    \resizebox{0.9\textwidth}{!}{
        \begin{newboard}{4}{5}
        \draw[inputsquare] (1,2) rectangle ++(1,2);
        \draw[outputsquare] (5,2) rectangle ++(1,2);
        \kingnew[id]{1}{2}{2}
        \kingnew[id]{1}{3}{1}
        \wireHorizontal[id]{0}{0}
        \wireHorizontal[id]{2}{0}
        \draw[move] (B) -- (D);
        \draw[move] (A) -- (D) -- (F);
        \draw[move] (C) -- (F) -- (H);
        \draw[move] (E) -- (H) -- (J);
        \draw[move] (G) -- (J);
    \end{newboard}
    \hspace{2cm}
    \begin{newboard}{7}{6}
        \draw[inputsquare] (1,4) rectangle ++(1,1);
        \draw[inputsquare] (4,1) rectangle ++(2,1);
        \draw[outputsquare] (4,7) rectangle ++(2,1);
        \kingnew[id]{4}{1}{2}
        \kingnew[id]{1}{4}{2}
        \gfunc[id]{1}{1}
        \kingnew[id]{5}{1}{2}
    \end{newboard}
    \hspace{2cm}
    \begin{newboard}{7}{5}
        \draw[inputsquare] (1,5) rectangle ++(1,2);
        \draw[inputsquare] (4,1) rectangle ++(1,1);
        \draw[outputsquare] (3,7) rectangle ++(1,1);
        \kingnew[id]{4}{1}{2}
        \kingnew[id]{1}{5}{2}
        \hfuncrot[id]{0}{0}
        \kingnew[id]{1}{6}{2}
    \end{newboard}
    }
    \caption{Left: Propagation of a $2$-signal in a wire without wire connector. Middle: The IN-B gadget. Right: The OUT gadget.}
    \label{king:crossing-gadgets}
\end{figure}

\section{Knight 2-\solo}
\label{sec:knight}
In this section, we establish the \NP-hardness of Knight 2-\solo, again by reduction from \sat.
The core idea of the reduction is the same as for King 2-\solo, however, the specific gadgets differ.
The main challenge that the knight poses in comparison to the king is the parity constraint in its movement:
a knight on a dark square always moves to a light square and vice versa.
As a result, the capture graph of any knight configuration is bipartite and contains none of the odd cycles that enabled many of the kings gadgets.
We see an example of this in the following subsection on the wire.

Nevertheless, on a high level, the approach is the same:
Given a \sat formula $\Phi$ and a tiled embedding, we construct an instance $I_\Phi$ of Knight 2-\solo that is clearable if and only if $\Phi$ is satisfiable.
To this end, we once more present gadgets for wires, wire crossings, variable assignments, OR-gates, AND-gates, and a \checking.
As before, these gadgets allow for a polynomial-time reduction.
Assuming the correctness of each of the gadgets, we directly conclude:
\begin{theorem}
    Knight 2-\solo is \NP-complete.
\end{theorem}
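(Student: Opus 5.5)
Membership in \NP{} is immediate: a clearing sequence has exactly $n-1$ captures for an instance with $n$ knights, so it is a certificate of polynomial length, and each capture and budget can be checked in polynomial time. The substance is \NP-hardness, which I would prove by a polynomial-time reduction from \sat, following the same template as Theorem~\ref{thm:king:np}. Given $\Phi$ with its tiled rectilinear embedding, I place a knight gadget in each tile: VAR (and a duplicating variant) for variables, OR and AND gates, crossings, and wire segments including straight pieces, corners and diagonal shifts. A single \checking{} gadget sits at the circuit output. Every gadget has polynomial size, and the embedding has polynomially many tiles, so $I_\Phi$ is computed in polynomial time.

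\textbf{Completeness.} Given a satisfying assignment, I first set each VAR gadget accordingly. I then process the atomic gadgets in topological order of the (acyclic) gadget graph, executing at each gadget the normalized capturing sequence that the ``computes'' property guarantees for its current inputs. Signals are encoded as the budget $0$ or $1$ on the shared input/output square, exactly as for kings. By induction along the topological order, each gadget outputs the correct Boolean value. Hence the \checking{} gadget receives a $1$-signal and can be cleared, leaving a single knight.

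\textbf{Soundness.} This is the harder direction. First I prove the knight analogue of Lemma~\ref{lemma:canonical-sequence}: any clearing sequence can be reordered into a canonical one. The tools are the following.
\begin{itemize}
\item By Lemma~\ref{lem:0-piece-properties}.\ref{lem:0-piece-properties:vertex-cut}, each shared square between adjacent gadgets is a vertex cut of the capture graph and must contain a virtual $0$-piece.
\item The \checking{} gadget is an antenna-like structure. By Observation~\ref{obs:intro:long-antenna} it must contain the final square, which forces the direction of information flow along each wire.
\item Moves in different gadgets interact only through the shared squares, so they can be commuted gadget by gadget. Observations~\ref{obs:intro:antenna} and~\ref{obs:monotony} justify moving captures earlier without decreasing budgets.
\end{itemize}
In a canonical sequence, each gadget's input is a budget in $\{0,1\}$, and by the second ``computes'' condition its output is at most the correct value. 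Reading off the assignment from the VAR gadgets, each of which cannot emit $1$ on both literal sides, monotone propagation shows that a $1$ reaching the \checking{} gadget implies $\Phi$ is satisfied.

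\textbf{Main obstacle.} The real difficulty lies in the gadget lemmas themselves, not in this assembly. Knight capture graphs are bipartite, so the odd-cycle tricks used for kings are unavailable. For each gadget I would prove the upper bound by counting: lower-bound the number of virtual $0$-knights using disjoint vertex cuts and the connectivity requirement of Lemma~\ref{lem:0-piece-properties}. Where this counting is tight, I would apply Lemma~\ref{lemma:lose-budget} to exhibit a set of virtual $2$-knights with too few $0$-neighbours, which shows that budget is lost. I expect the crossing gadget and the AND gadget to need the most case analysis, since bipartiteness gives little room to build them.
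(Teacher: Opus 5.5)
\textbf{Main gap: no knight gadgets are constructed.} Your membership argument is fine, and your assembly is the generic template the paper reuses from the king case. But for this theorem the gadgets \emph{are} the proof. The paper's argument is essentially ``the assembly works as before, given correct gadgets''. You defer all of them to ``I would prove the upper bound by counting'', which leaves nothing to which Lemma~\ref{lem:0-piece-properties} or Lemma~\ref{lemma:lose-budget} can be applied. The parity obstacle you correctly identify has to be overcome by explicit construction, and your proposal contains none. For reference, the paper builds:
\begin{itemize}
\item a wire two knights wide, with columns alternating between light and dark squares and only one of the two diagonal edges between consecutive columns, so that the degree-$4$ inner knights are exactly the virtual $0$-knights;
\item corner, thick-wire, diagonal-shift and rotation placements of that wire;
\item a \checking{} gadget ending in a length-$3$ antenna;
\item VAR as 3-VAL followed by a Decrement, and OR as a component-wise MAX;
\item AND composed of Flip, MAX, Decrement and AVP gadgets.
\end{itemize}
Crossings, which you expect to be among the hardest parts, need no gadget at all for knights. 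Two thick wires can be laid across each other with no capture edge between them.

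\textbf{The design choices you do commit to do not transfer from kings.} First, in the knight wire the natural cuts are two-knight columns, not single squares. A column carries a two-component signal: whether the outer $2$-knight is still present, and the budget of the inner knight. This gives four values $(0,0),(1,0),(0,1),(1,1)$, with $(1,0)$ and $(0,1)$ incomparable. The paper therefore has to redefine ``computes'' over this partial order and separately prove that a wire never converts $(1,0)$ into $(0,1)$ or vice versa. It then exploits the auxiliary values $(0,1)$ and $(1,1)$ to build AND from bipartite pieces. Your model of ``budget $0$ or $1$ on the shared square, exactly as for kings'' cannot express any of this, and you give no knight connector that would realize it.

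Second, your first soundness bullet is false as stated, even for kings. VAR gadgets feed several wires, so the undirected gadget graph has cycles, and interface squares are not cuts of the original capture graph. The paper obtains cut vertices only after clearing all VAR gadgets first. It then uses that every remaining gadget has exactly one outgoing edge.
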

It remains to discuss the gadgets themselves.
We begin with the wire and the signals that it carries.
\subsection{The Wire}
The wire is shown in Figure \ref{fig:knight:wire}.
\begin{figure}
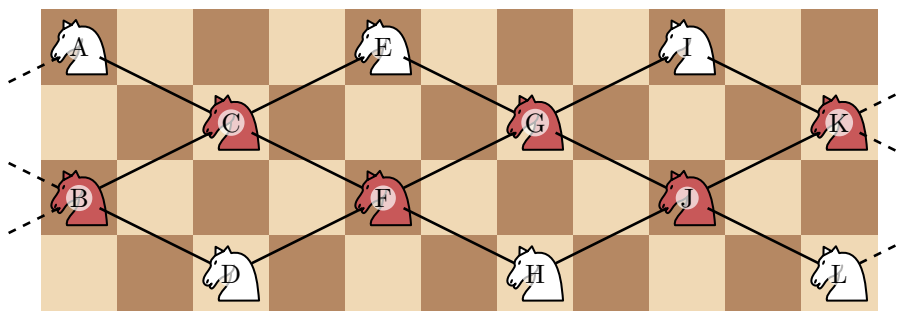
 
    \centering
    \knightWireZeroed
    \caption{The WIRE in Knight 2-\solo. Originally, all pieces have a budget of 2. We show that the knights drawn in red are virtual $0$-pieces.}
    \label{fig:knight:wire}
\end{figure}
Like the king's wire, it consists of columns of two pieces each.
However, for the above mentioned parity reasons, the wire's columns alternate between being placed on light and dark squares, to allow knights to move from one column to the next.
Another difference between the knight and the king coming to bear is that the squares attacked by the former are more spread out.
This sparsity not only ``stretches'' the knight's wire, it also leads to only one of the two diagonal edges between adjacent columns being present.
As a result, one knight of each column has degree 4, while the other knight only has degree 2.

This break of symmetry is reflected in the possible signal values.
Where the king only had one 1-signal, being interpreted as true, the knight's wire can propagate two different 1-signals that are dual to each other.
To account for this, we define our signals as two-dimensional binary vectors.
We encode false as $(0,0)$ and true as $(1,0)$.
These are the signals that travel between variable assignments, OR- and AND-gates, and the final \checking.

Additionally, we give two auxiliary signal values that are (only) used within the AND gadget and allow the latter to be decomposed into simpler parts.
These values are the ``dual 1-signal'' $(0,1)$, as well as the $(1,1)$-signal which can be thought of as a 2-signal.
Once more, these signal values adhere to a form of monotony, where $(0,0) \leq (1,0) \leq (1,1)$ and $(0,0) \leq (0,1) \leq (1,1)$.
The signal values $(1,0)$ and $(0,1)$ are incomparable.

We adapt the notion of a gadget \emph{computing} a function to Knight 2-\solo, using the above defined partial order on signal values.
This would be ill-defined for a gadget which for some input could output either $(1,0)$ or $(0,1)$ (and not $(1,1)$), however, this is never the case in our constructions.
Using this notion we state:
\begin{lemma}[Wire Lemma]
    \label{lem:knight:wire:correctness}
    The wire computes the identity function.
\end{lemma}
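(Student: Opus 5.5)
We prove the Wire Lemma in two parts, mirroring the king's wire: first we exhibit, for each input signal in $\set{(0,0),(1,0),(0,1),(1,1)}$, a normalized capturing sequence that reproduces the input at the output; then we show by a budget count that no normalized capturing sequence does better.

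\emph{Construction.} We label the wire's columns $c_0, c_1, \dots, c_k$. Each column holds a degree-$4$ knight and a degree-$2$ knight; the two entries of the signal vector refer to the budgets on these two squares (the input/output squares together with the degree-$2$ one). For input $(0,0)$ we use the pattern of Figure~\ref{fig:knight:wire} (top of the signal figure). There, the degree-$4$ knight of column $c_{i+1}$ captures into column $c_i$ and back into the next column, so that each column is cleared in two moves and the ``front'' advances one column while preserving the budget pattern. The case $(1,0)$ follows the second signal figure: the extra budget of the input knight lets it capture forward once, and the remaining moves shift by one. We then handle the dual signal $(0,1)$ and the sum $(1,1)$ analogously, combining the two patterns. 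Since the pattern repeats with period two columns, an induction on the number of columns finishes the constructive part.

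\emph{Upper bound.} We fix a normalized capturing sequence $s$ and count virtual $0$-knights. Every interior column is a vertex cut of the capture graph, since knights only attack adjacent columns; so by Lemma~\ref{lem:0-piece-properties}.\ref{lem:0-piece-properties:vertex-cut} each column contains a virtual $0$-knight. The output knights do not move. This gives total virtual budget at most the input budget plus $2\cdot(\#\text{knights} - \#\text{columns} - \text{const})$, which matches exactly the number of knights to be captured. Hence the total output budget is at most the total input budget. This already rules out $(0,0)\mapsto$ anything nonzero and $(1,0)$ or $(0,1)\mapsto (1,1)$.

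\emph{Main obstacle.} The remaining step is to rule out the wire converting $(1,0)$ into $(0,1)$ or vice versa, i.e.\ swapping the coordinates. Here the total budget is the same, so we need a structural argument. In that situation the count is tight: exactly one virtual $0$-knight per column, and $s$ loses no budget. We first apply Lemma~\ref{lem:0-piece-properties}.\ref{lem:0-piece-properties:connected} (the virtual $0$-knights are connected) to the bipartite, path-like capture graph. Since only one diagonal edge exists between adjacent columns, we expect this connectivity to force the virtual $0$-knights to lie on a fixed ``side'' determined by the first column, by induction along the wire. We then use Lemma~\ref{lemma:lose-budget} and Lemma~\ref{lem:0-piece-properties}.\ref{lem:0-piece-properties:inverse-neighborhood} near the ends to pin down which output square ends with the $1$-budget, showing it matches the input coordinate. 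We expect this propagation argument, especially handling the parity alternation of which knight has degree $4$, to be the delicate part.
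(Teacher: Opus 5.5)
\paragraph{Verdict.} There is a genuine gap: the step you flag yourself, ruling out $(1,0)\mapsto(0,1)$ and $(0,1)\mapsto(1,0)$, is only announced, and the mechanism you sketch cannot deliver it.

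\paragraph{Smaller errors before the gap.} Your constructive part and budget count follow the paper's outline, but two details are wrong.
\begin{itemize}
\item In the paper's sequences the capturing pieces are the degree-$2$ outer knights: $\threeSquareMove{D}{B}{C}$, then $\threeSquareMove{A}{C}{F}$ or $\threeSquareMove{E}{C}{F}$, then possibly $\twoSquareMove{D}{F}$. The original degree-$4$ knights are exactly the virtual $0$-knights (Lemma~\ref{lem:knight:wire-virtual}) and never move.
\item The quantity that cannot increase is virtual budget minus number of pieces, i.e.\ the number of $1$'s in the signal. It is not the total budget: $(1,0)$ is a $2$-knight plus a $0$-knight, while $(0,1)$ is a single $1$-knight.
\end{itemize}

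\paragraph{Why the sketched mechanism fails.} The virtual $0$-knights are the inner knights for \emph{every} input, so forcing them onto a fixed ``side'' tells you nothing about the input coordinate. Also, $(1,0)$ and $(0,1)$ do not differ in which output square holds a $1$-budget. They differ in whether the outer knight is present as an untouched $2$-knight (with a $0$-knight on the inner square) or is gone (with a $1$-knight on the inner square). You need an argument for how this distinction is carried along the whole wire; an argument ``near the ends'' does not supply it.

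\paragraph{How the paper closes it.} It uses a local induction in the tight case, where no budget may be lost because both signals have the same excess.
\begin{itemize}
\item With input $(1,0)$, the $0$-knight on $B$ can only leave via $\threeSquareMove{D}{B}{C}$, and this may be played first.
\item The resulting $0$-knight on $C$ can only be taken by $A$ or $E$, since $F$ is a virtual $0$-knight.
\item If $E$ takes it, the eventual capture of $A$ into its only neighbour $C$ loses budget.
\item Hence $\threeSquareMove{A}{C}{F}$ is played, and a $(1,0)$-signal reappears two columns to the right. The case $(0,1)$ is symmetric.
\end{itemize}

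\paragraph{A repair of your own route.} The Hall-type lemmas you cite do suffice if you apply them globally to the right set. Take the path $A, C, E, G, I, \dots$ formed by the outer knights of dark-squared columns and the inner knights of light-squared columns; the former are adjacent only to the latter. If the output column lies $2k$ columns to the right of the input column, this path has $k+1$ such outer knights and $k$ such inner knights.
\begin{itemize}
\item For $(1,0)\mapsto(0,1)$, all $k+1$ outer knights move. Lemma~\ref{lemma:lose-budget} then forces budget loss, contradicting tightness.
\item For $(0,1)\mapsto(1,0)$, neither end outer knight moves. The $k$ inner knights then have only $k-1$ virtual $2$-neighbours (the $1$-knight on $B$ cannot clear $C$). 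This contradicts Lemma~\ref{lem:0-piece-properties}.\ref{lem:0-piece-properties:inverse-neighborhood}.
\end{itemize}
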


In the following, we always assume that the wire is cleared from left to right.
To show the main lemma, we first show that any capturing sequence $s$ clearing the wire indeed induces the highlighted set of virtual $0$-knights.
Note that these are precisely the knights which have degree 4 in the full wire, we call them \emph{inner knights}, while degree two vertices are called \emph{outer knights}.
\begin{lemma}
\label{lem:knight:wire-virtual}
    The set of virtual $0$-knights of a wire is exactly its set of inner knights.
\end{lemma}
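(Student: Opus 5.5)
Let me fix the structure. Label the columns of the wire $c_1, \dots, c_m$ from left to right, where column $c_i$ holds an inner knight $p_i$ and an outer knight $q_i$. By the figure, the edges between consecutive columns form a path: $p_i$ is adjacent to both $p_{i+1}$ and $q_{i+1}$ (or symmetrically $q_i$ to $p_{i+1}$, depending on orientation). Each outer knight has exactly one neighbour in each adjacent column, namely an inner knight, while each inner knight is adjacent to both knights of one neighbouring column. So every column $\set{p_i,q_i}$ except the end columns separates the capture graph, and so does the singleton $\set{p_i}$ whenever $q_i$'s only neighbours are inner knights of neighbouring columns.

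The plan has two halves. First, I show every inner knight is a virtual $0$-knight. For an interior column $i$, the inner knight $p_i$ alone is a vertex cut: removing it separates everything to its left from everything to its right, because $q_i$ reaches across only via $p_{i\pm1}$ and the edge pattern forces every left-right path through $p_i$. I will check this directly from the edge list in Figure~\ref{fig:knight:wire}. Then Lemma~\ref{lem:0-piece-properties}.\ref{lem:0-piece-properties:vertex-cut} gives that $p_i$ is a virtual $0$-knight. For the end columns I use the convention that the wire is cleared left to right inside the whole instance, so the adjacent gadget squares make the end inner knights cut vertices as well. Alternatively, I argue that the final square lies beyond the wire and apply Lemma~\ref{lem:0-piece-properties}.\ref{lem:0-piece-properties:connected}: the virtual $0$-knights must connect the cut vertices on both sides of the wire.

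Second, I show no outer knight is a virtual $0$-knight. This is a counting argument, and it is the main obstacle because the budget accounting must be exact. The wire has $2m$ knights, and the signal entering it carries at most budget $(1,1)$. If some outer knight were also virtual $0$, then the total virtual budget would be $2(m - 1) + (\text{input budget}) < 2m$ plus whatever the input contributes. That is not enough to capture all $2m$ knights and still pass on the signal, contradicting that $s$ clears the wire. I will make this precise as in the king wire proof, comparing the virtual budget against the number of captures. A cleaner route is Lemma~\ref{lem:0-piece-properties}.\ref{lem:0-piece-properties:inverse-neighborhood}. Take $X$ to be the set of inner knights together with some outer knight $q_i$, excluding the final square. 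The virtual $2$-knights in $N(X)$ are only the remaining outer knights, so there are fewer than $|X|$ of them, which is a contradiction. For this I need the number of virtual $2$-knights adjacent to the inner knights to be exactly tight. I expect the hardest part to be handling the boundary columns and the input and output squares so that this inequality holds exactly. I would first try the neighbourhood-count argument, restricting $X$ to the inner knights strictly inside the wire, and fall back on the global budget count only if the boundary terms do not cancel.
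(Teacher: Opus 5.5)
\textbf{The first half rests on a false claim: a single inner knight $p_i$ is never a vertex cut of the wire.} The outer knight $q_i$ of the same column is adjacent to $p_{i-1}$ and $p_{i+1}$, so $p_{i-1}, q_i, p_{i+1}$ is a path that avoids $p_i$. In Figure~\ref{fig:knight:wire}, removing $F$ still leaves the path $C, E, G$. Your own remark that $q_i$'s only neighbours are $p_{i\pm 1}$ is exactly why $\{p_i\}$ is \emph{not} a cut. The connectivity fallback fails for the same reason: a left-to-right path of virtual $0$-knights may detour through $q_i$. So nothing in your plan rules out a column whose roles are swapped, with $q_i$ a virtual $0$-knight and $p_i$ a virtual $2$-knight.

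The second half cannot repair this. The global budget count only sees how many virtual $0$-knights there are, and a swap does not change that number. Even an extra virtual $0$-knight is not a contradiction by itself, because the wire may simply output a weaker signal. The neighbourhood count with $X$ equal to all inner knights plus $q_i$ assumes the first half. As you suspected, it also does not close, because of the neighbours outside the wire.

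\textbf{The missing idea is the family of two-element cuts $\{p_i, p_{i+1}\}$}, formed by the inner knights of consecutive columns and used together with the column cuts $\{p_i, q_i\}$. The argument has to run in the opposite order from yours.
\begin{enumerate}
\item Column cuts give at least one virtual $0$-knight per column, so it suffices to show that no outer knight is a virtual $0$-knight.
\item Suppose $q_i$ were one. By Lemma~\ref{lem:0-piece-properties}.\ref{lem:0-piece-properties:inverse-neighborhood}, one of its two neighbours, say $p_{i+1}$, is a virtual $2$-knight.
\item The cuts $\{p_i,p_{i+1}\}$, $\{p_{i+1},q_{i+1}\}$ and $\{p_{i+1},p_{i+2}\}$ then force $p_i$, $q_{i+1}$ and $p_{i+2}$ to be virtual $0$-knights.
\item Now the virtual $0$-knight $q_{i+1}$ has no virtual $2$-neighbour, which contradicts the same lemma.
\end{enumerate}
This is the paper's proof, with $E, G, F, H, J$ in the roles of $q_i, p_{i+1}, p_i, q_{i+1}, p_{i+2}$. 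That every inner knight is a virtual $0$-knight then follows from the column cuts.

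Once all inner knights are known to be virtual $0$-knights, excluding an outer one is immediate by taking $X=\{q_i\}$. So your careful counting was aimed at the easy part, while the hard part rested on the false cut claim.
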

\begin{proof}
    Each column of the wire forms a cut, containing one inner knight and one outer knight.
    By Lemma \ref{lem:0-piece-properties}.\ref{lem:0-piece-properties:vertex-cut}, at least one of of these is a virtual $0$-knight.
    Thus, it suffices to show that no outer knight is a virtual $0$-knight.
    Assume for contradiction that there is an outer knight that is a virtual $0$-knight, say $E$.
    By Lemma \ref{lem:0-piece-properties}.\ref{lem:0-piece-properties:inverse-neighborhood}, at least one neighbor of $E$, without loss of generality $G$, is a virtual $2$-knight.
    Then, $F$, $H$ and $J$ are the virtual $0$-knights of the cuts $\{F,G\}$, $\{G,H\}$ and $\{G,J\}$ respectively.
    As a result, both neighbors of virtual $0$-knight $H$ are virtual $0$-knights themselves, a contradiction to Lemma \ref{lem:0-piece-properties}.\ref{lem:0-piece-properties:inverse-neighborhood}.
    The claim follows.
\end{proof}
\begin{corollary}
\label{cor:knight:inner-inner}
    In every capturing sequence that clears the wire from left to right, the inner knight of each column captures the inner knight of the column to its right.
\end{corollary}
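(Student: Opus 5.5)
The plan is to derive the corollary from Lemma~\ref{lem:knight:wire-virtual} together with the structure of the capture graph of the wire. By that lemma, in any capturing sequence $s$ that clears the wire from left to right, the inner knights are exactly the virtual $0$-knights and the outer knights are exactly the virtual $2$-knights. The claim then reduces to identifying, for each inner knight, which piece captures it.

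First we record the local structure visible in Figure~\ref{fig:knight:wire}. Inner knights have degree $4$ and outer knights degree $2$. Each outer knight is adjacent only to inner knights, namely the inner knight of the previous column and the inner knight of the next column. Consecutive inner knights are adjacent to each other, so the inner knights form a path.

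Second, we use the property of virtual $0$-pieces from the proof of Lemma~\ref{lem:0-piece-properties}. A virtual $0$-knight $X$ not on the final square leaves its square only by capturing, and it must eventually capture toward the final square along a path of virtual $0$-knights. Since $X$ never moves from its original square, its first capture is made from that square. Its target is a virtual $0$-knight still occupying its original square, or a square re-occupied later. Because outer knights are virtual $2$-knights, the only virtual $0$-neighbours of an inner knight are the inner knights of the adjacent columns. With left-to-right clearing, the final square lies to the right, so the inner knight's move goes to the right.

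Third, we must rule out the inner knight capturing some piece that has arrived on the right inner square from elsewhere, or capturing leftward. The main obstacle is making ``clears from left to right'' precise. I would interpret it as saying that when the inner knight of column $i$ moves, no square in columns $>i$ has yet been involved, so its target is the original inner knight of column $i+1$. To justify this, I would use the budget count from Lemma~\ref{lemma:lose-budget}. Each outer ($2$-)knight must capture two virtual $0$-knights, and its only neighbours are the two adjacent inner knights. Hence each outer knight captures exactly its left inner neighbour and then its right one. The resulting chain of captures then forces each inner knight's single move to be onto the square of the next inner knight, and this is the inner knight of the column to its right.
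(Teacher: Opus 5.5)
Your first step is fine and matches the paper. Outer knights are virtual $2$-knights, so no capture ever lands on an outer square. Hence the capture out of the inner square of column $i$ goes to the inner square of column $i-1$ or $i+1$.

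\textbf{The gap is excluding the leftward capture.} Your second paragraph only asserts it (``the final square lies to the right, so the move goes to the right''), and the justification in your third paragraph does not work, for three reasons.
\begin{enumerate}
\item Reading ``clears from left to right'' as ``no square in a column $>i$ has been involved yet'' assumes the column-by-column order that the paper derives \emph{from} this corollary, so it is circular. It is also false for the paper's own sequences: in $(D\rightarrow B\rightarrow C)$, the outer knight $D$ of the second column moves before $B$'s square is vacated.
\item Lemma~\ref{lemma:lose-budget} applies only to sequences that do not lose budget. The corollary concerns every clearing sequence. For example, with a $(1,1)$-input, $(D\rightarrow B\rightarrow C)$ captures the $1$-knight $B$, yet $(A\rightarrow C\rightarrow F)$ still completes it.
\item Even without budget loss, ``each outer knight captures its left inner neighbour and then its right one'' is false. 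After its first capture the knight stands on the captured square, so its second target is a neighbour of \emph{that} square. In $(D\rightarrow B\rightarrow C)$, $D$ next takes $C$, the inner knight of its own column, not its right neighbour $F$. In $(A\rightarrow C\rightarrow F)$, $A$'s first capture already goes to the right.
\end{enumerate}

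\textbf{The missing idea is short.} Suppose the capture out of the inner square of column $i$ goes into column $i-1$. Edges only join consecutive columns, so from then on the outer knight of column $i$ is the only link between the occupied square in column $i-1$ and the final square on the right. (If it had already moved, the graph would already be disconnected.) So it is a cut vertex of the remaining capture graph. By Lemma~\ref{lem:0-piece-properties}.\ref{lem:0-piece-properties:vertex-cut} it is then a virtual $0$-knight, contradicting Lemma~\ref{lem:knight:wire-virtual}. This is the paper's argument.

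Equivalently, in your path language: a vacated square is never re-entered, since every move is a capture and so targets an occupied square. Outer squares never receive captures. So once the flow enters column $i-1$ from column $i$, nothing can cross back to the final square. With this, the budget counting of your third paragraph is unnecessary.
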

\begin{proof} 
    By the previous lemma, the inner knight of the column does not capture the outer knight to its right.
    If it captures to the left, the outer knight in its column becomes a cut vertex and virtual $0$-knight, again violating the previous lemma.
\end{proof}
Using these properties, we define an ordered way to clear the wire:
For each column, all captures from or to its left happen strictly before all captures from or to its right.
Any capturing sequence clearing the wire can be brought into this form by repeatedly swapping consecutive moves that are ordered incorrectly.
If the two moves do not share a square, they can safely be swapped. 
Otherwise, for a move from the right and a move from the left to interfere, they must share their destination square, namely the inner knight of the middle column.
By Corollary \ref{cor:knight:inner-inner}, the move from the right is by an outer knight.
Then, performing the left move \emph{followed by} the right move results in a knight with budget 1, which is no less than without the swap.
Thus, the resulting configuration is greater or equal than without the swap, and by monotony, the capturing sequence remains valid.

Using ordered clearing sequences, we define possible signal values.
Observe that due to the knight's parity constraints, wire columns alternate between \emph{dark-squared} and \emph{light-squared} columns.
We define how to read off the value of a signal on a dark-squared wire column.
This is done right after the final move from the column to its left.
At this point, each column to the left is cleared while each column to the right is untouched.
Our signal then consists of two binary components, so possible signal values are $(0,0), (0,1), (1,0)$ and $(1,1)$.
The first component depends on the outer knight of the signal column.
Since no outer knight is captured, the two options are for it to be present as a 2-knight, or to not be present at all.
In the first case the component has value $1$, in the second case it has value $0$.
The second component depends on the inner knight of the column.
Being a virtual $0$-knight, it is not present as a $2$-knight, only with budget $0$ or $1$ (the latter occurs if it was captured by the outer knight to its left).
We define the second signal component as the budget of this inner knight.
Figure \ref{fig:knight:wire:signals} shows two possible signal values and how to propagate them through the wire.
\begin{figure}
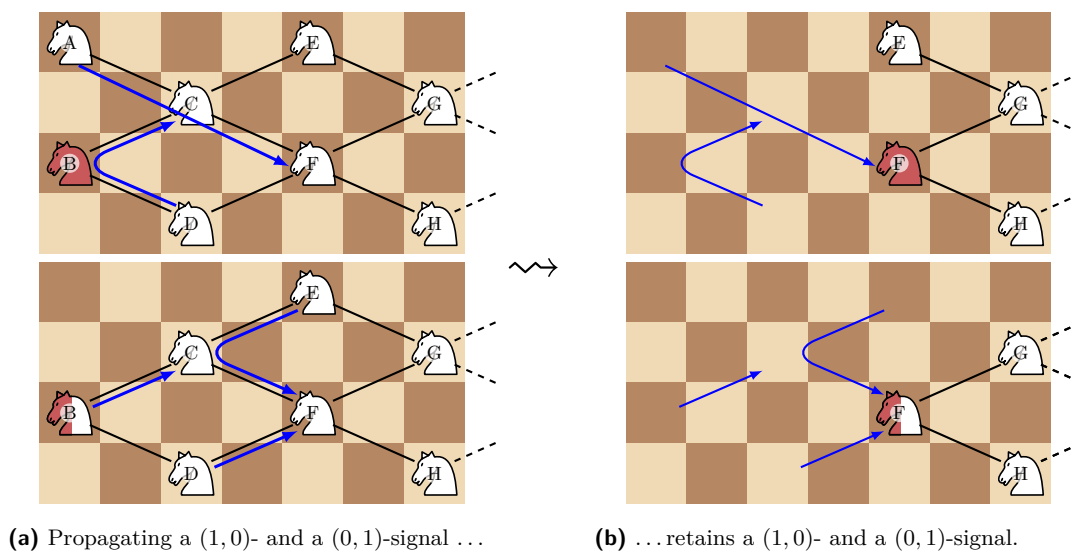

    \begin{subfigure}{0.45\textwidth}
        \centering
        \scalebox{0.8}{
            \knightSignals
        }
        \caption{Propagating a $(1,0)$- and a $(0,1)$-signal \dots}
    \end{subfigure}
    \begin{subfigure}{0.09\textwidth}
        \centering
        \huge{$\rightsquigarrow$}
        \vspace{3.6cm}
    \end{subfigure}
    \hfill
    \begin{subfigure}{0.45\textwidth}
        \centering
        \scalebox{0.8}{
            \knightAfterSignals
        }
        \caption{\dots retains a $(1,0)$- and a $(0,1)$-signal.}
    \end{subfigure}
    
    \caption{Two Knight Wires propagating a signal each.}
    \label{fig:knight:wire:signals}
\end{figure}
The $(1,0)$- and $(0,1)$-signals are unusual in that they are dual to each other.
A $(1,0)$-signal reads off a dark-squared column the same as a $(0,1)$-signal would on a light-squared column.
Nevertheless, for parity reasons a $(1,0)$-signal is never converted into a $(0,1)$-signal and vice versa.
We see this in the proof of Lemma \ref{lem:knight:wire:correctness}.
\begin{proof}[Proof of the \protect{\hyperref[lem:knight:wire:correctness]{Wire Lemma}}]
    Figure \ref{fig:knight:wire:signals} sketches how to obtain capturing sequences for each of the four signal values.
    In particular, for a second component of $0$ or $1$, the sequence starts with captures $\threeSquareMove{D}{B}{C}$ or $\twoSquareMove{B}{C}$ respectively.
    For a first component of $0$ or $1$ it continues with $\threeSquareMove{E}{C}{F}$ or $\threeSquareMove{A}{C}{F}$.
    Finally, for a second component of $1$, a final move $\twoSquareMove{D}{F}$ is appended.

    To see that signal values never increase, we again consider the difference between the total virtual budget and the number of pieces of the wire.
    For the extreme case of just one column, this difference is precisely one lower than the number of $1$'s in the signal, i.e., for signal $(0,0)$ it is -1, for signal $(1,1)$ it is 1 and for signals $(0,1)$ and $(1,0)$ it is 0.
    Each additional wire column contributes two units of virtual budget and two pieces.
    Thus, the difference never increases.
    It follows that signal values propagated through the wire don't increase.

    Finally, we show that a $(1,0)$-signal indeed is never converted into a $(0,1)$-signal and vice versa.
    By the non-increasing difference above, any capturing sequence converting one into the other does not lose budget.
    Consider the configuration of a $(1,0)$-signal, as seen in Figure \ref{fig:knight:wire:signals}.
    The signal-column $0$-knight leaves its square eventually by a double capture from an adjacent $2$-knight, $\threeSquareMove{D}{B}{C}$.
    As this move does not interfere with any other moves, we may assume that it is the first pair of moves in the capturing sequence.
    In the resulting configuration, knight $C$ has two neighbors with a budget of $2$, $A$ and $E$.
    If it is captured by $E$, the eventual capture of $A$ to its only neighbor loses budget.
    Otherwise, $A$ captures $C$ and $F$, restoring a $(1,0)$-signal.
    The reverse case is analogous.
\end{proof}

To align wires and gadgets on the board, we give a number of placements of the wire on the chess board.
We note that each of these correspond to the same capture graph, thus, the wire functions identically in each of the placements.
Figure \ref{fig:knight:wire:corner-thick} shows a wire turning a corner, as well as a wire turning into a \emph{thick wire}.
\begin{figure}
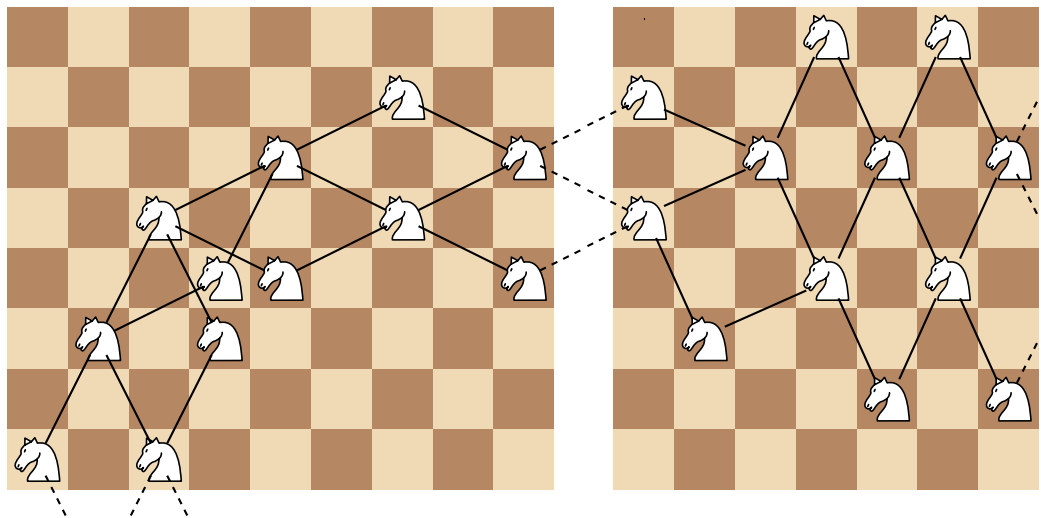

    \centering
    \begin{subfigure}[t]{0.55\textwidth}
        \vspace{0pt}
        \scalebox{0.8}{
            \knightWireCorner
        }
    \end{subfigure}
    \hfill
    \begin{subfigure}[t]{0.43\textwidth}
        \vspace{0pt}
        \scalebox{0.8}{
            \knightHighWire
        }
    \end{subfigure}
    \caption{Two WIRE placements. Left: A wire turning a corner. Right: A standard wire turning into a thick wire.}
    \label{fig:knight:wire:corner-thick}
\end{figure}
Combining these enables the \emph{wire crossing}, seen in Figure \ref{fig:knight:wire:crossing}, without the need for a dedicated wire crossing gadget.
\begin{figure}
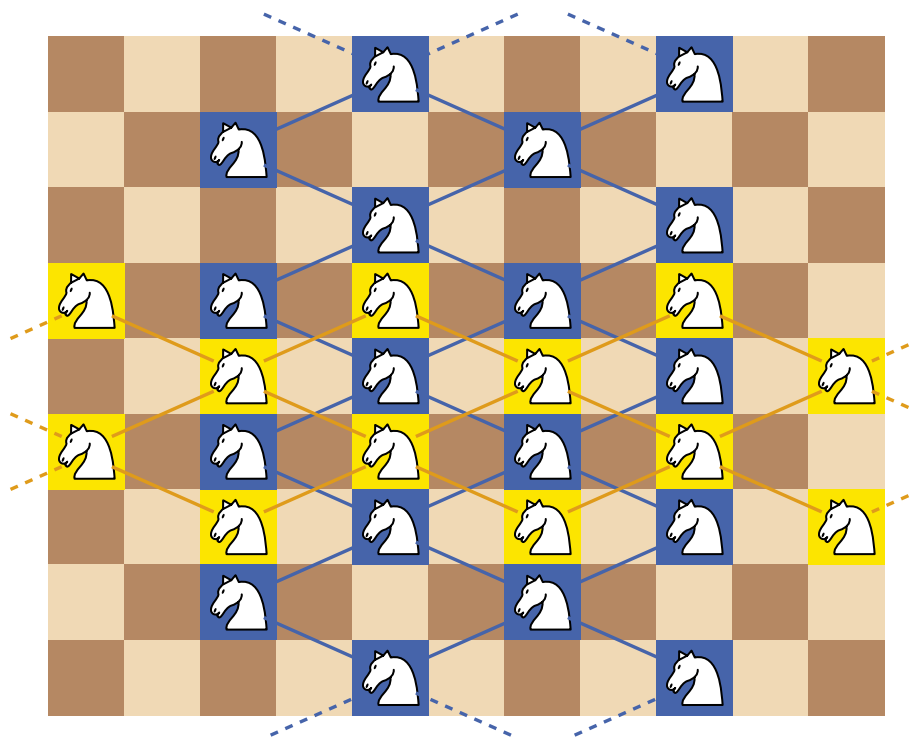

    \centering
    \knightWireCrossing
    \caption{A wire crossing of a yellow and a blue wire utilizing the \emph{thick wire} placement. Note that no yellow piece can capture any blue pieces and vice versa.}
    \label{fig:knight:wire:crossing}
\end{figure}

While the king wire can have any length, the knight wire only repeats every four columns.
In Figure \ref{fig:appendix:knight:wire-shift}, we give additional placements to shift the wire (diagonally) by a single square to be able to align wires correctly for their gadgets.
\begin{figure}
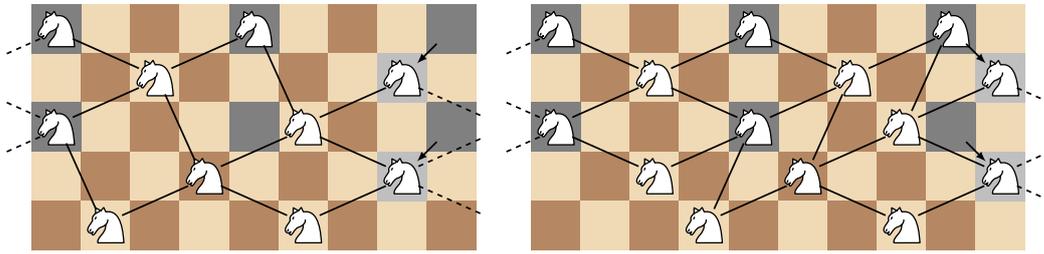

    \scalebox{0.65}{\knightWireDiagonalBackward}
    \scalebox{0.65}{\knightWireDiagonalForward}
    \caption{Shifting a wire diagonally backward and forward respectively.}
    \label{fig:appendix:knight:wire-shift}
\end{figure}
Note that due to parity, dark-squared columns are always mapped to dark-squared columns, so it is impossible to shift the wire by a single non-diagonal square.
However, as inputs to gadgets are of a fixed color, they can still always be aligned correctly.

Finally, after passing a series of corners and gadgets, a wire may be oriented so as to have the dark-squared column be aligned towards the top or towards the bottom.
To switch between those two modes, we give a rotation placement, seen in Figure \ref{fig:appendix:knight:wire-rotate}.
\begin{figure}
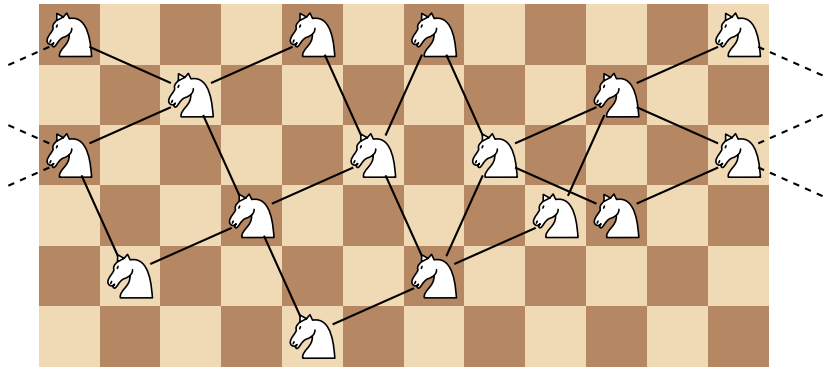

    \centering
    \scalebox{0.8}{
        \knightRotatedWire
    }
    \caption{A rotation placement yields a wire with light-squared columns on top and dark-squared columns below.}
    \label{fig:appendix:knight:wire-rotate}
\end{figure}
%

\subsection{The \checking Gadget}
The \checking gadget for the knight not only functions the same but is also implemented analogously to its king counterpart.
Shown in Figure \ref{fig:knight:check}, it consists of an input wire connected to a path of length 3.
\begin{figure}
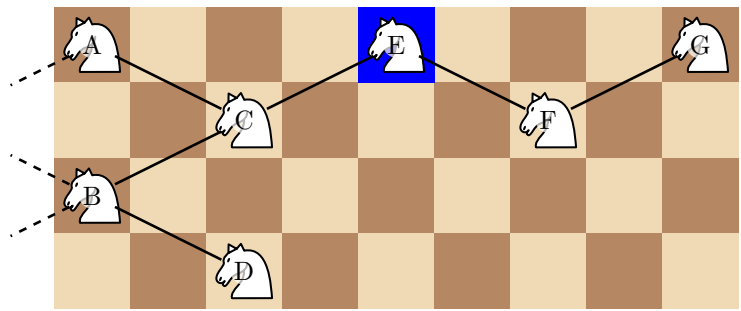

    \centering
    \knightCheckLabeled
    \caption{The \checking gadget for the knight contains the final square for each clearing sequence.}
    \label{fig:knight:check}
\end{figure}
This antenna of length 3 guarantees, by \cref{obs:intro:long-antenna}, that any clearing sequence of an instance has its final square in the \checking gadget, without loss of generality on $E$.
\begin{lemma}
\label{lem:knight:check}
    The \checking gadget reduces to a single piece if it receives a $(1,0)$-signal as input.
    If it receives a $(0,0)$-signal as input, it does not reduce to a single piece.
\end{lemma}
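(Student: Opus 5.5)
The plan is to treat the two claims separately. For the positive case I exhibit an explicit clearing sequence. For the negative case I combine the antenna observations with the fact that a $(0,0)$-signal leaves no usable piece on the outer square $A$.

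Following Figure~\ref{fig:knight:check}, the pieces and their capture-graph edges are as follows. $A$ and $B$ form the signal column and $C,D$ the next wire column, with edges $A$–$C$, $B$–$C$ and $B$–$D$. Then comes the path $C$–$E$–$F$–$G$. Since the wire is cleared from left to right, at the moment the signal is read on the column $\{A,B\}$ everything to its left is gone. So $A$ and $B$ only interact with $C$ and $D$, and $C,D,E,F,G$ are untouched $2$-knights. By the definition of the signal, $B$ is present with budget $0$ in both cases. $A$ is present as a $2$-knight for input $(1,0)$ and is absent for input $(0,0)$.

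\textbf{Input $(1,0)$.} I play $\threeSquareMove{D}{B}{C}$, leaving a $0$-knight on $C$. Next I play $\threeSquareMove{A}{C}{E}$, leaving a $0$-knight on $E$. Finally I play $\threeSquareMove{G}{F}{E}$. Exactly one piece remains, on $E$, so the gadget reduces to a single piece.

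\textbf{Input $(0,0)$.} First I fix where the final square can be. The path $(G,F,E,C)$ is an antenna of length $3$ whose leaf $G$ has budget $2<3$. Hence by Observation~\ref{obs:intro:long-antenna} every clearing sequence has its final square in $\{G,F,E\}$, which is the remark made just before the lemma. In particular, the final square is neither $D$ nor $B$. Now $D$ is a leaf whose only neighbour is $B$. Since $A$ is absent, $B$ has exactly one other neighbour, $C$, so $(D,B,C)$ is an antenna. By Observation~\ref{obs:intro:antenna} any clearing sequence contains $\threeSquareMove{D}{B}{C}$, and I may move it to the front without decreasing the resulting configuration. Afterwards $C$ holds a $0$-knight whose only remaining neighbour is $E$, because $A$, $B$ and $D$ are all gone. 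That piece is a stranded leaf, so it could only survive as the final piece, which would put the final square on $C$. This contradicts the restriction of the final square to $\{G,F,E\}$, so no clearing sequence exists.

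The only step needing care is the use of Observation~\ref{obs:intro:antenna}. The shifted sequence is merely $\geq$ the original, so I argue via Observation~\ref{obs:monotony}: if the original sequence cleared the gadget, the shifted one would too. The shifted one then strands the $0$-knight on $C$, giving the contradiction. I should also double-check that no move from outside the gadget, such as the wire to the left, can reach $C$ after the signal is read; this follows from the ordered left-to-right clearing of wires established before the lemma.
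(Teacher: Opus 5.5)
Your proof is correct and follows essentially the paper's argument. You use the same explicit sequence $\threeSquareMove{D}{B}{C}, \threeSquareMove{A}{C}{E}, \threeSquareMove{G}{F}{E}$ for input $(1,0)$, and for $(0,0)$ the same observation that the piece from $D$ must pass through $B$ and land on $C$ with budget $0$, where it is stuck. The only difference is cosmetic: you rule out $C$ as final square via the length-$3$ antenna $(G,F,E,C)$, whereas the paper first plays $\threeSquareMove{G}{F}{E}$, so that the $0$-knight on $C$ is adjacent only to the $0$-leaf on $E$.
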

\begin{proof} 
    For input $(1,0)$, knight $A$ is present with a budget of 2.
    Then the sequence $\threeSquareMove{D}{B}{C}, \threeSquareMove{A}{C}{E}, \threeSquareMove{G}{F}{E}$ reduces the gadget to a single piece, as claimed.

    For input $(0,0)$, knight $A$ is not present.
    By Lemma \ref{obs:intro:antenna}, any clearing sequence $s$ (with final square $E$) can be assumed to start with captures $\threeSquareMove{G}{F}{E}$.
    Since knight $B$ is the only neighbor of $D$, it is captured by $D$ at some point.
    Its only path to final square $E$ is via $C$, so $s$ contains the capture $\twoSquareMove{B}{C}$.
    As knight $B$ has a budget of at most 1 before this capture, it results in a 0-leaf on $C$, only adjacent to 0-leaf $E$.
    We conclude that no such clearing sequence $s$ exists.
\end{proof}

\subsection{The Variable Assignment}
%
%
We now review the VAR gadget.
It assigns the value of the literals of a binary variable $x$, having outputs $x$ and $\lnot x$, by producing a true signal to the left and a false signal to the right, or a false signal to the left and a true signal to the right.
Translated to our two-dimensional signals, where true is encoded by $(1,0)$ and false is encoded by $(0,0)$ we state:
\begin{lemma}
\label{lem:knight:var-assign}
    A variable assignment gadget produces one $(1,0)$- and one $(0,0)$-signal.
\end{lemma}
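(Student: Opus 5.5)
We have not seen the knight VAR gadget itself, since its figure comes after the statement. I will therefore assume it mirrors the king VAR gadget: a short central path of knights whose middle knight is a cut vertex, attached on each side to a wire. The left wire carries the literal $x$ and the right wire carries $\lnot x$.

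The proof has two halves. First, I exhibit explicit capturing sequences. Second, I show that no normalized capturing sequence can produce $(1,0)$ on both sides, or a $(1,1)$/$(0,1)$ signal on either side.

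For the constructive half I will write down, as in the Wire Lemma, a concrete sequence for the case ``true on the left''. The central knight captures towards the right neighbour and then continues. This leaves the right wire's first dark-squared signal column without an outer $2$-knight and with an inner $0$-knight, i.e.\ a $(0,0)$-signal. The left side is left untouched, so its outer knight is present as a $2$-knight, i.e.\ a $(1,0)$-signal. The mirrored sequence gives the other assignment. One must also check that the signal columns are read off after the gadget's moves. By the ordered-clearing argument for wires, these moves can be assumed to happen before any wire move.

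For the upper bound I use the budget-difference argument from the Wire Lemma together with Lemma~\ref{lem:0-piece-properties}. Every column of the central path (and of the attached wire stubs) is a vertex cut, so each contains a virtual $0$-knight by Lemma~\ref{lem:0-piece-properties}.\ref{lem:0-piece-properties:vertex-cut}. By Lemma~\ref{lem:knight:wire-virtual}, inside the wires these are exactly the inner knights. I then count: total virtual budget minus the number of knights that must be captured. The aim is to show that this surplus, summed over both output signals, is at most $1$ unit. Since $(1,0)$ contributes one unit and $(0,0)$ contributes none, two $(1,0)$-signals, or any signal with two $1$'s, would be impossible.

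The remaining cases are outputs $(0,1)$, or $(0,1)$ on one side and $(0,0)$ on the other. I would exclude these by the parity argument at the end of the Wire Lemma proof. The first capture out of the central region lands on a fixed colour. If it lands on the inner knight of a signal column, the outer knight there becomes a stranded leaf or forces a budget loss, which Lemma~\ref{lemma:lose-budget} rules out because we are already budget-tight.

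I expect the main obstacle to be this last step. The counting bound alone only shows the total number of $1$'s is at most one. Distinguishing $(1,0)$ from the dual $(0,1)$ requires a local case analysis of which neighbour first captures the central cut knight and of the resulting neighbourhood counts. I would first try to reduce it to the same ``captured by $E$ versus by $A$'' dichotomy used in the Wire Lemma proof.
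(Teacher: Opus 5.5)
Your proposal has a genuine gap: you never fix the gadget, and the one you assume is not the paper's. The paper deliberately does \emph{not} mirror the king gadget; it only remarks that such an analogue would be valid. Its VAR gadget is the 3-VAL gadget followed by a $-(0,1)$ Decrement on the right output, and the lemma is proved compositionally.

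Lemma~\ref{lem:knight:3-val} does the real work, by an exhaustive case analysis on the blue knight $G$. $G$ cannot be captured by $H$ (Observation~\ref{obs:intro:long-antenna}). So either $\twoSquareMove{F}{G}$ happens, or $G$ captures to the left, or $G$ captures to the right. These cases give exactly the pairs $(1,1)\leftrightarrow(0,0)$, $(0,0)\leftrightarrow(1,1)$ and $(1,0)\leftrightarrow(0,1)$. Lemma~\ref{lem:knight:decrement}, which maps $(x,1)\mapsto(x,0)$ and $(x,0)\mapsto\bot$, then rules out the first pair. It turns the other two into $(0,0)\leftrightarrow(1,0)$ and $(1,0)\leftrightarrow(0,0)$.

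This is not a cosmetic difference. The 3-VAL gadget is essentially what you describe: a short path $E$--$F$--$G$--$H$ with a cut knight $G$ between two wire stubs. For it, both of your key claims fail.
\begin{itemize}
\item The surplus summed over both outputs is $2$, not at most $1$.
\item The dual signal is not excluded by parity; it genuinely occurs. $\twoSquareMove{G}{H}$ yields $(0,1)$ on the right, while $\threeSquareMove{F}{E}{D}$ yields $(1,0)$ on the left.
\end{itemize}
So the step you defer as the ``main obstacle'' is not a routine local check. Whether it holds depends on geometry you never specify. The paper sidesteps it: the Decrement absorbs one unit of surplus and converts $(0,1)$ into $(0,0)$ and $(1,1)$ into $(1,0)$.

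Your constructive half is also inconsistent with the signal definition. A side cannot stay ``untouched'': a $(1,0)$ read-off requires the inner knight of the signal column to have budget $0$. So some knight must capture into it, here via $\threeSquareMove{F}{E}{D}$. Otherwise the path knights on that side are never cleared.
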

While an analogue to the king VAR gadget would be valid, we provide a slightly more general construction that proves useful for the remaining gadgets.
To this end, we decompose the gadget into two parts: the \emph{3-valued production} gadget (3-VAL), and the Decrement gadget.

The 3-VAL gadget, shown in Figure \ref{fig:knight:var-assign:3-var}, has no inputs and two outputs.
\begin{figure}
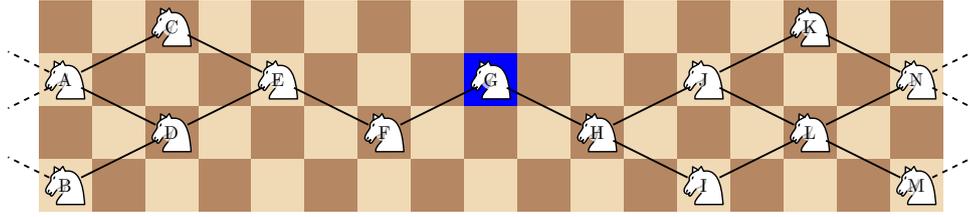

    \centering
    \scalebox{0.7}{\knightThreeVariable}
    \caption{The three valued production gadget produces left and right outputs $(1,1) \leftrightarrow (0,0)$ or $(1,0) \leftrightarrow (0,1)$ or $(0,0) \leftrightarrow (1,1)$.}
    \label{fig:knight:var-assign:3-var}
\end{figure}
It can produce three different output pairs.
\begin{lemma}[3-VAL Lemma]
\label{lem:knight:3-val}
    The 3-VAL gadget produces either a $(1,1)$-signal to the left and a $(0,0)$-signal to the right, or a $(1,0)$-signal to the left and a $(0,1)$-signal to the right, or a $(0,0)$-signal to the left and a $(1,1)$-signal to the right.
\end{lemma}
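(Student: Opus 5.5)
We have to reconstruct the 3-VAL gadget from the figure code (\texttt{knightThreeVariable}). It has knights 1 through 14, labelled A through N. A, B, C, D are a wire column pair on the left side. E sits at (5,2), F at (7,3), G at (9,2) (the blue square), and H at (11,3). I through N form the right wire start. The middle part E--F--G--H is a path. From the edges, E is adjacent to C and D, F--E, G--F, H--G, and H is adjacent to J and I. So the capture graph is a left wire end (the columns A,B and C,D), a path E--F--G--H, and a right wire end (the columns I,J and K,L,\dots). The plan is to read off signals at the left end (after clearing everything to its right) and at the right end, treating each side as the dark-squared signal column of the respective wire.

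First I show that the three output pairs are achievable by explicit capturing sequences, one per case. The central path E--F--G--H of 2-knights acts as a reservoir of budget that can be pushed either left or right. In the balanced case $(1,0)\leftrightarrow(0,1)$, the path is split in the middle, for instance by $\threeSquareMove{F}{G}{H}$-type moves towards one side and $\threeSquareMove{G}{F}{E}$ towards the other. In the extreme cases, the entire path collapses towards one side, feeding that side a $(1,1)$-signal and leaving a bare $(0,0)$ on the other. I will write these sequences down exactly as in the proof of the Wire Lemma (e.g.\ sequences of the form $\threeSquareMove{H}{G}{F}$, $\threeSquareMove{E}{F}{\cdot}$ followed by the wire-column moves) and verify the resulting budgets against the definition of signal values.

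Second, I prove that nothing better is possible. I use a budget-counting argument: the total virtual budget minus the number of pieces to be captured bounds the sum of signal ``ones''. The path E--F--G--H consists of cut vertices, and each of the edges $\{E,F\}$, $\{F,G\}$, $\{G,H\}$ together with the wire columns forms a vertex cut. By Lemma~\ref{lem:0-piece-properties}.\ref{lem:0-piece-properties:vertex-cut}, these force enough virtual $0$-knights that the total number of ones over both outputs is at most $2$. This rules out outputs such as $(1,1)\leftrightarrow(1,0)$. The pieces on the output squares are not moved in a normalized sequence, so they also count as virtual $0$-knights.

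Third, I exclude the pairs with total $2$ that are not listed, namely $(1,0)\leftrightarrow(1,0)$, $(0,1)\leftrightarrow(0,1)$, $(0,1)\leftrightarrow(1,0)$, $(1,1)\leftrightarrow(0,0)$ with the dual orientation, and so on. Any such pair attains the maximal total, so the sequence loses no budget, and Lemma~\ref{lemma:lose-budget} applies. Combining it with Lemma~\ref{lem:0-piece-properties}.\ref{lem:0-piece-properties:inverse-neighborhood} on the path, I expect a parity argument: which of E, F, G, H are virtual $0$-knights determines, by bipartiteness, whether the left side receives its budget via the outer or the inner knight. This is the same mechanism as in the final paragraph of the Wire Lemma proof, which shows that $(1,0)$ is never converted into $(0,1)$.

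The main obstacle is this last step. It requires a careful case analysis over which of the four path knights are virtual $0$-knights, showing that each placement forces exactly one of the three listed patterns. I would first fix the unique virtual $0$-knight among the path's interior cut pairs, and then propagate the forced moves to both wire ends as in Corollary~\ref{cor:knight:inner-inner}. Lower outputs (by monotony) are allowed under the ``computes'' convention and need no treatment.
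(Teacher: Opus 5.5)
There is a genuine gap: your second step uses a tool that cannot give the bound, and your third step, which carries the actual content of the lemma, is only announced.

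\textbf{Step 2 uses the wrong lemma.} In a gadget with two surviving outputs, Lemma~\ref{lem:0-piece-properties}.\ref{lem:0-piece-properties:vertex-cut} only forces a virtual $0$-knight into a set whose removal cuts some knights off from \emph{both} outputs. The path knights $E,F,G,H$ merely separate the left output from the right one. Adding wire columns to them only yields cuts that the inner wire knights already hit. So the lemma forces nothing on the path.

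Your own step 1 shows this. The balanced outcome is realised by $\twoSquareMove{G}{H}$ and $\threeSquareMove{F}{E}{D}$, and there both $F$ and $G$ move. (Your suggested combination of $\threeSquareMove{F}{G}{H}$ with $\threeSquareMove{G}{F}{E}$ is impossible, since each captures the other's mover.)

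The bound of at most two ones is true, but it needs a different mechanism:
\begin{enumerate}
\item Every capture lands on the square of a virtual $0$-knight, because a vacated square stays empty forever. Hence each virtual $2$-knight on the path needs a virtual $0$ neighbour.
\item By the argument behind Lemma~\ref{lem:0-piece-properties}.\ref{lem:0-piece-properties:connected}, each virtual $0$ path knight needs a chain of virtual $0$-knights to one of the wires.
\end{enumerate}
Together these force at least two virtual $0$-knights among $E,F,G,H$. Only then, together with the virtual $0$ inner wire knights, does the budget count give at most two ones.

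\textbf{Step 3 is missing.} A total of at most two ones does not exclude $(0,1)\leftrightarrow(1,0)$, $(1,0)\leftrightarrow(1,0)$ or $(0,1)\leftrightarrow(0,1)$. None of these is dominated by a listed pair, and a hoped-for parity argument does not exclude them.

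The missing observation is this. When the total is two, the count is tight, so no budget is lost and exactly two path knights are virtual $0$. The two constraints above then leave only $\{G,H\}$, $\{E,H\}$ and $\{E,F\}$. In $\{E,G\}$, $\{F,H\}$ and $\{F,G\}$, some virtual $0$ path knight has no virtual $0$ chain to a wire. Each remaining pattern pins down the moves on the path:
\begin{enumerate}
\item $\{G,H\}$ forces $\threeSquareMove{F}{G}{H}$, so the right side receives $(0,0)$.
\item $\{E,H\}$ forces $\twoSquareMove{G}{H}$ and $\threeSquareMove{F}{E}{D}$, giving at most $(1,0)\leftrightarrow(0,1)$. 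Ending on $C$ instead would strand a $0$-knight there.
\item $\{E,F\}$ forces $\threeSquareMove{G}{F}{E}$. After that the left side can only be cleared by $\threeSquareMove{C}{E}{D}$ and receives $(0,0)$.
\end{enumerate}

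This is exactly the paper's proof, phrased there as a case split on the blue knight $G$. The capture $\twoSquareMove{H}{G}$ is impossible by Observation~\ref{obs:intro:long-antenna}, since it would turn $(G,F,E)$ into an antenna with a $1$-leaf. The three remaining options are $F$ capturing $G$, $G$ capturing to the left, and $G$ capturing to the right. They yield the three listed outcomes directly, without any global budget count.
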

We combine it with the Decrement gadget (\cref{fig:knight:decrement}) which as the name suggests, decrements a component of a signal.
\begin{figure}
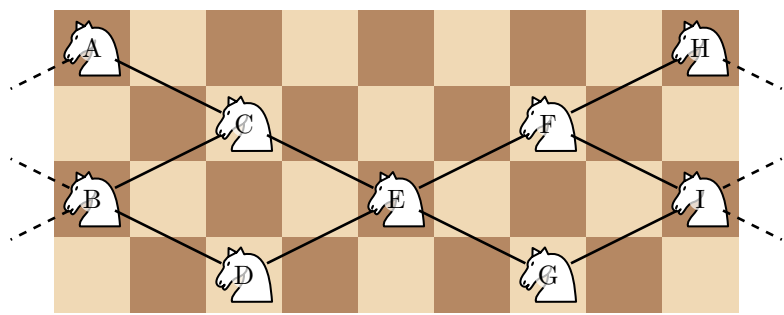

    \centering
    \knightDecrementComponent
    \caption{The first-component Decrement gadget is a wire with an outer knight missing.}
    \label{fig:knight:decrement}
\end{figure}
\begin{lemma}[Decrement Lemma]
\label{lem:knight:decrement}
    The $-(1,0)$ gadget computes the function that maps $(1,y)$ to $(0,y)$ and $(0,y)$ to $\bot$.
    An analogous $-(0,1)$ gadget computes the function that maps $(x,1)$ to $(x,0)$ and $(x,0)$ to $\bot$.
\end{lemma}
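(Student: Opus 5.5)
We need to understand the Decrement gadget from its figure. It is a wire with knights 1/1, 1/3, 3/2, 3/4, then 5/3 (no 5/1), then 7/2, 7/4, 9/1, 9/3. Label them A=(1,1), B=(1,3), C=(3,2), D=(3,4), E=(5,3), F=(7,2), G=(7,4), H=(9,1), I=(9,3). The column at $x=5$ lacks its outer knight (5/1), which in the ordinary wire would be adjacent to C and F; only the inner knight E remains.

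The plan is to prove both directions in the same way as the \hyperref[lem:knight:wire:correctness]{Wire Lemma}: a budget-counting upper bound, followed by explicit capturing sequences for achievability. The input signal is read off on the column $\{A,B\}$ and the output on the column $\{H,I\}$ (or on $\{F,G\}$ with parity adjusted).

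\emph{Upper bound.} The gadget has one fewer piece than a full wire of the same length, hence two fewer units of virtual budget, and E must still be captured. E is a single-vertex cut, so by Lemma~\ref{lem:0-piece-properties}.\ref{lem:0-piece-properties:vertex-cut} it is a virtual $0$-knight. Each remaining column is still a cut, so as in Lemma~\ref{lem:knight:wire-virtual} the columns to the left and right contain virtual $0$-knights. Computing the difference (total virtual budget minus pieces) as in the Wire Lemma, this difference is exactly one lower than for an intact wire. The number of $1$'s in the signal therefore drops by at least one, so $(0,0)$ cannot be passed through at all, giving $\bot$.

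\emph{Which component drops.} The main obstacle is showing that the decrement hits the first component and never the second. The missing outer knight is exactly what carries the first component. For this step I would redo the parity and degree argument from the end of the Wire Lemma. With input $(x,y)$, the inner knight C is reached by B/D or captured into as in the ordinary wire. Next, C must capture E or be captured by D. Since E has neighbors only C, D (via 3/4 to 5/3) and F, G, a first-component 2-knight arriving in the column has nowhere to deposit itself except through E. One then checks by short case analysis, using Lemma~\ref{lemma:lose-budget} and the antenna-like structure around E, that an input $(0,y)$ makes it impossible to clear up to F/G. The output's second component is produced exactly as in the wire: by D capturing into E, followed by the continuation via F. Input $(1,y)$ yields $(0,y)$ by an explicit sequence, for instance $\threeSquareMove{D}{B}{C}$, then $\threeSquareMove{A}{C}{E}$, then continuing into the next columns as in the wire.

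Finally, the $-(0,1)$ gadget follows by the duality between $(1,0)$ and $(0,1)$ under a color shift. Placing the missing knight on the opposite-parity column turns it into the analogous statement with the components swapped.
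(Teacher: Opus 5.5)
Your treatment of inputs $(1,y)$ matches the paper's. It has the explicit propagation ($\threeSquareMove{D}{B}{C}$ or $\twoSquareMove{B}{C}$, then $\threeSquareMove{A}{C}{E}$, then $\twoSquareMove{D}{E}$ if $y=1$). It observes that the column of $E$ has no outer knight, so the first output component is $0$. And it uses the virtual-budget count showing that the number of $1$'s drops by at least one.

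The gap is the $\bot$ claim for input $(0,1)$, and dually for input $(1,0)$ to the $-(0,1)$ gadget. Your count only excludes $(0,0)$. For $(0,1)$ the count is exactly tight and still permits output $(0,0)$, so a structural argument is needed. You defer this to a ``short case analysis, using Lemma~\ref{lemma:lose-budget} and the antenna-like structure around $E$'', but neither ingredient fits.
\begin{itemize}
\item There is no antenna at $E$: it has four neighbours $C,D,F,G$. With $A$ gone, the left part is the $4$-cycle $B$--$C$--$E$--$D$.
\item Because the count is tight, the obstruction is not lost budget. For instance, if $D$ is the virtual $0$-knight of the cut $\{C,D\}$ and $B,C$ move, no subset of $\{B,C\}$ violates the neighbourhood condition of Lemma~\ref{lemma:lose-budget}.
\end{itemize}

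The missing idea is that some square can never be vacated. With $A$ absent and the would-be outer knight on $(5,1)$ missing, $C$ is adjacent only to $B$ and $E$, and so is $D$. $E$ is a single-vertex cut, hence a virtual $0$-knight, and the piece on $B$ has budget at most $1$. At least one of $C,D$ is a virtual $0$-knight; the paper uses that it is the inner knight $C$. That square is not the final square, so it can only be emptied by a knight that enters it with budget $2$ from a neighbouring square and then moves on. This is the reasoning behind Lemma~\ref{lem:0-piece-properties}.\ref{lem:0-piece-properties:inverse-neighborhood}. No such knight exists: $B$'s piece has budget at most $1$ and $E$ never moves. This is the paper's entire argument for the $\bot$ case, and it is what your sketch needs in place of the deferred case analysis.

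A minor slip: $C=(3,2)$ and $D=(3,4)$ are not a knight's move apart, so ``$C$ \dots\ be captured by $D$'' cannot happen.
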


The resulting VAR gadget is shown in \cref{fig:knight:var-abstract}.
\begin{figure}
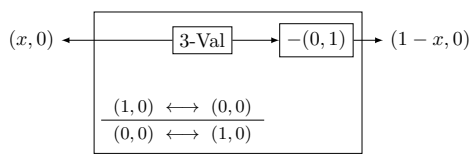

    \centering
    \scalebox{0.75}{
        \knightVarAbstract
    }
    \caption{Using a 3-VAL and a Decrement gadget to build a VAR gadget.}
    \label{fig:knight:var-abstract}
\end{figure}
For a 3-VAL output of $(1,1) \leftrightarrow (0,0)$ this produces an error, the other two possible outputs yield the expected outputs of the VAR gadget.
This shows \cref{lem:knight:var-assign}.

We now prove the correctness of the two building blocks.
\begin{proof}[Proof of the \protect{\hyperref[lem:knight:3-val]{3-VAL Lemma}}]
    Consider knight $G$, highlighted in blue.
    Observe that, by \cref{obs:intro:long-antenna}, no clearing sequence contains the capture $\twoSquareMove{H}{G}$.
    Instead, knight $G$ can be captured via $\twoSquareMove{F}{G}$ which after resolving captures $\twoSquareMove{G}{H}, \twoSquareMove{E}{D}$ yields a $(1,1)$-signal on the left and a $(0,0)$-signal on the right wire, the first of the three possible outcomes.
    
    The remaining options are those where the blue knight is not captured and instead makes a capture itself:
    If it captures to the left through $\threeSquareMove{G}{F}{E}, \threeSquareMove{C}{E}{D}$, this creates a $(0,0)$-signal on the left and (after captures $\threeSquareMove{H}{J}{L}, \twoSquareMove{I}{L}$) a $(1,1)$-signal on the right wire.
    If instead it captures to the right through $\twoSquareMove{G}{H}$, this creates a $(0,1)$-signal on the right.
    Then, after a captures $\threeSquareMove{F}{E}{D}$, the left wire receives a $(1,0)$-signal.
    This covers all possible cases for the blue knight and the capture sequence as a whole, and yields the claimed three possible outcomes.
\end{proof}
\begin{proof}[Proof of the \protect{\hyperref[lem:knight:decrement]{Decrement Lemma}}]
    We first consider the case where the first input component is 0, i.e., knight $A$ is not present.
    Then both neighbors $B$ and $E$ of virtual $0$-knight $C$ are themselves virtual $0$-knights, so by Lemma \ref{lem:0-piece-properties}.\ref{lem:0-piece-properties:inverse-neighborhood}, the gadget cannot be fully cleared.

    Next consider the case where the first input component is 1, i.e., knight $A$ is present.
    Propagating the signal as usual in a wire yields the claimed output.
    We show that no better output is possible:
    The first output component is 0 by the fact that the would-be outer knight of the column of $E$ already is not present.
    Furthermore, a virtual budget argument shows that any signal is decreased by at least 1.
    As a result, $(1,0)$ is mapped to (at most) $(0,0)$ and $(1,1)$ is mapped to (at most) $(0,1)$.
\end{proof}
In the following subsections we utilize the counterpart of the Decrement gadget, namely the Increment gadget.
\begin{figure}
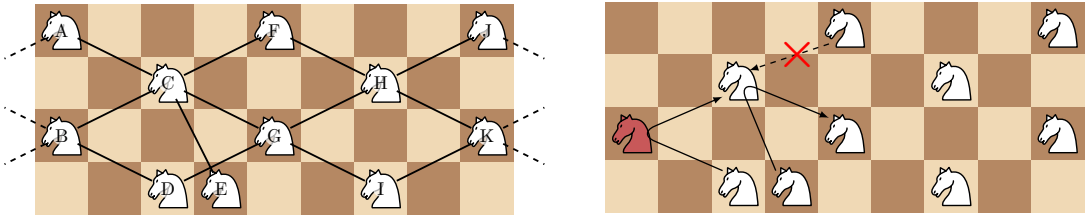

    \centering
    \scalebox{0.7}{\knightSetComponent}
    \hfill
    \scalebox{0.7}{\knightSetComponentMoves}
    \caption{Left: The first-component Increment gadget is a wire with an extra leaf.\\
    Right: Example sequence that increments the first component of a signal.}
    \label{fig:knight:increment}
\end{figure}
The first-component Increment (or $+(1,0)$) gadget increments the first component of a signal, as shown in Figure \ref{fig:knight:increment}.
\begin{lemma}
    The $+(1,0)$ gadget computes the function $(x,y) \mapsto (1,y)$ for $x,y \in \{0,1\}$.
    An analogous $+(0,1)$ gadget computes the function $(x,y) \mapsto (x,1)$ for $x,y \in \{0,1\}$.
\end{lemma}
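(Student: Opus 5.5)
The plan is the same two-part pattern as the Wire Lemma and the Decrement Lemma: first exhibit a normalized capturing sequence for each of the four inputs $(x,y)$ that outputs $(1,y)$, then show that no normalized capturing sequence outputs more. Label the pieces as in Figure~\ref{fig:knight:increment}: $A,B$ form the input column, $C,D$ the next column, $E$ is the extra leaf attached to the inner knight $C$, and $F,G,H,I,J,K$ continue the ordinary wire. By the argument used for ordered clearing sequences of the wire, I would assume the gadget is cleared from left to right. The output is read at the first dark-squared column after the leaf, in the same way as for the wire.

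For the constructions, the key fact is that $E$ is a leaf. By the observation on leaves, any clearing sequence must contain a capture of $C$ by $E$, and no capture into $E$. So $E$ acts as a fresh $2$-knight that arrives on $C$ after the input signal has been absorbed there. For input $(0,y)$, I would first run the usual wire moves that bring the second component onto $C$, for example $\threeSquareMove{D}{B}{C}$ when $y=0$. Then $E$ captures into $C$ and continues to $G$, and $F$ plays the role that the outer knight $A$ would have played. This is exactly the sequence sketched on the right of Figure~\ref{fig:knight:increment}. The remaining columns then propagate the signal by the Wire Lemma, so the extra $2$-budget of $E$ supplies the missing first component. For input $(1,y)$, the output $(1,y)$ is reached with the same pattern, with the input outer knight's budget used to absorb $E$'s contribution. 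Alternatively, it follows by monotony (Observation~\ref{obs:monotony}) once the case $x=0$ gives $(1,y)$ and the upper bound below caps the output at $(1,y)$.

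For optimality, I would compare the total virtual budget with the number of pieces, as in the Wire Lemma. Compared with a plain wire, the gadget has one extra piece $E$ with budget $2$, so the difference rises by exactly one. By Lemma~\ref{lem:knight:wire-virtual}, applied to the wire part, together with the leaf observation, the virtual $0$-knights are still exactly the inner knights, and $E$ is a virtual $2$-knight. Hence the output has at most one more $1$-entry than the input. This already settles the inputs $(0,0)\mapsto(1,0)$ and $(1,1)\mapsto(1,1)$, except that it does not exclude $(0,0)\mapsto(0,1)$ or $(1,0)\mapsto(1,1)$.

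The main obstacle is showing that the increment lands in the first component and never in the second, that is, that $(x,0)$ never yields $(x,1)$. My approach is to mimic the duality argument at the end of the Wire Lemma's proof. Any sequence achieving the maximal count cannot lose budget, so Lemma~\ref{lemma:lose-budget} applies. Since $E$ captures $C$ with budget $2$, $E$ must continue from $C$ to a virtual $0$-knight, namely $G$ by Corollary~\ref{cor:knight:inner-inner}. Then $C$ may not be captured by an outer knight from the right, so $F$ cannot capture $C$; this is the crossed-out move in the figure. Consequently $F$ survives as a $2$-knight on the output column, which gives first component $1$. Meanwhile the budget arriving on $G$ comes from $E$ and the input, and it is already accounted for. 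I expect a careful case check here on whether $E$ captures $C$ before or after the input moves; I would handle it by swapping moves, as in the ordering argument for the wire.

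The analogous $+(0,1)$ gadget follows by the symmetric construction on the opposite-coloured column, which by the duality of the signals exchanges the two components.
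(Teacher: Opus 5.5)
Your construction part is fine in substance. For $x=1$, though, write the sequences out explicitly rather than citing Observation~\ref{obs:monotony}, since adding $A$ changes the set of pieces. For example, $\threeSquareMove{D}{B}{C},\twoSquareMove{A}{C},\threeSquareMove{E}{C}{G}$ gives $(1,0)$, and $\twoSquareMove{B}{C},\twoSquareMove{A}{C},\threeSquareMove{E}{C}{G},\twoSquareMove{D}{G}$ gives $(1,1)$.

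The gap is in the upper bound. Your key step shows that, without budget loss, $F$ never captures into $C$, so $F$ survives and the \emph{first} output component is $1$. For input $(0,0)$ this, with the counting bound, does exclude $(0,1)$. For $x=1$ it does not help. With input $(1,0)$ the counting bound allows two $1$-entries, and ``$F$ survives'' is fully consistent with the forbidden output $(1,1)$. The remark that the budget on $G$ is ``already accounted for'' is not an argument. You also omit $(1,0)\mapsto(0,1)$, which must be excluded because $(0,1)$ is incomparable to $(1,0)$. That sequence is not maximal-count, so your no-loss analysis does not reach it.

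The missing idea is to control the second component directly, as the paper does. Suppose $G$ ends with budget $1$. Then a knight with budget $2$ captured it from its own original square. $H$ and $I$ are untouched. $C$ is a virtual $0$-knight, because if $C$ left its square first the leaf $E$ would be stranded. Any piece entering $C$'s square arrives with budget at most $1$. So the capturer is $D$. Hence $D$ never captures $B$, and $B$ can only leave its square by capturing into $C$ itself, which requires $y=1$. This single implication (second output component $1$ implies $y=1$) covers all four inputs and makes your counting step unnecessary.

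If you prefer to stay in your framework, the repair is that $C$'s square is vacated exactly once. Consider every knight that captures into it with full budget and so sits there with budget $1$: $E$, $A$ when $x=1$, and $F$ whenever $F$ is absent in the output. All but one of these are later captured there with budget $1$. When $x=1$ this forces a loss of at least one unit, which excludes $(1,0)\mapsto(1,1)$. If $F$ has also moved, the loss is at least two, which excludes $(1,0)\mapsto(0,1)$.
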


\begin{proof}
    Figure \ref{fig:knight:increment} shows that there exists a capture sequence that sets the first component to 1.
    To see that no larger (or incomparable) output is possible, it suffices to show that if the second component of the output is 1, then so was the second component of the input.
    For this, consider an output of $(x,1)$, read off in the column containing knights $F$ and $G$, i.e., knight $G$ having a budget of 1 and all knights to its right being untouched.
    We now trace back how the knight got there.
    It was captured by a 2-knight, which was knight $D$, as $C$ is a virtual 0-knight and $I$ has not yet moved.
    Then $D$ did not capture $B$ and, thus, to leave its square towards $C$, knight $B$ was a 1-knight, meaning the input had a second component of 1.

    The $+(0,1)$ gadget has a leaf adjacent to a dark-squared inner knight and functions identically.
\end{proof}
This concludes our discussion of the VAR gadget.
We remark that as in the king's case, we can combine two VAR gadgets and an AND gadget to create the 2VAR gadget (Figure \ref{fig:king:gadgets}).

\subsection{The OR Gadget}
Observe that the OR function can be defined as the maximum function on binary inputs.
This extends to our two-dimensional signal values:
For $x, y \in \{(0,0), (1,0)\}$ it holds that $x \lor y = \max\{x,y\}$, where $\max$ denotes a component-wise maximum.
This motivates the MAX gadget, shown in Figure \ref{fig:knight:max}, which precisely implements a component-wise maximum.
\begin{figure}
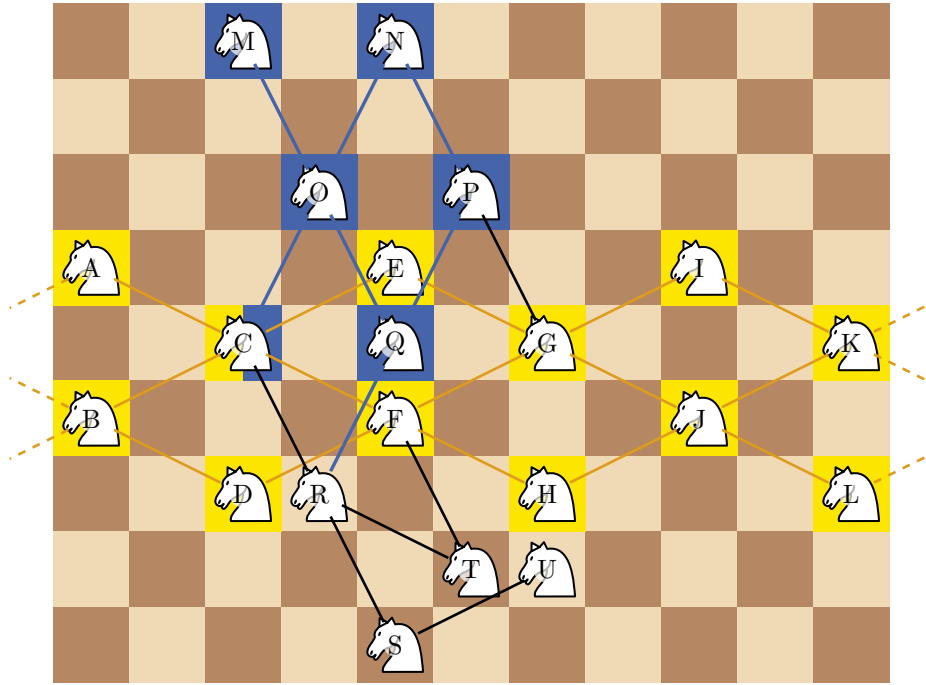

    \centering
    \knightAddition
    \caption{The MAX gadget computes the component-wise maximum of the blue and the yellow signal.}
    \label{fig:knight:max}
\end{figure}
The idea of the gadget is that the yellow signal is propagated as usual while the blue signal captures into the yellow wire.
When a component of the blue signal is 1, this allows a blue or white 2-knight to remain, acting as an Increment gadget for the corresponding yellow signal component.
This way, a component of the output is 1 if it was set to 1 for at least one of the inputs, thereby computing the component-wise maximum.

\begin{lemma}
\label{lem:knight:max:positive}
    There is a capturing sequence such that the MAX gadget with inputs $(u,v)$ and $(w,x)$ outputs their component-wise maximum $(\max\{u, w\}, \max\{v, x\})$.
\end{lemma}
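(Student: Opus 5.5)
We give, for each of the sixteen input combinations $(u,v),(w,x)\in\{0,1\}^2$, an explicit capturing sequence on the MAX gadget of Figure~\ref{fig:knight:max}. The sequence first resolves the blue wire into the yellow wire and then propagates the yellow signal through the remaining columns exactly as in the proof of the \hyperref[lem:knight:wire:correctness]{Wire Lemma}. We read off the output at the rightmost yellow column, after all gadget knights except the output column have been cleared.

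\textbf{Step 1: reduce to the diagonal cases.} Observation~\ref{obs:monotony} does most of the case work for us. It suffices to construct sequences for the four input pairs where the blue input dominates the yellow input componentwise, and for the case where the yellow input dominates the blue input. In fact we only need sequences where the blue input equals $(w,x)$ and the yellow input is $(0,0)$, together with the case where the blue input is $(0,0)$. The reason is as follows. If the output is $m=(\max\{u,w\},\max\{v,x\})$, then raising an input only raises budgets. So a sequence realising output $m$ on smaller inputs whose maximum is still $m$ remains valid on the actual inputs, and yields at least $m$ by monotony. Any surplus can then be discarded, since a $(1,1)$ wire signal trivially yields smaller signals by simply not performing the last optional capture. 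Concretely, we handle input pairs where one input equals $m$ and the other is $(0,0)$; these dominate-by-monotony every pair with the same maximum. That leaves two families of explicit constructions:
\begin{itemize}
\item[(a)] yellow $=m$, blue $=(0,0)$;
\item[(b)] yellow $=(0,0)$, blue $=m$.
\end{itemize}
For each family we need $m\in\{(0,0),(1,0),(0,1),(1,1)\}$.

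\textbf{Step 2: family (a).} Here the blue signal carries nothing, so the blue column knights (13--17), the split knight 3 and the white knights (18--21) must be absorbed without contributing budget. We clear the blue wire downward by double captures as in the $(0,0)$-propagation of the wire. The blue inner knight is consumed into knight 3, and the white knights are cleared via $\threeSquareMove{\cdot}{\cdot}{\cdot}$ pairs ending on yellow inner knights. In each such pair a $2$-knight captures a $0$-knight and continues, so no yellow budget is changed. The yellow signal $m$ is then propagated as in the Wire Lemma.

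\textbf{Step 3: family (b).} Here the blue signal's components must be transferred onto the yellow wire. We clear the blue wire towards knight 3 and the white knights such that a blue or white $2$-knight remains adjacent to a yellow inner or outer knight, exactly as in the Increment gadget (Figure~\ref{fig:knight:increment}). When the blue first component is $1$, a surviving $2$-knight plays the role of the extra leaf for the first yellow component. When the blue second component is $1$, the white path 18--21 leaves a $2$-knight attached to the dark-squared yellow inner column, acting as a $+(0,1)$ gadget. We then apply the capture pattern from the proof of the Increment lemma, and the result follows.

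The main obstacle is Step~3 for $m=(1,1)$. Both components must be transferred at once, and we must check that the two increment-like captures use disjoint knights, with the white knights giving the second route. I would write out that sequence explicitly first, verify every move is a legal knight capture respecting budgets, and then derive the $(1,0)$ and $(0,1)$ sequences by dropping moves.
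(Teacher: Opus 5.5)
\textbf{There is a genuine gap: your Step~1 reduction leaves out exactly the cases that make this a \emph{component-wise} maximum.}

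Monotony can only transfer a sequence built for inputs $X'$ to inputs $X$ that dominate $X'$ in \emph{both} signals. Take yellow $(1,0)$ with blue $(0,1)$, or yellow $(0,1)$ with blue $(1,0)$. The target is $m=(1,1)$, but neither yellow $(1,1)$/blue $(0,0)$ nor yellow $(0,0)$/blue $(1,1)$ lies below these pairs. So neither family (a) nor family (b) covers them, and your claim that these families ``dominate every pair with the same maximum'' is false.

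The monotony step is also shakier than you suggest for the remaining cases. Observation~\ref{obs:monotony} compares configurations with the identical set of pieces. A first component equal to $1$ means an outer knight is \emph{present}, not that some budget is higher. For example, with blue input $(1,0)$ the knight $M$ is present. After the blue-$(0,0)$ absorption $\threeSquareMove{P}{N}{O}, \threeSquareMove{Q}{O}{C}$, $M$ has no neighbor left and is stranded, so your family-(a) sequence does not carry over.

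Finally, Steps~2 and~3 never write down an actual sequence. The lemma is purely existential, so those moves are the whole proof.

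\textbf{Step~3 also assigns the increments to the wrong knights.} By Observation~\ref{obs:intro:antenna}, the captures $\threeSquareMove{U}{S}{R}$ are forced, so $R$ never survives as a $2$-knight. The remaining white knight $T$ is dark-squared and attached to the light-squared inner knight $F$, so it can only act as a $+(1,0)$ leaf. The $+(0,1)$ leaf is the blue knight $P$, attached to the dark-squared inner knight $G$.

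\textbf{How the paper avoids the problem.} It uses no monotony argument. After $\threeSquareMove{U}{S}{R}$, it absorbs the blue signal into $C$ independently of the yellow input:
for blue $(0,0)$, $\threeSquareMove{P}{N}{O}, \threeSquareMove{Q}{O}{C}, \threeSquareMove{T}{R}{C}$, leaving the plain yellow wire;
for blue $(1,0)$, $\threeSquareMove{P}{N}{O}, \threeSquareMove{M}{O}{C}, \threeSquareMove{Q}{R}{C}$, leaving $T$;
for blue $(0,1)$, $\twoSquareMove{N}{O}, \threeSquareMove{Q}{O}{C}, \threeSquareMove{T}{R}{C}$, leaving $P$;
for blue $(1,1)$, $\twoSquareMove{N}{O}, \threeSquareMove{M}{O}{C}, \threeSquareMove{Q}{R}{C}$, leaving both.
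The surviving leaves then act as Increment gadgets on whatever yellow signal arrives. This handles all sixteen input pairs at once, including the mixed ones.
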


\begin{proof}
    Start by reducing the 2-antenna $(U,S,R)$ to a $0$-knight $R$ via captures $\threeSquareMove{U}{S}{R}$.
    
    Consider now the case of a blue $(0,0)$-signal, i.e., knight $M$ is absent, while $N$ has a budget of $0$.
    Then the capturing sequence $\threeSquareMove{P}{N}{O}, \threeSquareMove{Q}{O}{C}, \threeSquareMove{T}{R}{C}$ leaves behind just the yellow wire (with virtual $0$-knight $C$ having an actual budget of 0).
    Thus, under this capturing sequence the gadget yields the identity function on the yellow signal, as expected.

    For a blue input of $(1,0)$, knight $M$ is present and the capturing sequence $\threeSquareMove{P}{N}{O}, \threeSquareMove{M}{O}{C}, \threeSquareMove{Q}{R}{C}$ leaves behind a first-component Increment for the yellow signal in the form of 2-knight $T$.
    Conversely, for a blue input of $(0,1)$, knight $N$ has a budget of 1 and the capturing sequence $\twoSquareMove{N}{O}, \threeSquareMove{Q}{O}{C}, \threeSquareMove{T}{R}{C}$ leaves behind a second-component Increment in the form of 2-knight $P$.
    Finally, for a blue input of $(1,1)$, combining the two approaches through captures $\twoSquareMove{N}{O}, \threeSquareMove{M}{O}{C}, \threeSquareMove{Q}{R}{C}$ leaves behind both a first- and a second-component Increment in 2-knights $T$ and $P$.
\end{proof}

It remains to show that no larger outputs are possible.
As the proof is slightly technical, we give it in Appendix \ref{appendix:knight}.
Here we state without proof:

\begin{restatable}{lemma}{knightMaxCorrect}
\label{lem:knight:max:correct}
    Given inputs $(v,w) \in \{0,1\}^2$ and $(x,y) \in \{0,1\}^2$ the MAX gadget computes their component-wise maximum \((\max\{v,x\}, \max\{w,y\}) \in \{0,1\}^2\).
\end{restatable}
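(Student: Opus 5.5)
Lemma~\ref{lem:knight:max:positive} already gives the existence part, so I only need the upper bound: no normalized capturing sequence outputs a signal that is larger than or incomparable to the component-wise maximum. I argue componentwise. If $\max\{v,x\}=0$, the first output component must be $0$, and if $\max\{w,y\}=0$, the second must be $0$. When both maxima are $1$ there is nothing to show, so the cases are: blue input $(0,0)$ with arbitrary yellow input (then the output must be at most the yellow signal), and the mixed cases where exactly one component of the maximum is $0$.

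First I normalize the sequence. By Observation~\ref{obs:intro:antenna}, the 2-antenna $(U,S,R)$ can be assumed to fire first as $\threeSquareMove{U}{S}{R}$. This leaves a $0$-knight on $R$ whose only remaining neighbours are $C$, $Q$ and $T$.

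Next I apply Lemma~\ref{lem:0-piece-properties} to the yellow wire part. Each yellow column is a vertex cut, and the argument of Lemma~\ref{lem:knight:wire-virtual} should carry over, so the inner yellow knights (in particular $C$) are virtual $0$-knights. Consequently the blue/white cluster $\{M,N,O,P,Q,R,T\}$ can only attach to the yellow wire through $C$ and the square $R$.

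The main step is to trace back, exactly as in the proof of the Increment lemma, where a surviving extra $2$-knight adjacent to the yellow wire could come from. A first-component increment requires $T$, or some white knight acting as a leaf on the light-squared inner knight, to remain a $2$-knight after the cluster is absorbed. A second-component increment requires $P$ (or an equivalent knight) to remain. I then show that each of these survivals forces the corresponding blue component to be $1$:
\begin{itemize}
\item If $T$ survives with budget $2$, then $R$ must be cleared by $Q$, so $Q$ must arrive with budget $\ge1$ and cannot have captured $O$. Hence $O$ is cleared by $M$, which means $M$ is present, i.e.\ the first blue component is $1$.
\item If $P$ survives, then $N$ must leave its square without being captured by $P$. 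Since $N$ is a leaf-like vertex of the blue wire, the budget argument of the Increment lemma shows $N$ had budget $1$, i.e.\ the second blue component is $1$.
\end{itemize}

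To exclude the remaining ways of gaining budget (for instance yellow knights capturing into the cluster), I would count virtual budget minus number of pieces, as in the Wire Lemma. The MAX gadget adds pieces and budget in equal amounts except at the places counted above, so any extra output unit must correspond to one of those survivals. Lemma~\ref{lemma:lose-budget} then handles the sequences that would increment one component while losing the other.

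The main obstacle is the case analysis inside the blue/white cluster, in particular the knights $4/7$, $5/9$, $6/8$, $7/8$. I have to rule out that they route budget into the yellow wire other than through $T$ and $P$. I would first show that they form antennae or cuts that force virtual $0$-knights in fixed positions (via Lemma~\ref{lem:0-piece-properties}.\ref{lem:0-piece-properties:inverse-neighborhood}), which reduces the analysis to the two survivals above.
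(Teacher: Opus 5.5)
Your top-level plan matches the paper's. You take existence from Lemma~\ref{lem:knight:max:positive} and fire the antenna $\threeSquareMove{U}{S}{R}$ first. You trace a first-component increment back through $T$, which forces $Q$ to clear $R$ and $M$ to clear $O$. You trace a second-component increment back through $P$, which forces $N$ to have budget $1$.

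The gap is the structural fact about the yellow wire on which both trace-backs rest. You assert three things: every yellow column is a vertex cut, Lemma~\ref{lem:knight:wire-virtual} therefore carries over, and the cluster meets the yellow wire only at $C$. None of these holds. $T$ is adjacent to $F$ and $P$ is adjacent to $G$, so the path $C$--$O$--$Q$--$P$--$G$ bypasses the column $\{E,F\}$, which is therefore not a cut.

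Even knowing that $C$ is a virtual $0$-knight would not be enough. That only says the original knight on $C$ never moves; a knight arriving on $C$ (say via $\twoSquareMove{A}{C}$) can still continue to $R$ or to $O$. Your step ``$R$ must be cleared by $Q$'' and your step ``$O$ is cleared by $M$'' need exactly that the occupant of $C$ captures $F$ and nothing else. Likewise, your claim that every extra output unit comes from $T$ or $P$ surviving needs that the occupant of $G$ captures $J$ rather than $P$. The virtual-budget count cannot substitute for this. It bounds only the total number of $1$'s in the output, so it cannot exclude an output incomparable to the maximum that arises from yellow knights capturing into the cluster.

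The missing ingredient is Lemma~\ref{lem:knight:max:orange-wire}: in any sequence clearing the gadget to the right, each inner yellow knight captures the inner knight to its right. Its proof is where the real case analysis lives. If the occupant of $C$ captured $R$, then $\threeSquareMove{T}{R}{Q}$ is forced. Both continuations, $\threeSquareMove{P}{Q}{O}$ and $\threeSquareMove{O}{Q}{P}$, must then be driven either to a disconnection or to an impossible move by the virtual $0$-knight $N$. The cases where $C$ captures $O$ and where $G$ captures $P$ need their own arguments.

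By contrast, the knights you single out as the main obstacle (on squares 4/7, 5/9, 6/8, 7/8) are just $R,S,T,U$, which the antenna normalization already disposes of.

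There are also two smaller omissions:
\begin{itemize}
\item $O$, which is adjacent to the dark inner knight $C$, is a second candidate for a second-component increment besides $P$. The same stranding argument for $N$ covers it, but you need to state it.
\item The step ``hence $O$ is cleared by $M$'' also requires ruling out that the knight from $N$ ends up on $P$. The paper does this by noting that it would become a stranded $0$-leaf once $Q$ has left.
\end{itemize}
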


\begin{corollary}
    Restricted to inputs $(x,0), (y,0) \in \{(0,0), (1,0)\}$ the MAX gadget computes the OR function.
\end{corollary}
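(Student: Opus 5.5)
The corollary follows by specializing \cref{lem:knight:max:correct} to inputs whose second component is zero. I would take the two inputs to be $(x,0)$ and $(y,0)$ with $x,y\in\{0,1\}$. By \cref{lem:knight:max:correct}, the MAX gadget then computes $(\max\{x,y\},\max\{0,0\}) = (\max\{x,y\},0)$. That is, some capturing sequence attains this output, and every capturing sequence outputs this value or a smaller one in the partial order on signals.

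Next I would identify this value with the Boolean OR. On $\{0,1\}$ we have $\max\{x,y\} = x\lor y$. Under the encoding of false as $(0,0)$ and true as $(1,0)$, the output $(x\lor y,0)$ is exactly the encoding of the OR of the two encoded inputs. So the gadget attains $(x\lor y,0)$, and no capturing sequence beats it.

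The only remaining point, and the only mild subtlety, is the well-definedness of ``computing'' with respect to the partial order. The notion could be ambiguous if the gadget could alternatively output the incomparable signal $(0,1)$. This cannot happen here. Since $\max\{0,0\}=0$, \cref{lem:knight:max:correct} bounds every output by $(x\lor y,0)$. Every signal below that bound has second component $0$, so it is one of $(0,0)$ and $(1,0)$.

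Hence the restricted gadget maps true/false signals to true/false signals and implements OR. It can therefore serve as the OR gate in the reduction.
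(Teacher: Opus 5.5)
Your proposal is correct and follows the paper's reasoning: the corollary is Lemma~\ref{lem:knight:max:correct} specialized to second components equal to $0$. Together with the observation that the componentwise maximum on $\{(0,0),(1,0)\}$ coincides with OR under the encoding false $=(0,0)$, true $=(1,0)$, this gives the claim. Your extra remark that no output below $(x\lor y,0)$ can be the incomparable $(0,1)$ is a correct and harmless addition.
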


\subsection{The AND Gadget}
The AND gadget has two inputs, $(x,0)$ and $(y,0)$ and one output, $(x \land y, 0)$.
Shown in Figure \ref{fig:knight:composite:and}, it is composed of multiple smaller gadgets connected by wire.
\begin{figure}
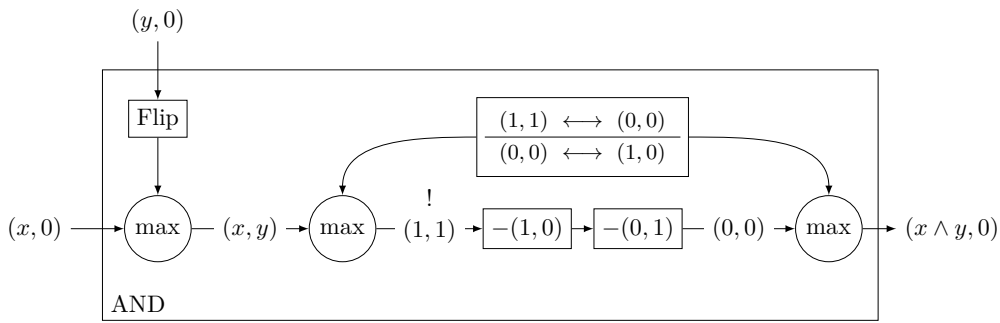

    \centering
    \scalebox{0.9}{\knightAndGadget}
    \caption{The AND gadget computes the logical conjunction of its two inputs.}
    \label{fig:knight:composite:and}
\end{figure}
Of these we have already seen the Decrement and MAX gadget.
We now introduce the remaining two building blocks.

\subparagraph*{Asymmetric Value Production Gadget (AVP)}
Placed at the top right of the AND gadget lies the AVP gadget, another value production gadget.
Recall that the VAR gadget combined a 3-VAL and a Decrement gadget to produce outputs $(1,0) \leftrightarrow (0,0)$ or $(0,0) \leftrightarrow (1,0)$.
Similarly, the AVP gadget produces outputs $(1,1) \leftrightarrow (0,0)$ or $(0,0) \leftrightarrow (1,0)$, by combining a 3-VAL and a Zeroing gadget (see Figure \ref{fig:knight:asym-var}).
\begin{figure}
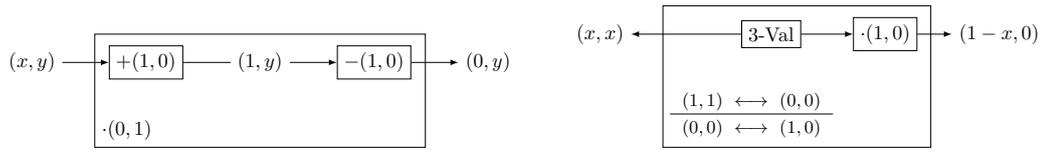

    \centering
    \scalebox{0.75}{
        \knightZeroComponent
    }
    \hfill
    \scalebox{0.75}{
        \knightAsymmetricVar
    }
    \caption{Left: The Zeroing gadget sets a component of a signal to 0. Right: Combining the Zeroing gadget with a 3-VAL gadget yields an AVP gadget.}
    \label{fig:knight:asym-var}
\end{figure}
This Zeroing gadget sets a component of a signal to 0 regardless of its previous value.
Consisting of an Increment followed by a Decrement gadget, its correctness follows directly from the correctness of its parts.

To see the correctness of the AVP gadget, we consider the different possible outputs of its 3-VAL gadget.
For 3-VAL outputs $(0,0) \leftrightarrow (1,1)$ the gadget produces an overall output of $(0,0) \leftrightarrow (1,0)$.
For 3-VAL outputs $(1,1) \leftrightarrow (0,0)$ or $(1,0) \leftrightarrow (0,1)$ the overall outputs are $(1,1) \leftrightarrow (0,0)$ and $(1,0) \leftrightarrow (0,0)$ respectively.
By monotony the second of these is not chosen.
Thus, this gadget correctly implements the asymmetric value production.

An analogous version of the gadget swaps the components of the outputs:
Placing a $\cdot (0,1)$ gadget to the left of the 3-VAL gadget yields outputs $(0,0) \leftrightarrow (1,1)$ or $(0,1) \leftrightarrow (0,0)$ instead.

\subparagraph*{Flip Gadget}
The top left of the AND gadget houses the Flip gadget, which is unpacked in Figure \ref{fig:knight:composite:flip}.
\begin{figure}
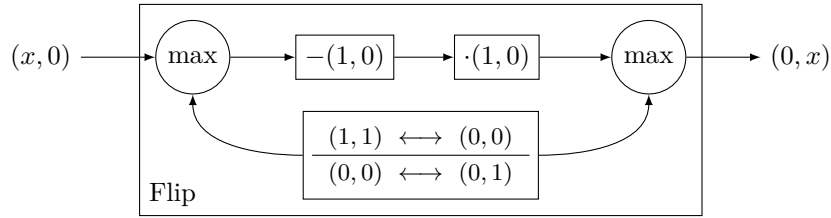

    \centering
    \knightFlipGadget
    \caption{The Flip gadget flips the two components of a signal.}
    \label{fig:knight:composite:flip}
\end{figure}
The Flip gadget computes the function that maps $(x,0) \in \{(0,0), (1,0)\}$ to $(0,x) \in \{(0,0), (0,1)\}$.
To show its correctness, we consider the different possible inputs.
Observe that the Flip gadget contains a Decrement gadget, enforcing that the output of the MAX gadget to its left has its first component set to 1.
Thus, for an input of $(0,0)$ to the Flip gadget, any clearing sequence evaluates the AVP gadget to $(1,1) \leftrightarrow (0,0)$.
In this case the signal undergoes the following transformations: $(0,0) \xrightarrow{\max} (1,1) \xrightarrow{-(1,0)} (0,1) \xrightarrow{\cdot (1,0)} (0,0) \xrightarrow{\max} (0,0)$.
If the input to the Flip gadget is $(1,0)$, the AVP gadget can evaluate to $(0,0) \leftrightarrow (0,1)$ leading to the following sequence: $(1,0) \xrightarrow{\max} (1,0) \xrightarrow{-(1,0)} (0,0) \xrightarrow{\cdot (1,0)} (0,0) \xrightarrow{\max} (0,1)$.
Overall this leads to the desired outputs.

\subparagraph*{Correctness of the AND Gadget}
Having understood its parts we now turn our attention to the AND gadget itself.
Its workings can directly be read off Figure \ref{fig:knight:composite:and}.
Its inputs $(x,0)$ and $(y,0)$ are combined to $(x,y)$ using a Flip and a MAX gadget.
The two Decrements lead to an error, \emph{unless} the signal going into them is a $(1,1)$-signal, in which case they produce a $(0,0)$-signal.
Thus, if $(x,y) \neq (1,1)$, the AVP gadget produces a $(1,1)$-signal to the left, yielding an overall AND gadget output of $(0,0)$.
If $(x,y) = (1,1)$, the AVP instead produces a $(0,0)$-signal to the left, yielding an overall output of $(1,0)$.
We have seen:
\begin{lemma}
    The AND gadget computes the function that maps inputs $(x,0), (y,0) \in \{(0,0), (1,0)\}$ to $(1,0)$ if both $x$ and $y$ are 1, and to $(0,0)$, otherwise.
\end{lemma}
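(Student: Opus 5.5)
The plan is to treat the AND gadget as a composition of already verified subgadgets joined by wires, and to verify the two defining properties of ``computes'' separately. For each input pair we exhibit a normalized capturing sequence with output $(x\land y,0)$, and we show that no normalized capturing sequence does better. The tools are the Wire Lemma, the Decrement Lemma, the correctness of MAX (Lemma~\ref{lem:knight:max:correct}), the correctness of the Flip and AVP gadgets established above, and monotony (Observation~\ref{obs:monotony}) extended to the partial order on signals. One point must come first. As in Lemma~\ref{lemma:canonical-sequence} for kings, we may assume the subgadgets are cleared in topological order of the internal gadget graph, with each normalized subsequence consecutive. The internal graph is acyclic. The AVP has no inputs, so it can be evaluated first, and the wire reordering argument from the Wire Lemma lets interleaved moves be pushed apart without lowering any budget. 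Given this, signals on internal wires are well defined, and each subgadget's guarantee applies to the signal it actually receives.

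\emph{Existence.} Fix inputs $(x,0),(y,0)$. Flip turns $(y,0)$ into $(0,y)$, and the first MAX yields $(x,y)$. If $(x,y)=(1,1)$, we let the AVP output $(0,0)$ to the left, towards the middle MAX, and $(1,0)$ to the right, towards the final MAX. The middle MAX then gives $(1,1)$, the two Decrements give $(0,0)$, and the final MAX outputs $(1,0)$. Otherwise we let the AVP output $(1,1)$ to the left and $(0,0)$ to the right. The middle MAX then gives $\max\{(x,y),(1,1)\}=(1,1)$, the Decrements give $(0,0)$, and the final MAX outputs $(0,0)$. Each step uses a realizing sequence supplied by the respective lemma.

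\emph{Upper bound.} Take any normalized sequence in canonical order. By the MAX lemma the first MAX outputs at most $(x,y)$, and since the Flip is correct, its second component is at most $y$. The two Decrements succeed only if their input is $(1,1)$ (Decrement Lemma). So the middle MAX must output $(1,1)$, and after both decrements the signal is $(0,0)$. The AVP has only two possible behaviors: $(1,1)\leftrightarrow(0,0)$ or $(0,0)\leftrightarrow(1,0)$. The mixed option $(1,0)\leftrightarrow(0,0)$ is dominated, and we must check that it cannot help. Its left output $(1,0)$ makes the middle MAX give at most $\max\{(x,y),(1,0)\}$, which equals $(1,1)$ only if $y=1$. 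Its right output is $(0,0)$, so the final output is $(0,0)$ anyway. If the AVP gives $(0,0)$ to the left, the middle MAX outputs at most $(x,y)$, and this must equal $(1,1)$, forcing $x=y=1$. If it gives $(1,1)$ to the left, its right output is $(0,0)$, and the final MAX outputs at most $\max\{(0,0),(0,0)\}=(0,0)$. In every case the output is at most $(x\land y,0)$.

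The main obstacle is the reduction to canonical, subgadget-by-subgadget order inside the composite gadget. The subgadget lemmas speak about signals read off at wire boundaries. A sequence might interleave moves across a boundary, or let the AVP's two branches interact through timing. To handle this I would reuse the swapping argument from the Wire Lemma at every connecting wire column. That argument shows that moving a left-side move ahead of a right-side move at a shared inner knight never decreases budgets. Since the columns are vertex cuts, the only interactions happen through such shared knights. With this in place, the signal at each boundary is well defined and the composition argument above goes through.
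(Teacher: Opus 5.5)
Your proposal is correct and follows the same compositional argument as the paper. Flip and MAX combine the inputs into $(x,y)$, the Decrement Lemma forces a $(1,1)$-signal into the two Decrements, and a case distinction on the AVP output gives $(x\land y,0)$ both for existence and as an upper bound; you are more explicit than the paper, which dismisses the dominated AVP outcome $(1,0)\leftrightarrow(0,0)$ by monotony and leaves the subgadget-by-subgadget clearing order implicit, whereas you check the former directly and sketch the latter via the wire swapping argument.
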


\bibliography{lipics-v2021-sample-article}

\appendix
\section{King}
\label{appendix:king}

\subsection{NP-hardness of King \solo}
\label{appendix:correctness}
\canonicalsequence*
\begin{proof}
    Let $s$ be a clearing sequence.
    Since no VAR gadget contains the final square, the middle king of each VAR gadget captures two kings to one side in $s$ and one of the adjacent kings captures one king to the other side.
    It is easy to verify that a capturing sequence $s'$ obtained from $s$ by clearing the VAR gadgets first is still a clearing sequence.

    For the second property, we first show that after clearing all VAR gadgets, every output king is now a cut vertex in the resulting capture graph $C'$.
    We denote the original gadget graph by $G$ and the gadget graph after removing all VAR gadgets by $G'$.
    Note that in $G'$, each gadget except for the 1-TEST has exactly one outgoing edge.
    Consider the output king of some gadget $A$ that is not the 1-TEST.
    If we remove the output king, then gadget $A$ loses its outgoing edge in $G'$.
    We show that gadget $A$ and the 1-TEST gadget are not connected by an undirected path in $G'$.
    If $A$ is the only neighbor of the 1-TEST gadget in $G$, then we are done.
    Otherwise, assume that there is an undirected path $P$ from gadget $A$ to the 1-TEST gadget in $G'$.
    Following the edges in $P$, each edge is reversed compared to the corresponding edge in $G'$ or non-reversed.
    Since the 1-TEST gadget has no outgoing edges, the last edge in $P$ is non-reversed.
    Since $A$ has no outgoing edges anymore, the first edge in $P$ is reversed.
    Consider the first time in $P$ where a non-reversed edge follows directly after a reversed edge.
    Let $B$ be the gadget that is shared by both edges.
    But then, $B$ has two outgoing edges, which is a contradiction.
    Thus, there is no undirected path from $A$ to the 1-TEST gadget in $G'$, and the output king of $A$ is a cut vertex in $C'$.

    By Lemma~\ref{lem:0-piece-properties}, this means that each output king is a virtual $0$-king.
    Thus, a king on an output square can only move if the previous gadget is cleared.
    Note that each output king is also an input king for some gadget.
    Consider a gadget $A$ and all moves in $s'$ within $A$ that do not involve an input king of $A$.
    Then, these moves can be postponed until all previous gadgets are cleared.
    Moreover, if a king in $A$ captures an input square, then it is never worse to postpone this move until all previous gadgets are cleared.
    This means we can reorder the moves such that gadgets are cleared in topological order, and that moves within the same gadget form a consecutive capturing sequence.

    Since no output king moves before the corresponding gadget is cleared, the clearing sequence is a concatenation of normalized capturing sequences.
\end{proof}

\kingnphard*
\begin{proof}
    We show that there is a clearing sequence of $I_\Phi$ if and only if $\Phi$ is satisfiable.
    Let $\phi$ be a truth assignment that satisfies $\Phi$.
    We construct a clearing sequence for $I_\Phi$.
    Consider a variable $x$.
    If $x$ is set to true, then the middle king of each VAR gadget for $x$ captures twice towards the negative side.
    Otherwise, they capture once to the positive side.
    Thus, each variable gadget produces one $0$-signal and one $1$-signal.
    We clear the remaining gadgets in a topological order.
    Assuming the correctness of the wire, OR-, AND, and X gadget, it is possible to clear each gadget with the expected output.
    Since $\phi$ evaluates to true, the input square of the 1-TEST gadget contains a $1$-king after clearing all kings except for the 1-TEST gadget.
    The $1$-king then captures the next king, and the last king of the gadget captures twice, clearing the instance.

    Now, consider a clearing sequence of $I_\Phi$.
    By Lemma~\ref{lemma:canonical-sequence}, we may assume that the clearing sequence is canonical.
    We can now fix a truth assignment of $\Phi$.
    Let $x$ be a variable.
    There are at most two VAR gadgets corresponding to $x$.
    If in all such VAR gadgets, the middle king captures towards the negative side, then we set $x$ to true.
    Otherwise, we set it to false.

    It is clear that the final square of every clearing sequence is the second king in the 1-TEST gadget.
    Moreover, clearing the 1-TEST gadget is only possible if the input square contains a $1$-king after all other gadgets are cleared.
    As a canonical clearing sequence exactly mimics the evaluation of the given \sat-instance with its embedding, it follows directly from the correctness of the gadgets that the chosen truth assignment satisfies $\Phi$.

    As the reduction is polynomial, King \solo is \NP-hard.
\end{proof}

\subsection{Crossing Gadget}
\label{appendix:crossing}
\subparagraph*{IN-B Gadget} Figure~\ref{king:checkb} shows the IN-B gadget (see Table~\ref{table:crossing} for the IN-B-function).
The upper input corresponds to the output of the IN-A gadget.
The left input is signal $b$.
\begin{figure}
    \centering
    \resizebox{0.99\textwidth}{!}{
    \begin{newboard}{7}{6}
        \draw[inputsquare] (1,4) rectangle ++(1,1);
        \draw[inputsquare] (4,1) rectangle ++(1,1);
        \draw[outputsquare] (4,7) rectangle ++(2,1);
        \kingnew[id]{4}{1}{0}
        \kingnew[id]{1}{4}{0}
        \gfunc[id]{1}{1}
        \draw[move] (I) -- (B) -- (L);
        \draw[move] (J) -- (L) -- (N);
        \draw[move] (O) -- (N) -- (R);
        \draw[move] (D) -- (A) -- (C);
        \draw[move] (E) -- (C) -- (G);
        \draw[move] (F) -- (G) -- (K);
        \draw[move] (H) -- (K) -- (M);
        \draw[move] (Q) -- (M) -- (P);
        \draw[move] (S) -- (P) -- (R);
    \end{newboard}
    \begin{newboard}{7}{6}
        \draw[inputsquare] (1,4) rectangle ++(1,1);
        \draw[inputsquare] (4,1) rectangle ++(1,1);
        \draw[outputsquare] (4,7) rectangle ++(2,1);
        \kingnew[id]{4}{1}{0}
        \kingnew[id]{1}{4}{1}
        \gfunc[id]{1}{1}
        \draw[move] (B) -- (L);
        \draw[move] (C) -- (A) -- (D);
        \draw[move] (H) -- (D) -- (F);
        \draw[move] (G) -- (F) -- (J);
        \draw[move] (E) -- (J) -- (L);
        \draw[move] (G) -- (F) -- (J);
        \draw[move] (I) -- (L) -- (N);
        \draw[move] (O) -- (N) -- (R);
        \draw[move] (K) -- (M) -- (P);
        \draw[move] (Q) -- (P) -- (R);
        \draw[move] (S) -- (R);
    \end{newboard}
    \begin{newboard}{7}{6}
        \draw[inputsquare] (1,4) rectangle ++(1,1);
        \draw[inputsquare] (4,1) rectangle ++(1,1);
        \draw[outputsquare] (4,7) rectangle ++(2,1);
        \kingnew[id]{4}{1}{1}
        \kingnew[id]{1}{4}{1}
        \gfunc[id]{1}{1}
        \draw[move] (B) -- (I);
        \draw[move] (A) -- (C);
        \draw[move] (D) -- (G) -- (K);
        \draw[move] (L) -- (I) -- (E);
        \draw[move] (J) -- (E) -- (C);
        \draw[move] (F) -- (C) -- (G);
        \draw[move] (H) -- (K) -- (M);
        \draw[move] (Q) -- (M) -- (P);
        \draw[move] (O) -- (P) -- (R);
        \draw[move] (N) -- (R);
    \end{newboard}
    \begin{newboard}{7}{6}
        \draw[inputsquare] (1,4) rectangle ++(1,1);
        \draw[inputsquare] (4,1) rectangle ++(2,1);
        \draw[outputsquare] (4,7) rectangle ++(2,1);
        \kingnew[id]{4}{1}{2}
        \kingnew[id]{1}{4}{0}
        \gfunc[id]{1}{1}
        \kingnew[id]{5}{1}{1}
        \draw[move] (L) -- (B) -- (I);
        \draw[move] (J) -- (I) -- (E);
        \draw[move] (F) -- (E) -- (C);
        \draw[move] (T) -- (C);
        \draw[move] (A) -- (C) -- (G);
        \draw[move] (D) -- (G) -- (K);
        \draw[move] (H) -- (K) -- (M);
        \draw[move] (Q) -- (M) -- (P);
        \draw[move] (O) -- (P) -- (R);
        \draw[move] (N) -- (R);
    \end{newboard}
    }
    \caption{Capturing sequences for the IN-B gadget.
    If both inputs are $0$, then $\threeSquareMove{D}{A}{C}, \threeSquareMove{E}{C}{G}, \threeSquareMove{F}{G}{K}, \threeSquareMove{H}{K}{M}, \threeSquareMove{Q}{M}{P}, \threeSquareMove{S}{P}{R}, \threeSquareMove{I}{B}{L}, \threeSquareMove{J}{L}{N}, \threeSquareMove{O}{N}{R}$ outputs a $0$-signal.
    If the left input is $1$, then $\threeSquareMove{C}{A}{D}, \threeSquareMove{H}{D}{F}, \threeSquareMove{G}{F}{J}, \threeSquareMove{E}{J}{L}, \twoSquareMove{B}{L}, \threeSquareMove{I}{L}{N}, \threeSquareMove{O}{N}{R}, \threeSquareMove{K}{M}{P}, \threeSquareMove{Q}{P}{R}, \twoSquareMove{S}{R}$ outputs a $1$-signal.
    If both inputs are $1$, then $\twoSquareMove{B}{I}, \threeSquareMove{L}{I}{E}, \threeSquareMove{J}{E}{C}, \twoSquareMove{A}{C}, \threeSquareMove{F}{C}{G}, \threeSquareMove{D}{G}{K}, \threeSquareMove{H}{K}{M}, \threeSquareMove{Q}{M}{P}, \threeSquareMove{O}{P}{R}, \twoSquareMove{N}{R}$, outputs a $2$-signal.
    If the upper input is $2$ and the left input is $0$, then $\threeSquareMove{L}{B}{I}, \threeSquareMove{J}{I}{E}, \threeSquareMove{F}{E}{C}, \twoSquareMove{T}{C}, \threeSquareMove{A}{C}{G}, \threeSquareMove{D}{G}{K}, \threeSquareMove{H}{K}{M}, \threeSquareMove{Q}{M}{P}, \threeSquareMove{O}{P}{R}, \twoSquareMove{N}{R}$, outputs a $2$-signal.}
    \label{king:checkb}
\end{figure}

\kinginb*
\begin{proof}
    Figure~\ref{king:checkb} shows normalized capturing sequences for the relevant combinations with the desired output.
    All other cases directly follow from monotony (Observation~\ref{obs:monotony}).

    On the other hand, we show for the inputs $(0, 0)$, $(0, 1)$ and $(1, 0)$ that it is impossible to have a better output.
    We begin with showing that for every normalized capturing sequence, there are at least eight virtual $0$-kings (without inputs).
    It is clear that one of the two output squares is a virtual $0$-king.
    Moreover, there are four pairwise disjoint vertex cuts $\set{I, L}, \set{E, J, N}, \set{G, H, M}$ and $\set{K, P, Q}$.
    By Lemma~\ref{lem:0-piece-properties}, each vertex cut contains at least one virtual $0$-king and all virtual $0$-kings are connected.
    We note that at least eight virtual $0$-kings are needed to satisfy these properties.
    This leaves us with at most nine virtual $2$-kings (without inputs).

    If the left signal is $1$ and the upper signal is $0$, the total virtual budget is $19$, and either $18$ kings need to be cleared (for a $0$- or a $1$-signal) or $17$ kings with a remaining budget of $3$ (for a $2$-signal).
    Thus, the best possible output is a $1$-signal.

    If the left signal is $0$ and the upper signal is $1$, we have the same total virtual budget as in the previous case.
    We show that there is no normalized capturing sequence that does not lose budget, and thus, that the gadget outputs $0$ for these inputs.
    For this, we prove that in every normalized capturing sequence, there is a set of virtual $2$-pieces that does not have sufficiently many virtual $0$-pieces in its neighborhood.
    We distinguish three cases, depending on whether king $L$ or king $M$ are virtual $0$-kings.
    Since they form a vertex cut, we know by Lemma~\ref{lem:0-piece-properties} that at least one of them is a virtual $0$-king.

    If king $L$ is not a virtual $0$-king, then kings $I$ and $M$ are virtual $0$-kings.
    For connecting all virtual $0$-kings while using at most eight virtual $0$-kings, king $P$ has to be virtual a $0$-king as well, and kings $N$ $O$ and $Q$ cannot be virtual $0$-kings.
    But then, $\set{N, O, Q, R, S}$ contain four virtual $2$-kings that are only connected to three virtual $0$-kings.
    By Lemma~\ref{lemma:lose-budget}, the clearing sequence loses budget.

    If king $M$ is not a virtual $0$-king, then king $Q$ and king $S$ are not virtual $0$-kings.
    Moreover, not both kings $K$ and $P$ can be virtual $0$-kings since otherwise, it is impossible to connect all virtual $0$-kings with eight virtual $0$-kings.
    If king $K$ is a virtual $0$-king, then $Q$ is a $2$-king without a virtual $0$-kings in its neighborhood.
    If king $P$ is a virtual $0$-king, then $\set{M, Q}$ are two virtual $2$-kings that only have one virtual $0$-king in their neighborhood.
    In all cases, budget is lost.

    If both kings $L$ and $M$ are virtual $0$-kings, then kings $N$ and $P$ are also virtual $0$-kings.
    Moreover, $\set{I, J, K}$ contain at least one virtual $0$-king as a vertex cut.
    But then, $\set{C, D, E, F, G, H}$ contain at least four virtual $2$-kings that only have three virtual $0$-kings in their neighborhood.
    Thus, budget is lost.

    By Observation~\ref{obs:monotony}, the best possible output for $x = 0$ and $y = 0$ is also $0$.

    The highest possible output value is $2$, thus the lemma statement holds trivially for all other inputs.
\end{proof}

\subparagraph*{OUT Gadget} Figure~\ref{fig:king:out} shows the OUT gadget (see Table~\ref{table:crossing} for the OUT-function).
The left input corresponds to the output of the IN-B gadget.
The upper input is signal $\neg \tilde{b}$.
\begin{figure}
    \centering
    \resizebox{0.8\textwidth}{!}{
    \begin{newboard}{7}{5}
        \draw[inputsquare] (1,5) rectangle ++(1,1);
        \draw[inputsquare] (4,1) rectangle ++(1,1);
        \draw[outputsquare] (3,7) rectangle ++(1,1);
        \kingnew[id]{4}{1}{0}
        \kingnew[id]{1}{5}{1}
        \hfuncrot[id]{0}{0}
        \draw[move] (E) -- (A) -- (D);
        \draw[move] (C) -- (D) -- (G);
        \draw[move] (F) -- (G) -- (H);
        \draw[move] (O) -- (L) -- (H);
        \draw[move] (I) -- (H) -- (K);
        \draw[move] (B) -- (M);
        \draw[move] (N) -- (K) -- (M);
        \draw[move] (J) -- (M) -- (P);
    \end{newboard}
        \begin{newboard}{7}{5}
        \draw[inputsquare] (1,5) rectangle ++(1,2);
        \draw[inputsquare] (4,1) rectangle ++(1,1);
        \draw[outputsquare] (3,7) rectangle ++(1,1);
        \kingnew[id]{4}{1}{0}
        \kingnew[id]{1}{5}{1}
        \hfuncrot[id]{0}{0}
        \kingnew[id]{1}{6}{2}
        \draw[move] (E) -- (A) -- (D);
        \draw[move] (C) -- (D) -- (G);
        \draw[move] (F) -- (G) -- (H);
        \draw[move] (O) -- (L) -- (H);
        \draw[move] (I) -- (H) -- (K);
        \draw[move] (B) -- (M);
        \draw[move] (J) -- (K) -- (M);
        \draw[move] (Q) -- (M) -- (P);
        \draw[move] (N) -- (P);
    \end{newboard}
    \begin{newboard}{7}{5}
        \draw[inputsquare] (1,5) rectangle ++(1,1);
        \draw[inputsquare] (4,1) rectangle ++(1,1);
        \draw[outputsquare] (3,7) rectangle ++(1,1);
        \kingnew[id]{4}{1}{1}
        \kingnew[id]{1}{5}{0}
        \hfuncrot[id]{0}{0}
        \draw[move] (A) -- (D);
        \draw[move] (E) -- (D) -- (G);
        \draw[move] (C) -- (G) -- (H);
        \draw[move] (F) -- (H) -- (L);
        \draw[move] (I) -- (L) -- (N);
        \draw[move] (J) -- (B) -- (M);
        \draw[move] (K) -- (M) -- (P);
        \draw[move] (O) -- (N) -- (P);
    \end{newboard}
    }
    \caption{Capturing sequences for the OUT gadget.
    If the upper input is $0$, and the left input is $1$, then $\threeSquareMove{E}{A}{D}, \threeSquareMove{C}{D}{G}, \threeSquareMove{F}{G}{H}, \threeSquareMove{O}{L}{H}, \threeSquareMove{I}{H}{K}, \threeSquareMove{N}{K}{M}, \twoSquareMove{B}{M}, \threeSquareMove{J}{M}{P}$ outputs a $0$-king.
    If the left input is $2$, then $\threeSquareMove{E}{A}{D}, \threeSquareMove{C}{D}{G}, \threeSquareMove{F}{G}{H}, \threeSquareMove{O}{L}{H}, \threeSquareMove{I}{H}{K}, \twoSquareMove{B}{M}, \threeSquareMove{J}{K}{M}, \threeSquareMove{Q}{M}{P}, \twoSquareMove{N}{P}$ outputs a $1$-king.
    If the upper input is $1$, and the left input is $0$, then $\twoSquareMove{A}{D}, \threeSquareMove{E}{D}{G}, \threeSquareMove{C}{G}{H}, \threeSquareMove{F}{H}{L}, \threeSquareMove{I}{L}{N}, \threeSquareMove{O}{N}{P}, \threeSquareMove{J}{B}{M}, \threeSquareMove{K}{M}{P}$ outputs a $0$-king.}
    \label{fig:king:out}
\end{figure}
\kingout*
\begin{proof}
    Figure~\ref{fig:king:out} shows normalized capturing sequences for the relevant combinations with the desired output.
    All other cases directly follow from monotony (Observation~\ref{obs:monotony}).

    We show that for every normalized capturing sequence, there are at least seven virtual $0$-kings (without inputs).
    It is clear that the output square is a virtual $0$-king.
    Moreover, there are five pairwise disjoint vertex cuts $\set{C, D, E}, \set{F, G}, \set{H, I}, \set{J, M}$, and $\set{L, N}$.
    By Lemma~\ref{lem:0-piece-properties}, each vertex cut contains at least one virtual $0$-king and all virtual $0$-kings are connected.
    At least seven virtual $0$-kings are needed to satisfy these properties.
    This leaves us with at most seven virtual $2$-kings (without inputs).

    If it both inputs are $0$, the total virtual budget is $14$, but $15$ kings need to be cleared, which is impossible.

    If both inputs are $1$, the total virtual budget is $16$.
    We show that every normalized capturing sequence with seven virtual $0$-kings loses budget, and thus the best possible output is $0$.
    For this, we prove that in every normalized capturing sequence, there is a set of virtual $2$-pieces that does not have sufficiently many virtual $0$-pieces in its neighborhood.
    We distinguish three cases, depending on whether king $K$ or king $L$ are virtual $0$-kings.
    Since they form a vertex cut, we know by Lemma~\ref{lem:0-piece-properties} that at least one of them is a virtual $0$-king.

    If king $K$ is not a virtual $0$-king, then the kings $L$, $M$ and $N$ must be virtual $0$-kings to connect all regions with seven virtual $0$-kings.
    But then, $\set{C, D, E, F, G, H, I, J, K, O}$ contains the remaining three virtual $0$-kings and seven virtual $2$-kings.
    The virtual $2$-kings only have six virtual $0$-kings (all except the output king) in their neighborhood.

    If king $L$ is not a virtual $0$-king, then the kings $H$, $K$ and $N$ must be virtual $0$-kings to connect all regions with seven virtual $0$-kings.
    Moreover, $\set{J, M}$ contains a virtual $0$-king as a vertex cut.
    But then, $\set{C, D, E, F, G, I, L, O}$ contains the remaining two virtual $0$-kings and six virtual $2$-kings.
    The virtual $2$-kings only have at most four virtual $0$-kings in their neighborhood.

    If both kings $K$ and $L$ are virtual $0$-kings, then both kings $H$ and $M$ are also virtual $0$-kings to connect all regions with seven virtual $0$-kings.
    But then, $\set{C, D, E, F, G, I, O}$ contains the remaining two virtual $0$-kings and five virtual $2$-kings.
    Moreover, king $N$ is not a virtual $0$-kings.
    Thus, the virtual $2$-kings only have four virtual $0$-kings in their neighborhood.

    To conclude, in all cases, budget is lost.
    For the remaining cases where both inputs are less than $2$, the best possible output is also $0$ by Observation~\ref{obs:monotony}.

\end{proof}
\section{Knight 2-\solo}
\label{appendix:knight}

In this section we prove the correctness of the MAX gadget.
\knightMaxCorrect*
We already outlined in Lemma \ref{lem:knight:max:positive} that there are capturing sequences yielding the correct output.
It remains to show that no larger outputs are possible.
To aid the proof, we show an analogue of Corollary \ref{cor:knight:inner-inner}.

\begin{figure}
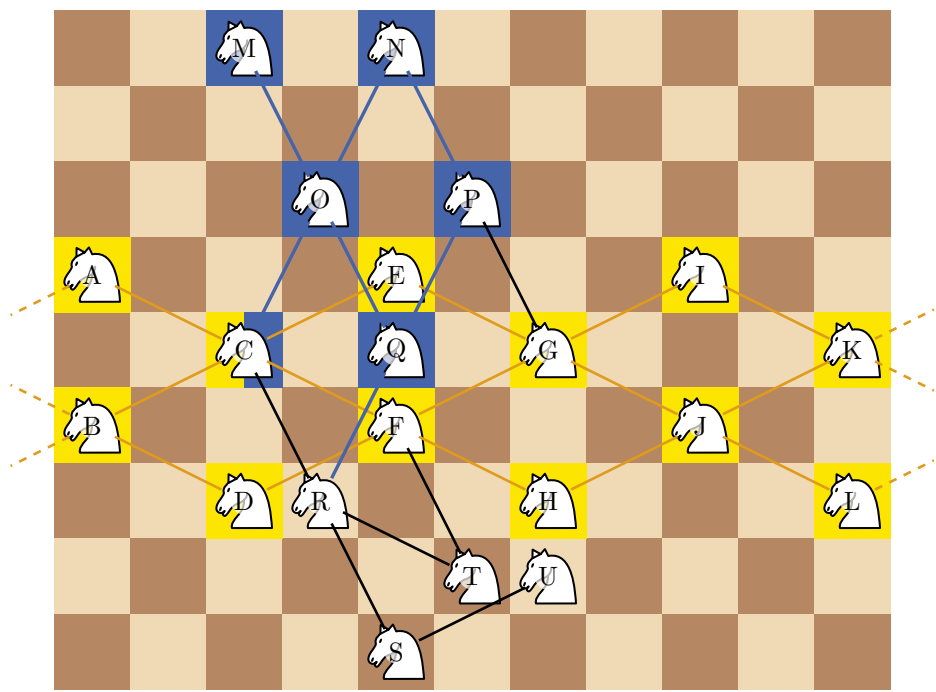

    \centering
    \knightAddition
    \caption{The MAX gadget computes the component-wise maximum of the blue and the orange signal.}
    \label{fig:appendix:knight:max}
\end{figure}

\begin{lemma}
\label{lem:knight:max:orange-wire}
    Let $s$ be a capturing sequence that clears the MAX gadget towards the right.
    Then each inner knight of the orange wire captures the inner knight to its right.
\end{lemma}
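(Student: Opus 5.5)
The plan is to adapt the proof of Corollary~\ref{cor:knight:inner-inner} to the orange wire inside the MAX gadget. The only orange columns that differ from a plain wire are the ones touched by the blue and white knights. The key observation is that the orange wire still decomposes into columns, and every orange column remains a vertex cut of the capture graph of the gadget. The extra blue and white knights attach only near knights $C$ and $R$. Tracing the edges, knights $R$, $S$, $U$ form an antenna hanging off $C$, and the blue wire enters through $O$ into $C$. Hence removing an orange column still separates everything to its left, including the blue input side, from the output on the right.

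First I establish the analogue of Lemma~\ref{lem:knight:wire-virtual} for the orange wire: the virtual $0$-knights among orange knights are exactly the inner orange knights. Each column is a cut, so by Lemma~\ref{lem:0-piece-properties}.\ref{lem:0-piece-properties:vertex-cut} it contains a virtual $0$-knight. Now suppose an outer orange knight is a virtual $0$-knight. I replay the argument of Lemma~\ref{lem:knight:wire-virtual}. By Lemma~\ref{lem:0-piece-properties}.\ref{lem:0-piece-properties:inverse-neighborhood} it has a virtual $2$-neighbor. The cuts through that neighbor then force further virtual $0$-knights. This produces a virtual $0$-knight all of whose neighbors are virtual $0$-knights, which is a contradiction.

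The outer knights far from the blue wire behave exactly as in the plain wire. The delicate ones are those adjacent to $C$ and to the knights $3$ and $T$ that the blue structure touches. There extra neighbors could supply the needed virtual $2$-neighbor. For these I would do a short case check on the neighbors of $C$. I would use that $R$ is reduced to a $0$-knight via the antenna (Observation~\ref{obs:intro:antenna}), and that the only additional neighbors come from the blue wire, which is cleared toward $C$.

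With the virtual $0$-set identified, the corollary argument goes through unchanged. Consider an inner orange knight that is a virtual $0$-knight. It must eventually leave its square toward the final square, which lies to the right, and its path consists of virtual $0$-knights (Lemma~\ref{lem:0-piece-properties}.\ref{lem:0-piece-properties:connected}). So it cannot capture the outer knight to its right. If it captured to the left, or into a blue or white knight, the outer knight of its column would be left as a cut vertex. That outer knight would then be a virtual $0$-knight, contradicting the first step. Therefore it captures the inner knight to its right.

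The main obstacle is the first step at column $C$, since $C$ has additional blue and white neighbors. There I must rule out that $C$ itself moves left into $O$ or $R$. I would first argue that this would strand the outer orange knight of $C$'s column or disconnect the yellow input side from the output.
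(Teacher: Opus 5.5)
\textbf{There is a genuine gap, and it starts with the structural picture your argument rests on.}

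Work out the knight moves in the MAX gadget. The blue/white part is not attached only at $C$. $R$ is adjacent to $C$, $Q$, $S$ and $T$, so the antenna is $(U,S,R)$ and $R$ continues into $Q$ and $T$. $T$ is adjacent to $F$, and $P$ is adjacent to $G$. Consequently the orange column $\{E,F\}$ is \emph{not} a vertex cut: the paths $C$--$O$--$Q$--$P$--$G$ and $C$--$R$--$Q$--$P$--$G$ bypass it. Likewise, removing $\{C,D\}$ does not separate the blue input from the output, contrary to what you claim. Your first step needs a virtual $0$-knight in every orange column and then replays Lemma~\ref{lem:knight:wire-virtual} with two-element cuts. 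That has no support around $C$, $F$, $G$, which is exactly where the lemma has content. The ``short case check'' you defer is in fact the entire proof.

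Your final step fails at those columns too. Suppose the token on $C$ is moved into $O$ or $R$, e.g.\ by $A\rightarrow C\rightarrow O$ or $A\rightarrow C\rightarrow R$.
\begin{itemize}
\item The outer knight $D$ does not become a cut vertex. $B$ has already been emptied, so $D$, if still present, is just a leaf on $F$: neither stranded nor a cut vertex.
\item The left side is already cleared, so there is nothing to disconnect there.
\item The remaining graph stays connected via $O$--$Q$--$P$--$G$ and $Q$--$R$--$T$--$F$.
\end{itemize}
The same holds for $E$ if the token on $F$ were moved into $T$.

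What is missing is an analysis of how the white/blue cluster can still be cleared; this is what the paper does. After $U\rightarrow S\rightarrow R$:
\begin{itemize}
\item If $C$'s token enters $R$, only $T$ and $Q$ remain adjacent to it. $Q\rightarrow R\rightarrow T$ strands a $0$-leaf, so $T\rightarrow R\rightarrow Q$ is forced. Then $P\rightarrow Q\rightarrow O$ empties the whole cut $\{P,Q,C\}$. The alternative $O\rightarrow Q\rightarrow P$ makes $N$ and $P$ cut vertices and would require $N\rightarrow P\rightarrow G$ with $N$ a virtual $0$-knight.
\item If $C$'s token enters $O$, clearing $R$ again forces $T\rightarrow R\rightarrow Q$. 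Afterwards $P$ is a cut vertex, and the $0$-knight on $Q$ has no virtual $2$-neighbour, contradicting Lemma~\ref{lem:0-piece-properties}.\ref{lem:0-piece-properties:inverse-neighborhood}.
\item You must also exclude $G$'s token entering $P$: once $F$ and $G$ have moved, $P$ would be cut off from the right. Similarly, $F$'s token entering $T$ must be excluded.
\end{itemize}
None of this follows from the wire-style cut and cut-vertex arguments you propose.
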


\begin{proof}
    To show the claim, we work our way from the left to the right, column by column.
    After cleaning up the antenna $(U,S,R)$ through captures $\threeSquareMove{U}{S}{R}$, we consider the left-most orange wire column.
    Being a standard wire column, it adheres to Lemma \ref{lem:knight:wire-virtual}, i.e., knight $B$ is a virtual $0$-knight, knight $A$ is a virtual $2$-knight and $B$ captures $C$, as it would become a stranded $0$-leaf after the alternative capture to $D$.
    
    Consider now the possible captures of knight $C$.
    It does not capture knight $E$, as the resulting $0$-leaf would be stranded.
    Assume for contradiction that it captures knight $R$.
    The resulting $0$-knight has two neighbors, knights $T$ and $Q$.
    After possible captures $\threeSquareMove{Q}{R}{T}$, the resulting $0$-leaf is stranded.
    This leaves $\threeSquareMove{T}{R}{Q}$, followed by $0$-knight $Q$ being propagated to $O$ or $P$.
    If $\threeSquareMove{P}{Q}{O}$, the resulting graph is disconnected as each knight of the cut $\{P, Q, C\}$ has already moved.
    After the alternative $\threeSquareMove{O}{Q}{P}$, both virtual $0$-knights $N$ and $P$ are cut vertices and the only way for the graph to stay connected is through captures $\threeSquareMove{N}{P}{G}$, which is impossible due to $N$ being a virtual $0$-knight, a contradiction.
    It follows that $C$ does not capture $R$.
    Next, assume for contradiction that knight $C$ captures knight $O$.
    Note that $0$-knight $R$ has only two neighbors remaining, knights $T$ and $Q$.
    Captures $\threeSquareMove{Q}{R}{T}$ would leave the resulting $0$-leaf stranded, thus, instead captures $\threeSquareMove{T}{R}{Q}$ are played.
    However, in the resulting configuration knight $P$ is a cut vertex and, thus, a virtual $0$-knight, alongside actual $0$-knights $O$ and $Q$.
    Then $0$-knight $Q$ does not have a virtual $2$-knight neighbor, a contradiction.
    It follows that knight $C$ does not capture knight $O$ either.
    The only remaining option is that knight $C$ captures knight $F$, as claimed.

    The remaining wire columns are straightforward.
    Knight $F$ does not capture knights $D$ or $H$, as the resulting $0$-leafs would be stranded, so it captures knight $G$ instead.

    Knight $G$ in turn does not capture knights $E$ or $I$.
    It does not capture knight $P$ either, as this knight $P$ would be disconnected from the right side of the gadget as each of $F$ and $G$ have already moved.
    Thus, knight $G$ captures knight $J$, as claimed, which finishes the proof.
\end{proof}

With these ingredients we turn to the main lemma.

\begin{proof}[Proof of Lemma \ref{lem:knight:max:correct}]
    Lemma \ref{lem:knight:max:positive} gives clearing sequences that yield the correct output.
    It remains to show that there is no clearing sequence yielding a larger output.
    Equivalently, we show that if a component of the orange signal is incremented, then the corresponding component in the blue signal was 1.
    
    Observe that each knight outside the orange wire captures towards an inner knight of the wire.
    This leaves two possibilities:
    A knight with budget less than 2 capturing into the orange wire does not modify the orange signal.
    A knight with budget 2 capturing into the orange wire acts as an Increment gadget of the respective component of the orange signal.
    
    Consider now the case that the first component of the orange signal is incremented.
    Then a dark-squared 2-knight captured an inner knight of the orange wire, which only leaves knight $T$.
    This leaves $0$-knight $R$ (after captures $\threeSquareMove{U}{S}{R}$) with two neighbors, knights $C$ and $Q$.
    By Lemma \ref{lem:knight:max:orange-wire}, knight $C$ captures knight $F$ so knight $R$ necessarily is propagated through moves $\threeSquareMove{Q}{R}{C}$ at some earlier point.
    Next, consider the virtual $0$-knight $N$ of the blue wire.
    It does not capture knight $P$ since without $Q$ being present, the resulting $0$-knight would be a leaf and, thus, stranded.
    Therefore, it captures knight $O$ to become a $0$-knight.
    Knight $O$ is not captured by $Q$ or $C$, thus, it is propagated through captures $\threeSquareMove{M}{O}{C}$.
    It follows that the first component of the blue signal was 1.

    The case for the second component of the orange signal being incremented is similar:
    The two candidates for a second-component Increment are knights $O$ and $P$.
    If either knight captures into the orange wire as a 2-knight, this leaves knight $N$ with just one neighbor below it.
    Thus, if knight $N$ was a $0$-knight, it would be stranded.
    It follows that $N$ is a $1$-knight, which implies that the second component of the blue signal was 1.
\end{proof}

\end{document}